\newtheorem{theorem}{Theorem}
\newtheorem{axiom}{Axiom}
\newtheorem{definition}[axiom]{Definition}
\newtheorem{lemma}[theorem]{Lemma}
\renewcommand{\eqref}[1]{Eq.~(\ref{#1})}
\begin{document}

\title{\huge{DO WE UNDERSTAND QUANTUM MECHANICS  -  FINALLY?}}
\normalsize
\author[*]{J\" urg Fr\" ohlich}
\author[**]{Baptiste Schubnel}
\affil[*]{\textit{\small{Institute for Theoretical Physics, ETH Zurich, CH-8093, Zurich, Switzerland}}}
\affil[**]{\textit{\small{Department of Mathematics, ETH Zurich, CH-8092, Zurich, Switzerland}}}	

\date{February 2012}

\maketitle 
\small
\tableofcontents
\normalsize

\bigskip
\textit{``If someone tells you they understand quantum mechanics then all you've learned is that you've met a liar.''} (R.P. Feynman)\\
\textit{``Anyone who is not shocked by quantum theory has not understood it.''} (N. Bohr)

In these notes we present a short and necessarily rather rudimentary summary of some of our understanding of what kind of a physical theory of Nature Quantum Mechanics is. They have grown out of a lecture the senior author presented at the Schr\"odinger memorial in Vienna, in January 2011. A more detailed and more pedagogical account of our view of quantum mechanics, attempting to close various gaps in the mathematics and physics of these notes, will be published elsewhere. We do not claim to offer any genuinely new or original thoughts on quantum mechanics. However, we have made the experience that there is still a fair amount of confusion about the deeper meaning of this theory -- even among professional physicists. The intention behind these notes (and a more detailed version thereof) is to make a modest contribution towards alleviating some of this confusion.\\
After a short introductory section on the history of Schr\"odinger's wave mechanics, we will sketch a unified view of non-relativistic theories of physical systems comprising both classical and quantum theories. This will enable us to highlight the fundamental conceptual differences between these two classes of theories. Our goal is to sketch what it is that quantum mechanics predicts about the behavior of physical systems when appropriate experiments are made, and in what way it differs radically from classical theories. Incidentally, we hope to convince the reader that Bohr and Feynman may have been a little too pessimistic in their assessment of our understanding of  this wonderful theory.\\
Our main results may be found in Sects. 2, 3.1, 3.2 and 4.

\textit{Acknowledgements.}\\
The senior author has learnt most of what he understands about Quantum Mechanics from Markus Fierz, Klaus Hepp and Res Jost, many years ago when he was a student at ETH Zurich. This did not spare him sufferings through prolonged periods of confusion about the nature of the theory, later on. Apparently, there is no way around thinking about these things and trying to clarify one's thoughts, all by oneself. He is especially grateful to Klaus Hepp and Norbert Straumann for plenty of hints that led him towards some understanding of various elements of the foundations of the theory.
He acknowledges many useful discussions with Peter Pickl and Christian Schilling. He thanks his friends in Vienna for having invited him to present a lecture on Quantum Mechanics at the Schr\"odinger memorial and Jakob Yngvason for insightful comments.\\
 These notes were completed during a stay at the ``Zentrum f\"ur interdisziplin\"are Forschung'' (ZiF) of the University of Bielefeld. The authors thank their colleague and friend Philippe Blanchard and the staff of the ZiF for their very friendly hospitality.\\
\textit{These notes are dedicated to the memory of Ernst Specker (1920-2011), who set a brilliant example of a highly original and inspiring scientist and teacher for all those who had the privilege to have known him. His contributions to Quantum Mechanics will be remembered.}\\

\section{Schr\"{o}dinger and Zurich}
With Heisenberg and Dirac, \textit{Schr\"odinger} is one of the fathers of (non-relativistic) quantum mechanics in its final form. Everyone has heard about his wave mechanics and about the Schr\"odinger Equation.
He made his most important discoveries during a time when he held a professorship at the University of Zurich. With Berlin, Berne, G\"ottingen and Cambridge (UK), Zurich was one of the birth places of the new theories of 20th Century Physics. It may thus be appropriate to begin with a short summary of some important facts about ``Schr\"odinger and Zurich''. The sources underlying our summary are \cite{Uhlmann}, \cite{Thirring}.\\
\textit{Erwin (Rudolf Josef Alexander) Schr\"odinger} was born in Vienna, on August 12, 1887. His father was catholic, his mother a Lutheran. Her mother's mother was English. German and English were spoken at home. Erwin started to study Mathematics and Physics at the University of Vienna in 1906. After only four years, he was promoted to \textit{Dr. phil.}, in 1910. Among his teachers were \textit{Franz S. Exner} and $Friedrich$  $Hasen$\"o$hrl$. The latter was killed in 1915, during World War I. Through Hasen\"{o}hrl's influence, 
Schr\"odinger liked to think of himself as a student of \textit{Ludwig Boltzmann}. He recognized Hasen\"ohrl's scientific importance and influence on his own work in his Nobel lecture. In \textit{``Mein Leben, meine Weltsicht''}, Schr\"odinger writes: \textit{``Ich m\"ochte nicht den Eindruck hinterlassen, mich h\"atte nur die Wissenschaft interessiert. Tats\"achlich war es mein fr\"uher Wunsch, Poet zu sein. Aber ich bemerkte bald, dass Poesie kein Geld einbringt. Die Wissenschaft dagegen offerierte mir eine Karriere.''}\\
Glancing through Schr\"odinger's early work, one notices his talent for language and his pragmatism in choosing seemingly promising research topics. One also encounters many signs of his excellent mathematical education and his talent for mathematical reasoning. No wonder science offered him a rather smooth career. However, before his appointment as ``Ordinarius f\"ur Theoretische Physik'' at the University of Zurich, in the fall of 1921, to the chair previously held by $Einstein$ and $von$ $Laue$, there were only few signs of his extraordinary genius. Schr\"odinger's years in Zurich constitute, undoubtedly, the most creative period 
in his life.\\
His first important paper, which concerned an application to quantum theory of \textit{Weyl's} idea of the electromagnetic field as a gauge field, was submitted for publication in 1922. His epochal papers on wave mechanics only followed a little more than three years later. The first one, ``Quantisierung als Eigenwertproblem (Erste Mitteilung)'', was submitted for publication on January 27, 1926; the second one (same title - Zweite Mitteilung) on February 23, the third one (classical limit of wave mechanics) shortly thereafter, the fourth one (equivalence of wave- and matrix mechanics) on March 18, the fifth one (Dritte Mitteilung - perturbation theory and applications) on May 10, the sixth one (Vierte Mitteilung - time-dependent Schr\"odinger equation, time-dependent perturbation theory) on June 21, the seventh one -- a summary of his new wave mechanics published in the Physical Review -- on September 3, the eighth one (Compton effect) on December 10 -- all during the year of 1926. In November 1926, he completes his ``Vorwort zur ersten Auflage'' of his \textit{``Abhandlungen zur Wellenmechanik''}.\\
The intensity of Schr\"odinger's scientific creativity and productivity, during that one year, may well be without parallel in the history of the Natural Sciences, with the possible exception of $Einstein's$ ``annus mirabilis'', 1905. Schr\"odinger discovers all the right equations, all the right concepts and all the right mathematical formalism. Mathematics-wise, he is well ahead of  his competitors, except for $Wolfgang$ $Pauli$. He talks about linear operators, introduces Hilbert space into his theory, addresses and solves many of the pressing concrete problems of the new quantum mechanics -- and somehow misses its basic message. He is haunted by the philosophical prejudice that physical theory has to provide a realistic description of Nature that talks about what happens, rather than merely about what $might$ happen. His goal is to find a description of phenomena in the microcosmos in the form of a classical relativistic wave-field theory somewhat analogous to Maxwell's theory of the electromagnetic field -- of course without succeeding. In spite of his philosophical prejudices, he is to unravel some of the most important concepts typical of the new theory, such as entanglement and decoherence, and to arrive at all the right conclusions -- apparently without ever feeling comfortable with his own discoveries.\\
On  the first of October of 1927, Schr\"odinger was appointed as the successor of $Max$ $Planck$ in Berlin. In 1933, he shares the Nobel Prize in Physics with $Paul$ $(Adrien$ $Maurice)$ $Dirac$. In 1935, in the middle of the turmoil around his emigration from Nazi Germany, he conceives his famous ``Schr\"odinger's cat'' Gedanken experiment, which introduces the idea of decoherence.\\
Coming from Zurich, we feel we should ask why this city was the right place where Schr\"odinger could make his epochal discoveries. Berne and Zurich were the cities where Einstein had made his most essential discoveries in quantum theory. Thanks to the presence of many famous refugees of World War I, Zurich was a rather cosmopolitan city with a liberal spirit. The scientific atmosphere created by Einstein, von Laue, $Debye$, Weyl and others must have been fertile for discoveries in quantum theory. In his work on wave mechanics, Schr\"odinger was, according to his own testimony, much influenced by Einstein's work on ideal Bose gases (1924/25) and \textit{de Broglie's} work on matter waves. It is reported that Debye directed Schr\"odinger towards de Broglie's work and suggested to him to look for a wave equation describing matter waves. In Arosa, where Schr\"odinger repeatedly spent time to cure himself from tuberculosis, he apparently discovered the right equation. Rumor has it that he pursued his ideas at the ``Dolder Wellenbad'', a swimming pool above Zurich with artificial waves (and pretty women sun bathing on the lawn).\\
It is appropriate to mention the role Weyl played as Schr\"odinger's mathematical mentor, during their Zurich years. It was Weyl who apparently explained to Schr\"odinger that his time-independent wave equation represented an eigenvalue problem and directed him to the right mathematical literature. Schr\"odinger acknowledges this in his first paper on wave mechanics. Apart from his superb knowledge of mathematics, Weyl was intensely familiar with modern theoretical physics including quantum theory. He had prescient ideas on some of the radical implications of quantum theory (such as its intrinsically statistical nature and problems surrounding the notion of an ``event'' in the quantum world), quite some time before matrix- and wave mechanics were discovered. Weyl was Schr\"odinger's senior by only two years. They were close friends. It is thus plausible that Weyl played a rather important role in the development of Schr\"odinger's thinking. Their relationship is a model for the fruitfulness of interactions between mathematicians and theoretical physicists.\\
While the physical arguments that led $Heisenberg$ and, following his lead, Dirac to the discovery of quantum mechanics (in the form of matrix mechanics and transformation theory) appear to us as relevant and fresh as ever, Schr\"odinger's formal arguments based on an analogy with optics,
\begin{equation}
\nonumber
\text{geometrical optics}:\text{wave optics} \sim \text{Hamiltonian mechanics}:\text{wave mechanics},
\end{equation}
may nowadays seem to be of mainly historical interest. Although they initially misled him to an erroneous interpretation of wave mechanics, they made him discover very powerful mathematical methods from the theory of partial differential equations and of eigenvalue problems that his competitors did not immediately recognize behind their more abstract formulation of the theory.

But it is time to leave science history and proceed to somewhat more technical matters.

\section{What is a physical system, mathematically speaking?}

In this section, we outline a mathematical formalism suitable for a unified description of classical and quantum-mechanical theories of physical systems.  It is most conveniently formulated in the language of operator algebras; see, e.g., \cite{TA} or \cite{BR}.
We suppose that there is an observer, $\mathcal{O}$, who studies a physical system, $S$. To gather information on $S$,  $\mathcal{O}$ performs series of experiments designed to measure various physical quantities pertaining to $S$, such as  positions, momenta or spins of some particles belonging to $S$. 
No matter whether we speak of classical or quantum-mechanical theories of physical systems, physical quantities are always represented, mathematically, by (bounded) linear operators. For classical systems, they correspond to real-valued functions on a space of pure states (phase space, in the case of Hamiltonian systems) acting as multiplication operators on a space of half-densities over the space of pure states; for quantum-mechanical systems with finitely many degrees of freedom, they correspond to (non-commuting) selfadjoint linear operators acting on a separable Hilbert space. \\
In these notes, we study non-relativistic theories of physical systems, i.e., we assume that signals can be transmitted arbitrarily fast. It is then reasonable, in either case, to imagine that the physical quantities pertaining to $S$ generate some $^{*}$-algebra of operators,  denoted by $\mathcal{A}_{S}$, that does not depend on the observer $\mathcal{O}$. (In contrast, in general relativistic theories, the algebra of physical quantities not only depends on the choice of a physical system but will also depend, in general, on the observer.) \\ 
Given a physical quantity represented by a selfadjoint operator $a \in \mathcal{A}_{S}$, a measurement of $a$ results in an $``event"$ \cite{Haag} corresponding to some measured value of $a$. Since measurements have only a finite precision, one may associate with every such event a real interval, $I$, describing a range of possible outcomes in a particular measurement of $a$. It is natural to associate to this possible event the corresponding \textit{spectral projection}, $P_{a}(I)$, associated with the selfadjoint operator $a$ and the interval $I\subset\mathbb{R}$ via the spectral theorem. 
Thus, spectral projections associated with selfadjoint operators in $\mathcal{A}_{S}$ corresponding to physical quantities of $S$ represent \textit{possible events} in $S$. 
\\Generally speaking, one may argue that there are events happening in $S$ that are not necessarily triggered by an actual measurement (undertaken by an observer using experimental equipment) of some physical quantity represented by an operator in the algebra $\mathcal{A}_{S}$ but rather by interactions of $S$ with its environment. It is plausible to assume that $any$ possible event in $S$ can be represented by an orthogonal projection, $P$, (or, more generally, by a positive operator-valued measure). The operator $P$ is a mathematical representation of the acquisition of information about $S$; it does not represent a physical process. It is assumed that all possible events in $S$ generate a $C^{*}$- or a von Neumann algebra, henceforth denoted by $\mathcal{B}_{S}$. For simplicity, it will always be assumed that the algebra $\mathcal{B}_{S}$ contains an identity operator. The algebra $\mathcal{A}_{S}$ is contained in or equal to the algebra $\mathcal{B}_{S}$. (In these notes, we will be somewhat sloppy about the right choices of these algebras. At various places, this will undermine their mathematical precision. Things will be rectified in a forthcoming essay.) \\
$States$ of the system $S$ are identified with states on the algebra  $\mathcal{B}_{S}$, i.e., with normalized, positive linear functionals on $\mathcal{B}_{S}$.\\ 
The purpose of a theory of a physical system $S$ is to enable theorists to predict the probabilities of (time-ordered) sequences of possible events in $S$ -- $``histories"$ of $S$ -- to actually happen when $S$ is coupled to another system, $E$, needed to carry out appropriate experiments, given that they know the state of the system corresponding to the composition of $S$ with $E$. We emphasize that $E$ is treated as a physical system, too, and that it plays an important role in associating ``facts'' with ``possible events'' in $S$; (this being related to the mechanisms of ``dephasing'' and ``decoherence''). Generally speaking, $E$ can either correspond to some \textit{experimental equipment} used to observe $S$, or to some \textit{environment} $S$ is coupled to. We think of $E$ as ``experimental equipment'' if the initial state of $E$ and its dynamics can be assumed to be controlled, to some extent, by  an observer $\mathcal{O}$, (an experimentalist who can turn various knobs and tune various parameters). If the state and the dynamics of $E$ are beyond the control of any observer we think of $E$ as  ``environment''. Of course, the distinction is usually not sharp. It is important to understand why probabilities of histories of $S$ do not sensitively depend on precise knowledge of the state and the dynamics of $E$. (See Sect. 3.3 for a result going in this direction. We plan to return to these matters elsewhere.)\\
A $``realistic"$ (or $``deterministic"$) theory of a physical system $S$ is characterized by the properties that any possible event in $S$ has a complement, in the sense that $either$ the event $or$ its complement $will$ happen, and that if the state of $S\vee{E}$ is $pure$ the probability of a possible history of $S$ is either $=0$ (meaning that it will never be observed) or $=1$ (meaning that it will be observed with certainty). We say that pure states of a system described by a realistic theory give rise to \textit{``0--1 laws''} for the probabilities of its histories. 
In contrast, a \textit{``quantum theory''} is characterized by the properties that, in general, there may be ``interferences'' between a possible event and its complement -- meaning that they do  not mutually exclude each other -- and that there are pure states that predict strictly positive probabilities that are strictly smaller than 1 for certain histories. A quantum theory is therefore intrinsically $non{-}deterministic$.
These remarks make it clear that a crucial point to be clarified is how one can $prepare$ a physical system in a specific state (pure or mixed, in case the state is only partly specified) of interest in an experiment that theorists want to make predictions on. This point has been studied, in a rather satisfactory way, for a respectable class of quantum theories. However, the relevant results and the methods used to derive them go beyond the scope of this review. They will be treated in a forthcoming paper; (but see, e.g., \cite{FGrS}, \cite{DeRoeck}).\\
We hope that the meaning of the notions and remarks just presented will become clear in the following discussion.\\
John von Neumann initiated the creation of the theory of operator algebras, in order to have a convenient and precise mathematical language to think and talk about quantum physics and to clarify various mathematical aspects of the theory. There is simply no reason not to profit from his creation -- no apologies!
Thus, as announced, above, we will consider the rather vast class of theories of physical systems that can be formulated in the language of operator algebras. Such theories are further characterized by specifying the following data.

\begin{definition}

Mathematical data characterizing a theory of a physical system $S$
\end{definition}

\begin{enumerate}[(I)]
 \item A $C^{*}$-algebra, $\mathcal{B}_{S}$, generated by ``all'' possible events in $S$, containing the $^{*}$-algebra $\mathcal{A}_{S} \subseteq \mathcal{B}_{S}$ generated by physical quantities pertaining to $S$.
\item The  convex set of states, $\mathcal{S}_{S}$, on the algebra $\mathcal{B}_{S}$.
\item A group of $symmetries$, $\mathcal{G}_{S}$, of $S$, including time evolution. Elements $g \in \mathcal{G}_{S}$ are assumed to act as $^{*}$-automorphisms, $\alpha_{g}$, on $\mathcal{B}_{S}$. The group of all $^{*}$-automorphisms of $\mathcal{B}_{S}$ is denoted by $Aut(\mathcal{B}_{S})$.\\
We remark that the algebra $\mathcal{B}_{S}$ and the group of symmetries $\mathcal{G}_{S}$ depend on the environment $S$ is coupled to.
\item \textit{Subsystems}: $S$ is a subsystem of $S'$, $S \subset S'$, iff $\mathcal{B}_{S} \subset \mathcal{B}_{S'}$.\\ 
\textit{Composition of systems}: If $S$,$S'$ are two systems and $\bar{S}=S \vee S'$ denotes their composition then $\mathcal{B}_{\bar{S}} \equiv \mathcal{B}_{S \vee S' }=\mathcal{B}_{S} \otimes \mathcal{B}_{S'}$. 
\\If $S \simeq S'$ then
one must specify an embedding of the state space $\mathcal{S}_{S\vee S} \subseteq \mathcal{S}_{S}  \otimes \mathcal{S}_{S}$. This is the issue of $statistics$, which plays a crucial role in quantum mechanics; (Fermi-Dirac-, Bose-Einstein-, or fractional statistics).

\end{enumerate}

 The choice of (I) and (III) depends on the experimental equipment available to observers exploring the system $S$. To illustrate this point, think of the solar system, $S_{\odot}$. An astronomer in the times of Tycho Brahe would have chosen much fewer physical quantities to describe possible events in $S_{\odot}$ than a modern astrophysicist equipped with the latest instruments. This results in drastically different choices of  algebras $\mathcal{A}_{S_{\odot}}$ and $\mathcal{B}_{S_{\odot}}$, in spite of the fact that the actual physical system remains the same; (Tycho Brahe would obviously have chosen much smaller such algebras than a contemporary astrophysicist, and, thus, their theoretical descriptions of the solar system would drastically differ from one another). Well, our theories of physical systems are but images of such systems inside some mathematical structure. These images are never given by ``isomorphisms''; they are more or less coarse-grained (depending on the experimental equipment  and the precision of the data available to us), and our choice of mathematical structures as screens for images of physical systems need not be unique.
 
It is remarkable that  the new  physical theories of the 20th Century appear to arise from older (precursor) theories by $``deformation$'' of the structures (I), (III) and (IV).  For example, quantum mechanics can be obtained from classical Hamiltonian mechanics by deforming the algebra $\mathcal{A}_{S}$ from a commutative, associative algebra to a non-commutative, associative one - $``quantization"$ - (theory of deformations of associative algebras), the deformation parameter corresponding to Planck's constant $\hbar$. One can view atomistic theories of matter as arising from (Hamiltonian) theories of continuous media by a deformation of the algebra $\mathcal{A}_{S}$, the deformation parameter corresponding (roughly speaking) to the inverse of Avogadro's number $N_A$. By deforming the Galilei symmetry of non-relativistic systems one is led to the Poincar\'e symmetry of (special) relativistic systems; the deformation parameter is the inverse of the speed of light $c$. This is an example of a deformation of (III) (deformation of Lie groups and -algebras) leading to new physical theories. Fractional statistics -- a form of quantum statistics encountered in certain two-dimensional systems, in particular, in 2D electron gases exhibiting the fractional quantum Hall effect -- which was overlooked by the pioneers of quantum theory, can arise as a deformation of ordinary Bose-Einstein or Fermi-Dirac statistics (deformation of braided tensor categories).\\
The ``deformation point of view'' alluded to here was originally proposed by Moshe Flato \cite{Flato} and taken up by Ludwig Faddeev. Some elements of it are sketched in Section 3.4; (see also \cite{FRKP}, \cite{FRKP'} and refs. given there).

\section{Realistic theories versus quantum theories}
In this section, we introduce two distinct classes of physical theories. A physical theory is called $\it``realistic$'' if the algebra $\mathcal{B}_{S}$  is abelian (commutative). It is called $\it``quantum$'' if $\mathcal{B}_{S}$ is non-abelian (non-commutative). We will see that there is an intimate connection between the commutativity of $\mathcal{B}_{S}$  and determinism - determinism necessarily fails if  $\mathcal{B}_{S}$ is non-commutative.

\subsection{Realistic theories}

In this section, we summarize some of the most important features of realistic (or deterministic) theories.

\subsubsection{Characterisation of $\mathcal{B}_{S}$ and  $\mathcal{S}_{S}$ in realistic theories}

\textit{Realistic theories} of physical systems are theories with an $abelian$ algebra $\mathcal{B}_{S}$ of possible events. Important examples of realistic theories are Hamiltonian systems. For such systems, the algebra of possible events $\mathcal{B}_{S}$ is given by the algebra of bounded (continuous or measurable) functions on the phase space, $\Gamma_{S}$, (some symplectic manifold) of the system $S$ composed with the environment it is interacting with, and the algebra $\mathcal{A}_{S}$ is some subalgebra contained in or equal to $\mathcal{B}_S$. Phase space $\Gamma_{S}$ is equipped with a symplectic form, $\sigma_S$, (a closed, non-degenerate 2-form on $\Gamma_{S}$), which gives rise to a Poisson bracket, $\lbrace f,g \rbrace=\sigma_S (X_f,X_g)$, on $\mathcal{B}_{S}$, with $X_f$ denoting the Hamiltonian vector field corresponding to the function $f \in \mathcal{B}_{S}$. This furnishes $\mathcal{B}_{S}$ with the structure of a Lie algebra. States on $\mathcal{B}_{S}$ are given by probability measures on $\Gamma_{S}$, pure states are given by Dirac measures ($\delta$ - functions). 

In this section, we wish to consider general realistic theories. Let $\mathcal{B}_{S}$ be the abelian $C^{*}$-algebra of possible events of a  realistic theory of a physical system $S$.  We denote the set of non-zero homomorphisms of $\mathcal{B}_{S}$ into $\mathbb{C}$ by $M_S$, called the $spectrum$ of $\mathcal{B}_{S}$.  One can prove that $M_S$ is locally compact  in the $\sigma(\mathcal{B}_{S}^{*}, \mathcal{B}_{S})$- topology, and that it is a compact Hausdorff space if $\mathcal{B}_{S}$ is unital, i.e., contains an identity, $\mathbb{I}$. It is appropriate to recall a famous theorem due to I.M. Gel'fand.
\begin{theorem}
If $\mathcal{B}_{S}$ is an abelian $C^{*}$-algebra then it  is isometrically isomorphic to the abelian  $C^{*}$-algebra, $\mathcal{C}_{0}(M_S)$, of all continuous functions on $M_S$ vanishing at infinity.
\end{theorem}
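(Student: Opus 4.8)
The plan is to construct the \emph{Gel'fand transform} explicitly and verify that it furnishes the desired isometric $^*$-isomorphism. For each $a \in \mathcal{B}_S$ define a function $\hat{a} : M_S \to \mathbb{C}$ by evaluation, $\hat{a}(\varphi) := \varphi(a)$, and set $\Gamma(a) := \hat{a}$. Because $M_S$ carries the $\sigma(\mathcal{B}_S^{*}, \mathcal{B}_S)$-topology, each $\hat{a}$ is continuous by construction, and a short argument using the one-point compactification of $M_S$ shows that $\hat{a}$ vanishes at infinity, so $\Gamma$ maps $\mathcal{B}_S$ into $\mathcal{C}_0(M_S)$. That $\Gamma$ is linear and multiplicative is immediate, since every $\varphi \in M_S$ is an algebra homomorphism. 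The one property that genuinely invokes the $C^{*}$-structure is that $\Gamma$ respects the involution, i.e. $\widehat{a^{*}} = \overline{\hat{a}}$; this reduces to showing that every character is \emph{self-adjoint}, $\varphi(a^{*}) = \overline{\varphi(a)}$, which I would deduce from the standard spectral fact that a self-adjoint element of a $C^{*}$-algebra has real spectrum, together with the observation that $\varphi(a)$ always lies in $\sigma(a)$.

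Next I would prove that $\Gamma$ is \emph{isometric}. Two ingredients enter. First, $\|\hat{a}\|_\infty = r(a)$, where $r$ denotes the spectral radius, which holds because the range of $\hat{a}$ coincides with $\sigma(a)$ up to the point $0$ in the non-unital case. Second, the $C^{*}$-identity $\|b\|^{2} = \|b^{*}b\|$, which, iterated on a self-adjoint element, yields $\|b^{2^{n}}\| = \|b\|^{2^{n}}$ and hence $\|b\| = r(b)$ for self-adjoint $b$. Since $\mathcal{B}_S$ is abelian, every element is normal, so I can combine these into
\[
\|a\|^{2} = \|a^{*}a\| = r(a^{*}a) = \|\widehat{a^{*}a}\|_\infty = \big\| \,|\hat{a}|^{2}\,\big\|_\infty = \|\hat{a}\|_\infty^{2},
\]
whence $\|\Gamma(a)\| = \|a\|$. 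In particular $\Gamma$ is injective with closed range, because an isometric image of a complete space is itself complete.

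It remains to establish \emph{surjectivity}. The image $\Gamma(\mathcal{B}_S)$ is a closed $^{*}$-subalgebra of $\mathcal{C}_0(M_S)$ that separates the points of $M_S$ (distinct characters differ on some element of $\mathcal{B}_S$) and vanishes nowhere (each nonzero $\varphi$ takes a nonzero value on some $a$). By the Stone--Weierstrass theorem in its form for algebras of functions vanishing at infinity on a locally compact Hausdorff space, such a subalgebra is dense; being already closed, it exhausts $\mathcal{C}_0(M_S)$. Combining the three steps, $\Gamma$ is a bijective isometric $^{*}$-homomorphism, which is precisely the assertion.

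The step I expect to be the main obstacle is the bookkeeping forced by the \emph{non-unital} case. Both the claim that $\hat{a} \in \mathcal{C}_0(M_S)$ and the identification $\sigma(a) = \{\varphi(a) : \varphi \in M_S\} \cup \{0\}$ require passing to the unitization $\widetilde{\mathcal{B}_S}$ (equivalently, to the one-point compactification $M_S \cup \{\infty\}$) and tracking which characters extend and where the adjoined character sends $a$. The correct version of Stone--Weierstrass for $\mathcal{C}_0$ is likewise where the argument is most delicate. If one assumes $\mathcal{B}_S$ unital from the outset, as the excerpt's remark on the compactness of $M_S$ suggests, these difficulties evaporate and the remaining argument is essentially routine.
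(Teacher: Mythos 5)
Your proposal is correct: it is the standard proof of the commutative Gel'fand--Naimark theorem, and each step (weak-$*$ continuity of $\hat{a}$, self-adjointness of characters via real spectra, isometry from the $C^{*}$-identity and the spectral radius formula, surjectivity via Stone--Weierstrass for $\mathcal{C}_0$ of a locally compact Hausdorff space) is sound, including your caveats about the non-unital bookkeeping. Note that the paper itself gives no proof --- it cites the result as Gel'fand's classical theorem, deferring to the operator-algebra literature, and only identifies the isomorphism as the Gel'fand transform $\hat{b}(\omega)=\omega(b)$, which is exactly the map you construct, so your argument simply supplies the details the paper leaves to its references.
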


This isomorphism is given by the Gel'fand transform, $\hat{(\cdot)}$, that assigns to each $b \in \mathcal{B}_{S}$, the function $\hat{b}$ acting on $M_S$ by $\hat{b}(\omega)=\omega(b)$, for all $\omega \in M_S$. To say that $b$ ``vanishes at infinity'' means that, for all $\epsilon>0$, the set $\lbrace \omega \mid  \hat{b}(\omega) \geq \epsilon \rbrace$ is compact in the $\sigma(\mathcal{B}_{S}^{*}, \mathcal{B}_{S})$- topology. We remark that the properties of $M_S$ crucially depend on the precise choice of the algebra $\mathcal{B}_{S}$, and this fact would require more attention than it is given in these notes.

The set of states of realistic theories can be characterized using well known results from measure theory. In particular, the following theorem due to Riesz and Markov is relevant. (We will assume that the algebra $\mathcal{B}_{S}$ is unital, hence $M_S$ is compact.)
\begin{theorem}
Let M be a compact measure space. Then every positive linear functional, $\omega$, on $\mathcal{C}_{0}(M)$ is given by a unique Baire measure, $\mu_{\omega}$, on  $M$, with $\omega(f)=\int_{M} f d\mu_{\omega}$.
\end{theorem}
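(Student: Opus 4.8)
The plan is to construct the measure $\mu_\omega$ directly from the functional $\omega$ by an outer-measure argument, and then to prove the integral representation by a Riemann-sum approximation resting on Urysohn's lemma. Since $\mathcal{B}_{S}$ is assumed unital, the Gel'fand picture makes $M=M_S$ a compact Hausdorff space, so $\mathcal{C}_{0}(M)=\mathcal{C}(M)$ and every positive linear functional is automatically bounded, with $\|\omega\|=\omega(\mathbb{I})<\infty$ (from $-\|f\|_\infty\,\mathbb{I}\le f\le\|f\|_\infty\,\mathbb{I}$ and positivity). First I would define, for each open $U\subseteq M$,
\[
\mu_\omega(U)=\sup\{\,\omega(f)\;:\;f\in\mathcal{C}(M),\ 0\le f\le 1,\ \mathrm{supp}(f)\subseteq U\,\},
\]
and extend to arbitrary $E\subseteq M$ by outer regularity, $\mu_\omega^{*}(E)=\inf\{\mu_\omega(U):E\subseteq U,\ U\text{ open}\}$. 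Note that $\mu_\omega^{*}(M)=\omega(\mathbb{I})$, since $f=\mathbb{I}$ is admissible.

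The next step is to verify that $\mu_\omega^{*}$ is a genuine Borel (hence Baire) measure. Monotonicity is immediate; countable subadditivity on open sets follows from a subordinate partition-of-unity argument, using the positivity of $\omega$. One then checks that every open set obeys the Carath\'eodory splitting condition, so by Carath\'eodory's theorem the $\sigma$-algebra of $\mu_\omega^{*}$-measurable sets contains the Borel sets; restricting there yields a finite, outer-regular Borel measure $\mu_\omega$, whose restriction to the $\sigma$-algebra generated by $\mathcal{C}(M)$ is the required Baire measure (the approximating open sets may be taken to be cozero sets of continuous functions).

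The heart of the proof is the identity $\omega(f)=\int_{M}f\,d\mu_\omega$ for every $f\in\mathcal{C}(M)$; by linearity it suffices to treat $0\le f\le 1$. Fixing $\epsilon>0$ and a partition $0=y_0<y_1<\dots<y_n=1$ of mesh $<\epsilon$, I set $E_k=\{x:y_{k-1}<f(x)\le y_k\}$ and, using continuity and outer regularity, choose open $U_k\supseteq E_k$ with $f<y_k+\epsilon$ on $U_k$ and $\mu_\omega(U_k)<\mu_\omega(E_k)+\epsilon/n$. Since the $U_k$ cover the compact space $M$, Urysohn's lemma (valid because compact Hausdorff spaces are normal) furnishes a partition of unity $\{h_k\}$ with $h_k\prec U_k$ (i.e.\ $0\le h_k\le 1$ and $\mathrm{supp}(h_k)\subseteq U_k$) and $\sum_k h_k=\mathbb{I}$. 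Writing $f=\sum_k f h_k$ and using $f h_k\le(y_k+\epsilon)h_k$ together with $\omega(h_k)\le\mu_\omega(U_k)$ gives
\[
\omega(f)\le\sum_k(y_k+\epsilon)\,\mu_\omega(U_k)\le\int_{M}f\,d\mu_\omega+C\epsilon,
\]
with $C$ depending only on $\mu_\omega(M)$. Applying the same bound to $\mathbb{I}-f$ and using $\mu_\omega(M)=\omega(\mathbb{I})$ yields the reverse inequality; letting $\epsilon\to 0$ proves the identity.

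Finally, uniqueness follows from regularity: if $\mu_1$ and $\mu_2$ are Baire measures both representing $\omega$, then for every open $U$ and every $f\prec U$ we have $\int f\,d\mu_1=\int f\,d\mu_2$; taking the supremum over such $f$ --- again invoking Urysohn to approximate indicators of compact subsets of $U$ from within --- forces $\mu_1(U)=\mu_2(U)$, and outer regularity extends this to all Baire sets. The main obstacle is the approximation step of the third paragraph: one must arrange the partition of unity and the choice of the $U_k$ so that the overlaps of their supports, controlled through $\sum_k h_k=\mathbb{I}$ and the slack $\epsilon/n$, do not destroy the sandwich between the upper and lower Riemann sums. It is precisely here that the positivity of $\omega$ is indispensable.
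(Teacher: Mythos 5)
The paper does not actually prove this statement: it quotes it as the classical Riesz--Markov theorem (``well known results from measure theory'') and moves straight on to the remark about extending the Baire measure to a regular Borel measure. So your proposal cannot be compared with an argument in the paper; what you have written is, in essence, the standard textbook proof of the Riesz--Markov(--Kakutani) representation theorem: define $\mu_\omega$ on open sets as a supremum of $\omega(f)$ over subordinate continuous $f$, extend by outer regularity, verify the Carath\'eodory splitting condition to obtain a finite Borel (hence Baire) measure, and establish the integral identity by a partition-of-unity/Riemann-sum sandwich. That plan is correct, and your device of getting the lower bound by applying the upper estimate to $\mathbb{I}-f$ together with $\mu_\omega(M)=\omega(\mathbb{I})$ is a legitimate substitute for the usual ``apply the estimate to $-f$'' step. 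What this buys, relative to the paper, is an actually self-contained justification of the characterization of states as probability measures; what the paper's route buys is brevity, since the theorem is being used purely as a structural input.

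Two points in your write-up deserve care. First, the deferred verifications (countable subadditivity of $\mu_\omega^{*}$ and the Carath\'eodory condition for open sets) are where most of the real work lies; as written they are announced rather than carried out, though they are standard. Second, your uniqueness step is phrased in Borel terms: compact subsets of an open set need not be Baire sets, so ``approximating indicators of compact subsets of $U$ from within'' is not quite the right move on the Baire $\sigma$-algebra. A clean repair: for a cozero set $U=\lbrace g>0\rbrace$ take $f_n=\min(ng,1)$, so that $f_n\uparrow \mathbf{1}_U$ pointwise; monotone convergence gives $\mu_1(U)=\lim_n\omega(f_n)=\mu_2(U)$, and since cozero sets form a $\pi$-system generating the Baire $\sigma$-algebra, a $\pi$-$\lambda$ (Dynkin) argument extends the equality to all Baire sets. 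With that adjustment your proof is complete and supplies exactly what the paper delegates to the literature.
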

$Remark.$ The measure $\mu_{\omega}$ can be uniquely extended to a regular finite Borel measure. If we restrict physical states to normalized states (norm unity) the corresponding Borel measure is a \textit{probability  measure}, because $\vert \vert \omega \vert \vert =\omega(1)=1$. 

Let us look at the family of pure states of a realistic theory. We recall  that a pure state is an extremal element of the convex set of states $\mathcal{S}_{S}$, i.e., $\omega$ is pure iff it cannot  be written in the form $\omega=\lambda \omega_{1} + (1-\lambda) \omega_{2}$, where $0< \lambda <1$, and $\omega_{1} \neq \omega_{2} $ belong to $\mathcal{S}_{S}$. On an abelian  $C^{*}$-algebra, every pure state $\omega$ is multiplicative. Consequently,  
$$0=\omega(f- \omega(f)) \omega(f^*- \omega(f^*))=\omega(\mid f-\omega(f) \mid^{2})=\int_{M_S}  \mid f-\omega(f) \mid^{2} d\mu_{\omega}$$
i.e.,  every $f \in \mathcal{C}_{0}(M_S)$ is $\mu_{\omega}$- almost everywhere constant, which implies that $\mu_{\omega}=\delta_{x}$, for some point $x \in M_{S}$. Pure states are thus Dirac $\delta$- measures on $M_{S}$, and the map $x \rightarrow \delta_{x}$ is a homeomorphism, because $M_{S}$ is completely regular. It follows immediately that there is no \textit{superposition principle} within the set of pure states of realistic theories, because any linear combination of the half-densities corresponding to two distinct pure states (given by $\delta$- functions with disjoint supports) is the half-density of a $mixed$ state. 

We have thus identified some typical features of realistic theories: $\mathcal{B}_{S}$ is of the form $\mathcal{B}_{S}=\mathcal{C}_{0}(M_S)$, where $M_S$ is a (locally) compact Hausdorff space; pure states are given by Dirac $\delta$- functions, and  general states are given by probability measures on  $M_S$. 

\subsubsection{Composition of systems}
We consider two systems, $S_{1}$ and $S_{2}$, and propose to clarify what is meant by their composition $S_{1}\vee S_{2}$. The abelian algebra of  possible events of the composed system is $\mathcal{B}_{S_{1} \vee  S_{2} }= \mathcal{C}_{0}(M_{S_1} \times M_{S_2}  )$. We denote by $T_{1}$ ($T_{2}$) the $\sigma$-algebra of Borel sets on   $M_{S_1}$ ($ M_{S_2}$) with respect to the $\sigma(\mathcal{B}_{S_1}^{*}, \mathcal{B}_{S_1})$-  ($\sigma(\mathcal{B}_{S_2}^{*}, \mathcal{B}_{S_2})$-) topology. If $M_{S_1} \times M_{S_2}$ is equipped with the $\sigma$-algebra $T_{1} \times T_{2}$,  then probability measures $\mu_1$ on $ M_{S_1}$ and $\mu_2$ on $ M_{S_2}$ define a state of the composed system given by the tensor product measure $\mu_1 \otimes \mu_2$.  Every pure state of the composed system is a Dirac $\delta$- measure on the product space $M_{S_1} \times M_{S_2} $,  of the form $\delta_{(x_1,x_{2})}=\delta_{x_1} \otimes \delta_{x_2} $. Thus, every pure state of the composed system is ``separable'', i.e., remains pure when restricted to a subsystem, and hence there is no interesting notion of entanglement between $S_{1}$ and $S_{2}$.

\subsubsection{Symmetries of $\mathcal{B}_S$}

We recall that symmetries of $S$ are represented by $^{*}$-automorphisms of $\mathcal{B}_{S}$. If this algebra is abelian then there is a one-to-one correspondence between $^{*}$-automorphisms of $\mathcal{B}_S=\mathcal{C}_{0}(M_{S})$ and homeomorphisms of $M_{S}$.  Indeed, let $\alpha$ be a $^{*}$-automorphism of $\mathcal{C}_{0}(M_{S})$ and $\omega$ be a state on $\mathcal{C}_{0}(M_{S})$. Then $\hat{\alpha}(\omega)=\omega \circ \alpha$ is again a state on $\mathcal{C}_{0}(M_{S})$. It is multiplicative if $\omega$ is multiplicative. Moreover, it is clear that $\hat{\alpha}:\mathcal{S}_{S} \rightarrow  \mathcal{S}_{S}$ is a bijection, with inverse $\hat{\alpha}^{-1}(\omega)=\omega \circ \alpha^{-1}$.  The map $\hat{\alpha}$ from $M_{S}$ to itself is then also a bijection. If $\delta_{x_n}$ converges to $\delta_{x}$ in the $\sigma(\mathcal{B}_{S}^{*}, \mathcal{B}_{S})$- topology, then $(\hat{\alpha}( \delta_{x_n})-\hat{\alpha}( \delta_{x}))(f)=\alpha(f)(x_n)-\alpha(f)(x)$ converges to $0$, too, for any $f \in \mathcal{C}_{0}(M_{S})$. Thus, $\hat{\alpha}$ is a homeomorphism from $M_{S}$ to $M_{S}$.  The other direction is obvious.

In our effort to rediscover typical features of classical dynamical systems within the general algebraic formalism developed here, it is natural to ask the following questions:
\begin{enumerate}
\item Under which assumptions on the algebra $\mathcal{B}_{S}$ does the spectrum $M_S$ admit a tangent bundle; (in particular, when is $M_S$ a manifold)? What are the smoothness properties of $M_{S}$?
\item Under which hypotheses does $M_S$ have the structure of a symplectic manifold? 
\end{enumerate}
Some useful references for question (1) are \cite{WE, BRa}. It has been studied in depth in a recent paper of Connes, \cite{CO}. Connes uses methods of his non-commutative geometry to prove that a spectral triple, $(\mathcal{A}, \mathcal{H},D)$, where $\mathcal{A}$ is a commutative $^{*} \text{-algebra}$ of bounded linear operators acting on a Hilbert space $\mathcal{H}$, $D$ is a self-adjoint operator acting on $\mathcal{H}$ whose commutator with any element of $\mathcal{A}$ is an operator commuting with the elements of $\mathcal{A}$, fullfilling certain rather subtle additional requirements, has the property that the algebra $\mathcal{A}\simeq\mathcal{C}^{\infty}(M)$, where $M$ is a smooth compact manifold. The operator $D$ is a generalization of the Dirac operator on a spin manifold. In our context, it would be preferable to formulate conditions on the Lie algebra of derivations, $\mathcal{D}_{S}$, of some $^{*}$-subalgebra, $\mathcal{B}_ {S}^{0}$, weakly dense in the algebra $\mathcal{B}_S$ that guarantee that $M_S$, now defined as the spectrum of $\mathcal{B}_{S}^{0}$, admits a tangent bundle, whose sections can be identified with the elements of $\mathcal{D}_S$.

The second question has been considered, within a general algebraic formalism, in \cite{FGR}. But there do not appear to exist satisfactory general answers, yet.

Why are we interested in the first question stated above? Well, if $M_{S}$ has a tangent bundle we have a natural notion of vector fields. We may then study a one-parameter family  $\alpha:I \rightarrow \text{Aut}(\mathcal{B}_S)$ of $^{*}$-automorphisms, where $I$ is an interval of $\mathbb{R}$ containing $0$, such that $\alpha_0=id$. According to the previous discussion, this family gives rise to a one-parameter family of homeomorphisms $\hat{\alpha}:I \rightarrow \text{Homeo}(M_S) $, with $\hat{\alpha}_{0}=id$. If $M_S$ has a tangent bundle, $TM_{S}$, the map $X:I \times M_S \rightarrow TM_{S}$, formally given by 
$$X(t,x)=\frac{d}{ds} \left( \hat{\alpha}_{s} (\hat{\alpha}_{t}^{-1}(x))\right)_{s=t},$$  
may be a well-defined, smooth time-dependent vector field, with $$X(t,\hat{\alpha}_{t}(x))=\frac{d}{ds} \left( \hat{\alpha}_{s} (x) \right)_{s=t},$$
meaning that $t \mapsto \hat{\alpha}_{t}(x)$ is an integral curve of $X$ that passes through $x$ at time $t=0$. One can check that if $\hat{\alpha}_{t}$ satisfies $\hat{\alpha}_{t+s}=\hat{\alpha}_{t} \circ \hat{\alpha}_{s}$, then the vector field $X$ is time-independent, because
$$X(x)=\frac{d}{ds} \left( \hat{\alpha}_{s}(x))\right)_{s=0}, \text{independent of $t$}.$$
Conversely, using  the Cauchy-Lipschitz theorem, one shows that every time-dependent Lipschitz-continuous vector field $X: I\times M_S \rightarrow M_S$ ($M_S$ assumed to be compact), where $I$ is an open interval of $\mathbb{R}$ containing 0, generates a unique globally defined flow, $\hat{\alpha}$, with initial conditions $\hat{\alpha}_{0}(x)=x$. For each $t \in I$, $\hat{\alpha}_{t}$ is a homeomorphism of $M_S$. If the vector field $X$ is time-independent then $\hat{\alpha}_{t+s}=\hat{\alpha}_{t} \circ \hat{\alpha}_{s}$, for $t,s,t+s \in I$. If $X$ is time-dependent this relation does not hold, because, for $s$ fixed, $\hat{\alpha}_{t+s}(x)$ is not an integral curve of $X$, but of $Y(t,\cdot)=X(t+s,\cdot)$. We can label each integral curve with an extra index indicating the initial time, using the notation $\hat{\alpha}_{t,s}$ for the flow maps. One then shows that $\hat{\alpha}_{t,s} \circ \hat{\alpha}_{s,u} = \hat{\alpha}_{t,u} $, for arbitrary $t,s,u \in I$.

A realistic physical theory, i.e., one with an abelian algebra $\mathcal{B}_{S}$, for which  a family, $\alpha_{t,s}$, of \mbox{$^{*}$-automorphisms} describing time-translations of elements in $\mathcal{B}_S$ is specified, is an example of a  $\textit{deterministic}$ dynamical system. 
As explained above, the $^{*}$-automorphisms $\alpha_{t,s}$ determine a family of homeomorphisms, $\hat{\alpha}_{t,s}$, of $M_S$ generated by a time-dependent vector field. 
We consider a family of Borel sets, $\Omega_i$,  of $M_S$ and possible events
$$P_i:=\alpha_{t_i,t_0} (\chi_{\Omega_i})=\chi_{\Omega_i} \circ \hat{\alpha}_{t_i,t_0}=\chi_{\hat{\alpha}_{t_0,t_i}(\Omega_i)}, \text{ }i = 1,...,n,$$
where $\chi_{\Omega}$ is the characteristic function of $\Omega$.  The possible event $P_i$ corresponds to a $fact$, given an initial state $x\in M_S$ at time $t_0$, if and only if the state $x_{t_i}=\hat{\alpha}_{t_i,t_0}(x)$ of the system at time $t_i$ belongs to the set $\Omega_i$. Note that the maps $\hat{\alpha}_{t,t_0}$ are uniquely determined by the initial condition $\hat{\alpha}_{t_0,t_0}(x)=x$ and the vector field $X(t_0+(\cdot),\cdot)$. Because the integral curves are continuous in time, the probability of observing the sequence of events $P_{1},...,P_{n}$, represented by the function $\prod_{i} P_i$, in a pure state $\delta_{x}$, for some point $x \in M_S$, is given by $\delta_{x} \left(\prod_{i} P_i \right)$ and is either ${0}$ or ${1}$; i.e., pure states always give rise to $\textit{ ``0--1 laws''}$. The failure of the probability of a sequence of events in a state $\omega$ of the system to satisfy a ``0--1 law'' implies that $\omega$ is mixed.  These are features that characterize a deterministic dynamical system.

It is sometimes useful to generalize our notion of time-evolution of a realistic physical system by defining it in terms of one-parameter families of maps from the space of states, $\mathcal{S}_{S}$, of the system to itself (``Schr\"odinger picture''), not necessarily requiring that pure states are mapped to pure states, rather than as $^{*}$-automorphisms of $\mathcal{B}_{S}$ (``Heisenberg picture''). This immediately leads one to the theory of stochastic processes over the state space $M_S$.

\subsection{Quantum theories}

The unique and only feature that distinguishes a $\textit{quantum theory}$ from a realistic theory is that, in a quantum theory, $\mathcal{A}_S$, and thus $\mathcal{B}_S$, are $non{-}abelian$ operator algebras. This entails that quantum theories show many features not encountered in realistic theories. The best known example are Heisenberg's uncertainty relations, which are an immediate consequence of the non-commutativity of $\mathcal{B}_S$. The main new feature exhibited by quantum theories is, however, that they are intrinsically non-deterministic. This fact has caused plenty of grief and confusion among physicists. It is a fairly direct consequence of the non-commutativity of $\mathcal{B}_{S}$, which implies that the probabilities of sequences of possible events (i.e., histories) do, in general, $not$ follow 0--1 laws, anymore, \textit{even if the state of $S \vee E$ ($E$ the equipment used to explore $S$) used to predict such probabilities is pure}.

\subsubsection{Uncertainty relations}
Let us consider selfadjoint operators $a,b \in \mathcal{B}_S$.  For any state $\omega$ and an arbitrary $\mu \in \mathbb{C}$, $\omega((a+\bar{\mu} b)(a+\mu b))\geq 0$, which, by a standard argument, implies that $\omega(a^{2}) \omega(b^{2}) \geq \frac{1}{4} \vert \omega(\left[ a,b\right])\vert^{2} $, where \mbox{$\left[ a,b\right]=ab-ba$}. Assuming, without loss of generality, that $\omega(a)=\omega(b)=0$, one recovers the standard Heisenberg uncertainty relations. (The original form concerns the case where $\left[ a,b\right]=i\lambda \mathbb{I}$, $\lambda \in \mathbb{R}$. -- For time-energy uncertainty relations, see, e.g., \cite{PfF}.)

\subsubsection{Hilbert space formalism and superposition principle for pure states}
The GNS (Gel'fand-Naimark-Segal) \cite{TA} construction enables one to formulate (a sector of a) quantum theory within the usual Hilbert space formalism and, given a state $\omega$ on a $C^{*}$- algebra $\mathcal{B}$,
to map the elements of $\mathcal{B}$ to operators acting on a Hilbert space, $\mathcal{H}_{\omega}$.
\begin{theorem}
Let $\mathcal{B}$ be a $C^{*}$-algebra and $\omega$ a continuous positive linear functional on $\mathcal{B}$. Then there exists a unique (up to unitary equivalence) representation of $\mathcal{B}$, $(\pi_{\omega},\mathcal{H}_{\omega})$, on a Hilbert space $\mathcal{H}_{\omega}$ such that
\begin{itemize}
\item $\mathcal{H}_{\omega} $contains a cyclic vector $\xi_{\omega}$, i.e., $\left< \pi_{\omega} (\mathcal{B}) \xi_{\omega}\right>=\mathcal{H}_{\omega} $.
\item For all $a \in \mathcal{B}$, $\omega(a)=(\xi_{\omega},\pi_{\omega}(a)\xi_{\omega})$, where $(\cdot,\cdot)$ is the scalar product on the Hilbert space $\mathcal{H}_{\omega}$.
\end{itemize}
\end{theorem}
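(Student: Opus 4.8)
The plan is to reconstruct the Hilbert space and the representation directly from the functional $\omega$, using positivity of $\omega$ to manufacture an inner product. First I would define a sesquilinear form on $\mathcal{B}$ by $\langle a,b\rangle_\omega := \omega(a^{*}b)$. Positivity of $\omega$ gives $\langle a,a\rangle_\omega=\omega(a^{*}a)\geq 0$, so the form is positive semidefinite and the Cauchy--Schwarz inequality $|\omega(a^{*}b)|^{2}\leq\omega(a^{*}a)\,\omega(b^{*}b)$ holds. The form need not be definite, so the key object is its null space $\mathcal{N}:=\{a\in\mathcal{B}\mid\omega(a^{*}a)=0\}$. By Cauchy--Schwarz this coincides with $\{a\mid\omega(b^{*}a)=0\text{ for all }b\}$, exhibiting $\mathcal{N}$ as an intersection of kernels of the continuous functionals $a\mapsto\omega(b^{*}a)$, hence a closed linear subspace.

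The next step is to show that $\mathcal{N}$ is a \emph{left} ideal, which is exactly what is needed for left multiplication to descend to the quotient. For $c\in\mathcal{B}$ and $a\in\mathcal{N}$, the operator inequality $c^{*}c\leq\lVert c\rVert^{2}\,\mathbb{I}$ (valid after adjoining a unit if $\mathcal{B}$ is non-unital) yields $a^{*}c^{*}ca\leq\lVert c\rVert^{2}\,a^{*}a$ as positive elements, whence $\omega((ca)^{*}(ca))\leq\lVert c\rVert^{2}\,\omega(a^{*}a)=0$, so $ca\in\mathcal{N}$. I then endow the quotient $\mathcal{B}/\mathcal{N}$ with the now genuine inner product induced by $\langle\cdot,\cdot\rangle_\omega$ and define $\mathcal{H}_\omega$ as its completion. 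The representation is $\pi_\omega(c)[a]:=[ca]$, well defined precisely because $\mathcal{N}$ is a left ideal; the same inequality $\lVert\pi_\omega(c)[a]\rVert^{2}=\omega(a^{*}c^{*}ca)\leq\lVert c\rVert^{2}\lVert[a]\rVert^{2}$ shows $\pi_\omega(c)$ is bounded with $\lVert\pi_\omega(c)\rVert\leq\lVert c\rVert$, so it extends to all of $\mathcal{H}_\omega$, while linearity, multiplicativity and $\pi_\omega(c^{*})=\pi_\omega(c)^{*}$ are routine verifications on the dense subspace.

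It remains to produce the cyclic vector and recover $\omega$. If $\mathcal{B}$ is unital this is immediate: take $\xi_\omega:=[\mathbb{I}]$, so that $\pi_\omega(a)\xi_\omega=[a]$ gives density of $\pi_\omega(\mathcal{B})\xi_\omega$ and $(\xi_\omega,\pi_\omega(a)\xi_\omega)=\omega(\mathbb{I}^{*}a)=\omega(a)$. The genuinely delicate point, and the step I expect to be the main obstacle, is the non-unital case, where there is no class $[\mathbb{I}]$ to serve as $\xi_\omega$. Here I would invoke continuity of $\omega$ and a self-adjoint approximate identity $(e_\lambda)$ for $\mathcal{B}$, and show that $([e_\lambda])$ is a Cauchy net in $\mathcal{H}_\omega$; this uses $\lVert[e_\lambda]-[e_\mu]\rVert^{2}=\omega(e_\lambda^{2})-2\,\mathrm{Re}\,\omega(e_\lambda e_\mu)+\omega(e_\mu^{2})$ together with the standard fact $\lVert\omega\rVert=\lim_\lambda\omega(e_\lambda)$, so that $\xi_\omega:=\lim_\lambda[e_\lambda]$ exists, satisfies $\pi_\omega(a)\xi_\omega=[a]$, and recovers $\omega(a)=(\xi_\omega,\pi_\omega(a)\xi_\omega)$.

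Finally, for uniqueness up to unitary equivalence, suppose $(\pi',\mathcal{H}',\xi')$ is another cyclic representation with $\omega(a)=(\xi',\pi'(a)\xi')$. I would define $U:\mathcal{H}_\omega\to\mathcal{H}'$ on the dense subspace by $U\big(\pi_\omega(a)\xi_\omega\big):=\pi'(a)\xi'$. This map is isometric, and in particular well defined, because both sides have squared norm $\omega(a^{*}a)$, and its range is dense by cyclicity of $\xi'$; hence $U$ extends to a unitary. The intertwining relation $U\pi_\omega(a)=\pi'(a)U$ then follows by applying both sides to vectors $\pi_\omega(b)\xi_\omega$ and using the homomorphism property, which completes the proof.
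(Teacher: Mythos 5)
Your proposal is correct, and it is precisely the standard GNS construction: the paper itself states this theorem without proof, deferring to the operator-algebra literature (its reference \cite{TA}), and your argument -- the semidefinite form $\omega(a^{*}b)$, the left-ideal null space $\mathcal{N}$, the quotient-completion, left multiplication as the representation, the cyclic vector (via $[\mathbb{I}]$ or an approximate identity), and the densely defined isometry for uniqueness -- is exactly the proof given in those standard sources. The only step you compress is the Cauchy property of $([e_{\lambda}])$ in the non-unital case, where the cross term $\mathrm{Re}\,\omega(e_{\lambda}e_{\mu})\to\Vert\omega\Vert$ still needs a short Cauchy--Schwarz (or unitization) argument, but the facts you invoke do yield it.
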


Of course, this theorem also holds for abelian $C^{*}$-algebras. 
Every unit vector $\phi\in\mathcal{H}_{\omega}$ defines a state given by $(\phi, \pi_{\omega} (.) \phi)$. As the reader may remember from his/her quantum mechanics course, the  GNS representation associated with $(\mathcal{B}, \omega)$ is irreducible if and only if $\omega$ is a pure state. In this case, every unit vector $\phi \in \mathcal{H}_{\omega}$ defines a pure state, too, because every non-zero vector in the Hilbert space of an irreducible representation is cyclic for $\pi_{\omega}(\mathcal{B})$ and the commutant of this algebra consists of multiples of the identity. If the algebra $\mathcal{B}$ is abelian the GNS representation associated with a pure state is one-dimensional, and hence there are no vectors not colinear with $\xi_{\omega}$ in the GNS space $\mathcal{H}_{\omega}$.  
In contrast, if the theory is quantum, with dim$(\mathcal{H}_{\omega}) \geq 2$, one can always find two noncolinear unit vectors, $\psi$ and $\phi$, in $\mathcal{H}_{\omega}$. If $\omega$ is a pure state then an arbitrary linear combination, $c_{1}{\phi}+ c_{2}{\psi}$, normalized to have norm $= {1}$,  is again a pure state. It follows that, for quantum theories, the space of pure states in a fixed GNS representation has an underlying linear structure that gives rise to a $\textit{superposition principle}$. As we will see shortly, this superposition principle implies that the probabilities of sequences of events associated with pure states do not, in general, obey a 0--1 law, anymore, which is an unmistakable signature of the $non{-}deterministic$ nature of quantum theories (as opposed to realistic theories). Moreover, the fact that, in a quantum theory, $\mathcal{B}$ is non-abelian leads to $\textit{quantum interferences}$, which the reader is familiar with from various well known examples, such as the double slit experiment. Quantum interferences are at the root of the problem that quantum theories can, in general, not be embedded into classical hidden-variables theories and that the notion of $``(mutually\text{ } exclusive)\text{ } events"$ becomes fuzzy; as discussed in Section 4.

\subsubsection{ A brief digression on von Neumann algebras} 
For each $C^{*}$-algebra $\mathcal{B}$ (abelian or not), there is a Hilbert space $\mathcal{H}$ such that $\mathcal{B}$  is isometrically isomorphic to a uniformly closed self-adjoint subalgebra of $B(\mathcal{H})$ (the algebra of all bounded linear operators on $\mathcal{H}$). Von Neumann algebras are a particular type of  $C^{*}$-algebras whose definition is based on this isomorphism. 
\begin{definition}
A von Neumann algebra, $\mathcal{M}$, is a $^*$-subalgebra of the algebra, $B(\mathcal{H})$, of all bounded operators on a Hilbert space $\mathcal{H}$ that is $\sigma$-weakly closed and non-degenerate, i.e.,  $\langle \mathcal{M} \mathcal{H}\rangle=\mathcal{H}$.
\end{definition}

The $\sigma$-weak topology is the topology on $B(\mathcal{H})$ that comes from the isometry  $B(\mathcal{H}) \simeq B_{0}(\mathcal{H})^{**}$, where $B_{0}(\mathcal{H})$ is the set of compact linear operators on $\mathcal{H}$. Von Neumann's {double-commutant} theorem provides another, equivalent definition of von Neumann algebras.
\begin{definition}
A von Neumann algebra, $\mathcal{M}$, is a $^*$-subalgebra of  $B(\mathcal{H})$, with $\mathcal{M}'':=(\mathcal{M}')'=\mathcal{M}$.
\end{definition}
Here $\mathcal{M}'$ is the commutant of $\mathcal{M}$, i.e., the set of elements of $B(\mathcal{H})$ commuting with $\mathcal{M}$. 

Von Neumann algebras are interesting for many reasons. First, for  every representation $(\pi,\mathcal{H})$ of a $C^{*}$-algebra $\mathcal{B}$, $\pi(\mathcal{B})''$ is a von Neumann algebra that is the weak closure of $\pi(\mathcal{B})$. Moreover, using the universal representation of  $\mathcal{B}$, one can show that the second conjugate space of $\mathcal{B}$, $\mathcal{B}^{**}$, is isometric to the von Neumann algebra induced by this universal representation. 
Second, von Neumann algebras with trivial center (factors) are completely classified; (type $I_{n}, n = 1,2,...,\infty$; type $II_{1}$, type $II_{\infty}$; type $III_{\lambda}$, $0 < \lambda \leq 1$).
Finally, and this is the reason why, at this point, we mention von Neumann algebras, quantum systems $S$ with finitely many degrees of freedom can be described in terms of algebras $\mathcal{B}_{S}$ that are type-$I$ von Neumann algebra; e.g., in terms of the  group algebra of some compact group (SU(2), for quantum-mechanical spins) and/or of the Weyl algebra generated by the position- and momentum operators of finitely many particles. A von Neumann algebra is said to be of type $I$ if every non-zero central projection majorizes a non-zero abelian projection in $M$. It is easy to see that $B(\mathcal{H})$ is a type- $I$ von Neumann algebra, because any projection of rank $1$ is abelian. Actually, for every factor, $\mathcal{F}$, of type $I$,  there exists a Hilbert space $\mathcal{H}$ with $\mathcal{F} \simeq{B(\mathcal{H})}$. The theory of direct integrals of von Neumann algebras shows that the direct integral 
$$ \mathcal{M} =\int_{\Xi}^{\oplus} B(\mathcal{H}_{\xi}) $$
 of factors of type $I$ on a standard Borel space $(\Xi,\mu)$ is a type-$I$ von Neumann algebra.
 The  C*-group algebra, $C^{*}(G)$, of a compact group $G$ is isomorphic to the direct sum of all unitary irreducible representations of $G$. As there are at most countably many such representations, all of them finite-dimensional, 
$$C^{*}(G) \simeq \bigoplus_{n \in \mathcal{N}} M_{n}(\mathbb{C}),$$
where $\mathcal{N}$ is a subset of $\mathbb{N}$ (the natural numbers).
A typical example is the quantum-mechanical rotation group, $SU(2)$. Its  C*-group algebra can be used as the algebra of possible events, $\mathcal{B}_{S}$, of a system of quantum-mechanical spins. It is well known that $C^{*}(SU(2))$ is isomorphic to the direct sum of unitary irreducible representations of $SU(2)$, i.e., $C^{*}(SU(2)) \simeq \bigoplus_{s \in \mathbb{N}} M_{2s+1} (\mathbb{C})$. 

Standard results of the theory of direct integrals imply the following facts. 
\begin{itemize}
\item The direct-integral representation of a von Neumann algebra, $\mathcal{M}$, of type $I$ is unique in a rather obvious sense. 
\item Every element of the predual of $\mathcal{M}$ is given by a unique integrable field of elements of the predual of the algebras $B(\mathcal{H}_{\xi})$, $\xi \in \Xi$, i.e., $\omega \in\mathcal{M}^{*}$ is given by
$$\omega=\int_{\Xi}^{\oplus} \omega_{\xi} d\mu(\xi)$$
\end{itemize}

It follows that every normal state, $\omega$, is given by a direct integral of normal states, $\omega_{\xi}=tr(\rho_{\xi}{(.)})$, where the $\rho_{\xi}$ are density matrices on the fibre spaces $\mathcal{H}_{\xi}$. Pure states on $\mathcal{M}$ are given by unit vectors in one of the spaces $\mathcal{H}_{\xi}$, with $\mu$ given by the $\delta-$ function concentrated in $\xi$.\\ 
The dynamics of a system $S$, determined by  $*$-automorphisms, ${\alpha_{t,s}}$, $t,s \in \mathbb{R}$, of $\mathcal{M}$ := $\mathcal{B}_{S}$, is then given in terms of a crossed product of  a measurable field of unitary propagators, $U_{t,s}$, on the fibre Hilbert spaces and a measurable field of Borel isomorphisms, $\Phi_{t,s}$, of strata of $\Xi$.\\
Alternatively, one may define time evolution in terms of maps from the space, $\mathcal{S}_S$, of states of the system $S$ to itself (``Schr\"odinger picture''), rather than in terms of $^{*}$-automorphisms of the algebra $\mathcal{B}_S$ (``Heisenberg picture''). One then assumes that time evolution is given in terms of ``completely positive'' maps from $\mathcal{S}_S$ to itself; but pure states are not necessarily mapped to pure states. This leads one to the theory of quantum stochastic processes and of Lindblad generators, see \cite{Lindblad}, which is often useful for a phenomenological (rather than a fundamental) description of the dynamics of open quantum systems with many degrees of freedom.\\
For the mathematically minded reader we remark that the material in this subsection should really be formulated within the realm of type-$I$ $C^{*}$-algebras, as developed by Glimm \cite{Glimm}, which is a natural framework for the description of quantum systems with finitely many degrees of freedom.

\subsubsection{ Composition of systems and entanglement} 
The composition, $S_1\vee S_2$, of two systems, $S_1$ and $S_2$, is described in terms of the tensor product of the algebras $\mathcal{B}_{S_1}$ and $\mathcal{B}_{S_2}$, i.e., $\mathcal{B}_{S_1\vee S_2} := \mathcal{B}_{S_1} \otimes  \mathcal{B}_{S_2}$. The simplest examples of quantum theories concern systems with $\mathcal{B}_{S_i} \simeq B(H_i)$, where the state spaces $H_i$ are finite-dimensional Hilbert spaces. In contrast to systems described by realistic theories, quantum systems may be $``entangled"$. For instance, $\frac{1}{\sqrt{2}} \left( \phi_{1} \otimes \phi_{2} +\psi_{1} \otimes \psi_{2} \right)$, with  $\phi_1,\psi_1$ a pair of orthogonal unit vectors in $H_1$, $\phi_2,\psi_2$ orthogonal unit vectors in $H_2$, is a pure state of $B(H_1) \otimes B(H_2)$. But its restriction to the algebra $B(H_i)$ of a single subsystem is $not$ a pure state of $S_i$. One says that $S_1$ and $S_2$ are entangled in this state.\\  
Many interesting mathematical problems arise in the study of composed quantum systems. As an example, we mention the problem of \textit{``quantum marginals''}. We continue to consider systems for which $\mathcal{B}_{S_i}\simeq B(H_i)$, with $H_i$ finite-dimensional, for $i=1,...n$. Then
$$\mathcal{B}_{S_1\vee...\vee S_n} := \bigotimes_{i=1}^{n} B(H_i) \simeq B \left( \otimes_{i=1}^{n} H_i \right)$$
Let $\rho$ be a density matrix on  $\otimes_{i=1}^{n} H_i $, and let $\rho_i$ be its $\text{i}^{\text{th}}$ marginal, defined by 
$$\text{tr}(\rho_i a):= \text{tr} (\rho (1 \otimes\cdot\cdot\cdot\otimes 1\otimes \underbrace{a}_{i}  \otimes 1\otimes \cdot\cdot\cdot \otimes 1)), $$
for all $a \in B(H_{i}) $.
It is natural to ask the following question: Given $\rho$ and $\rho_1,...,\rho_n$, under what conditions on (the spectra of) the density matrices $\rho$ and $\rho_i$, $i = 1,...,n$, is $\rho_i$ the $\text{i}^{\text{th}}$ marginal of $\rho$? This difficult question has been answered by Klyachko in 2004, see $\cite{Klyachko}$. His main result is described in Appendix \ref{A1}. (The example where $\rho$ is a pure state and $n=2$ is, of course, well known and elementary.)

\subsubsection{The Kochen-Specker theorem on the absence of hidden variables}
In this subsection, we present a short overview of  the Kochen-Specker theorem, a ``no-go theorem'' for hidden-variables theories. 

In 1935, Einstein, Podolsky and Rosen \cite{EPR} proposed to extend quantum theories to realistic theories. The question immediately arises as to wether this is possible. Here we recall a result due to Kochen and Specker concerning the impossibility of  ``hidden-variables theories''.

We consider a system $S$ described by a quantum theory with a non-commutative $*$-algebra of observables $\mathcal{A}^{Q}_{S}\subseteq\mathcal{B}^{Q}_{S}$, (the $C^{*}$-algebra of possible events in S). We suppose that $\mathcal{B}^{Q}_{S}$ is unital. 

Due to Gel'fand's isomorphism, a realistic theory describing $S$ would have an algebra of possible events of the form $\mathcal{B}_{S}^{C} \simeq \mathcal{C}_{0}(\Omega)$, where $\Omega$ is a compact Hausdorff space, and $\mathcal{C}_{0}(\Omega)$ is the set of continuous functions on $\Omega$. States on  $\mathcal{B}_{S}^{C}$ are probability measures. For every probability measure, $\mu$, on $\Omega$, the GNS representation, $\pi$, associated with $(\mathcal{B}_{S}^{C}, \mu)$ is realized by multiplication operators on the Hilbert space of square-integrable functions, $L^{2}(\Omega, \mu)$, and the corresponding cyclic vector is the constant function $\equiv1$ on $\Omega$. The von Neumann algebra generated by $\mathcal{B}_{S}^{C}$ in this representation is given by $L^{\infty} (\Omega, \mu)\supseteq\mathcal{C}_{0}(\Omega)$. Hidden-variables theories may thus be viewed as realistic theories specified by an algebra of bounded measurable functions, $\mathcal{F}_{\Omega}$, on a measure space $(\Omega,\sigma)$, where $\sigma$ is a $\sigma$-algebra. 

With this in mind, we may attempt to construct an embedding of a quantum theory in a realistic theory in the following way:  Let  $A=A^{*} \in \mathcal{A}^{Q}_{S} \subseteq B(\mathcal{H}_{S})$ be an observable (i.e., an operator corresponding to some physical quantity of $S$), and let $P_{A}$ denote the spectral projections of $A$. For any $\psi \in \mathcal{H}_{S}$, $(\psi, P_{A}(\cdot) \psi)$ is a probability measure on the spectrum, $\sigma(A)$, of $A$. We suppose that  a random variable $\alpha_{A}: \Omega \rightarrow \mathbb{R}$, $\alpha_{A} \in \mathcal{F}^{\mathbb{R}}_{\Omega}$, can be associated with $A$, where $ (\Omega,\sigma)$ is a measure space independent of the observable $A$. We also suppose that we can  associate a probability measure $\mu_{\psi}$ on $\Omega$ to any  vector $\psi \in \mathcal{H}_{S}$, with the property that
$$(\psi, P_{A}(\Delta) \psi)=\mu_{\psi}(\alpha_{A}^{-1}(\Delta))$$
for an arbitrary measurable set $\Delta   \subset \sigma(A)$. This would imply that the quantum theory and the realistic theory it is embedded in predict the same probability distributions for the measured values of the observable A. Because any real function, $f$, of an observable is again an observable, it is natural to require that $f(\alpha_{A})=\alpha_{f(A)}$, for every such $f$. 
We are led to the following definition of a hidden-variables embedding, $ \mathcal{F}_{\Omega}$, of $\mathcal{A}^{Q}_{S}$:
\begin{definition}
Hidden-variables embedding 
\end{definition}
Let $\Omega$ be a measure space. A hidden-variables embedding, $(\Omega, \mathcal{F}_{\Omega})$, of $\mathcal{A}^{Q}_{S} \subset B(\mathcal{H}_{S})$ is defined by specifying maps
\begin{eqnarray*}
A=A^{*} \in \mathcal{A}^{Q}_{S} &\mapsto & \alpha_{A}  \in \mathcal{F}^{\mathbb{R}}_{\Omega},\\
\psi \in \mathcal{H}_{S} &\mapsto &  \mu_{\psi} \in \text{prob}_{\Omega},
\end{eqnarray*}
where $\text{prob}_{\Omega}$ is the collection of probability measures on $\Omega$, with the properties
\begin{eqnarray*}
&(1)& (\psi, P_{A}(\Delta) \psi)=\mu_{\psi}(\alpha_{A}^{-1}(\Delta)), \text{ for all } \Delta   \subset \sigma(A);\\
&(2)& f(\alpha_{A})=\alpha_{f(A)}, \text{ for any real continuous function}  f.
\end{eqnarray*}

Kochen and Specker have proven the  following theorem.
\begin{theorem}(Kochen-Specker).
If $\mathcal{A}_{S}^{Q}$ = $B(\mathcal{H}_{S})$, with $\dim(\mathcal{H}_{S}) \geq 3$, it is impossible to find a hidden-variables embedding of $\mathcal{A}_{S}^{Q}$ into $(\Omega,\mathcal{F}_{\Omega})$. 
\end{theorem}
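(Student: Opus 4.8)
The plan is to show that the existence of a hidden-variables embedding would force a $\{0,1\}$-valued ``truth assignment'' on the orthogonal projections of $\mathcal H_S$ that is impossible in dimension $\ge 3$ by the purely combinatorial Kochen--Specker obstruction. First I would fix a point $\omega \in \Omega$ and define a valuation $v_\omega(A) := \alpha_A(\omega)$; such points exist because the maps $\psi \mapsto \mu_\psi$ take values in $\mathrm{prob}_\Omega$, so $\Omega \neq \emptyset$. Applying property (2) with $f(x) = x^2$ to a projection $P = P^* = P^2$ gives $\alpha_{P^2} = (\alpha_P)^2$, hence $\alpha_P = (\alpha_P)^2$ and $v_\omega(P) \in \{0,1\}$ for every projection $P$. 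Thus $v_\omega$ is $\{0,1\}$-valued on all projections.

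Next I would upgrade this to a ``frame function''. Let $Q_1,\dots,Q_m$ be mutually orthogonal projections with $\sum_i Q_i = \mathbb I$. Because orthogonal projections commute, they are the spectral projections of the single self-adjoint operator $A = \sum_i \lambda_i Q_i$ with distinct $\lambda_i$; its spectrum $\sigma(A) = \{\lambda_1,\dots,\lambda_m\}$ is finite, so each $Q_i = f_i(A)$ for a continuous $f_i$ with $\sum_i f_i \equiv 1$ on $\sigma(A)$. Property (1) applied with $\Delta = \sigma(A)$ gives $\mu_\psi(\alpha_A^{-1}(\sigma(A))) = (\psi, \mathbb I \psi) = 1$, so $\alpha_A(\omega) \in \sigma(A)$ for a.e.\ $\omega$; at such $\omega$, property (2) yields $\sum_i v_\omega(Q_i) = \big(\sum_i f_i\big)(\alpha_A(\omega)) = 1$ and $v_\omega(\mathbb I) = 1$. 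Taking the $Q_i$ to be the rank-one projections of an orthonormal basis, this says $v_\omega$ assigns the value $1$ to exactly one vector of every orthonormal basis.

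Finally I would reduce to three dimensions and invoke the combinatorial core. Pick an orthonormal basis of $\mathcal H_S$; exactly one of its vectors, say $e_0$, satisfies $v_\omega(P_{e_0}) = 1$. Let $K$ be any three-dimensional subspace containing $e_0$, write $\mathbb I_K = P_{e_0} + R$, and apply additivity to the orthogonal family $\{P_{e_0}, R, \mathbb I_{K^\perp}\}$ to conclude $v_\omega(\mathbb I_K) = 1$. Hence the restriction of $v_\omega$ to the rays of $K$ is a $\{0,1\}$-colouring that marks exactly one member of each orthonormal triple in $K$. The Kochen--Specker theorem asserts that no such colouring of the rays of a three-dimensional space exists: Kochen and Specker exhibited a finite family of directions (their $117$ rays; Peres later found a configuration of $33$) whose orthogonality relations admit no consistent colouring. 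This contradicts the existence of $v_\omega$, so no hidden-variables embedding of $B(\mathcal H_S)$ can exist.

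The hard part is this last step: the impossibility of the colouring is not a soft algebraic fact but a genuinely geometric and combinatorial statement about the orthogonality graph of rays in $\mathbb C^3$ (or $\mathbb R^3$), and it is what carries the whole weight of the theorem; the rest of the argument is essentially bookkeeping. A secondary technical point is the measure-theoretic reading of properties (1) and (2). If the identities hold only $\mu_\psi$-almost everywhere rather than pointwise, I would observe that the Kochen--Specker configuration involves only finitely many bases, hence finitely many observables and projections, so the relevant conull conditions intersect in a conull --- and therefore nonempty --- set of points $\omega$. At any such $\omega$ the colouring argument above runs verbatim, delivering the contradiction.
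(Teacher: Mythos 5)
The first thing to say is that the paper does not prove this theorem at all: immediately after the statement it refers the reader to the original Kochen--Specker paper and to Straumann for proofs. So there is no internal proof to compare yours against, and your reduction is genuinely additional content relative to the paper. Within that reduction, your argument is the standard one and is essentially sound: property (2) with $f(x)=x^{2}$ forces $\lbrace 0,1\rbrace$-values on projections; the additivity over a finite orthogonal resolution of the identity via functional calculus is correct (a simplification: take the $f_i$ to be the Lagrange interpolation polynomials for the points $\lambda_1,\dots,\lambda_m$, so that $\sum_i f_i\equiv 1$ identically on $\mathbb{R}$; then the additivity holds at \emph{every} $\omega$ and the a.e.\ spectrum argument is not needed); and the passage to a three-dimensional subspace plus the citation of a finite uncolourable ray configuration is the standard conclusion. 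You are also right, and commendably explicit, that the finite combinatorial configuration is the real content and that you cite it rather than prove it --- but the paper does exactly the same for the whole theorem, so on this point you are on an equal footing with the source.

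There is one genuine, though repairable, gap: the step ``pick an orthonormal basis of $\mathcal{H}_S$; exactly one of its vectors satisfies $v_\omega(P_{e_0})=1$.'' Your additivity statement was derived only for \emph{finite} families $Q_1,\dots,Q_m$ summing to $\mathbb{I}$; for an infinite orthonormal basis the operator $A=\sum_i\lambda_i Q_i$ has eigenvalues accumulating at $0$, each continuous $f_i$ must then vanish at $0$, and $\sum_i f_i$ cannot equal $1$ on all of $\sigma(A)$, so the ``exactly one basis vector gets value $1$'' claim does not follow as stated. Since the theorem allows $\dim(\mathcal{H}_S)=\infty$, this matters: finite additivity alone is perfectly consistent with a valuation assigning $0$ to every finite-rank projection, in which case your three-dimensional colouring would be identically $0$ and no contradiction would arise. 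The fix is already implicit in your last paragraph: apply property (1) directly with $A=P_{e_0}$, $\psi=e_0$ and $\Delta=\lbrace 1\rbrace$, which gives $\mu_{e_0}\bigl(\lbrace\omega : v_\omega(P_{e_0})=1\rbrace\bigr)=(e_0,P_{e_0}e_0)=1$; then intersect this $\mu_{e_0}$-conull set with the finitely many $\mu_{e_0}$-conull sets on which the additivity relations for the Kochen--Specker frames of $K$ (each completed by $\mathbb{I}_{K^{\perp}}$, so that every resolution used is finite) hold. At any point of that intersection your colouring argument runs verbatim, in any dimension $\geq 3$.
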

Numerous proofs of this theorem can be found in the litterature. The reader may enjoy consulting the original paper $\cite{Kochen-Specker}$. For a recent, simple proof see $\cite{Straumann}$, and references given there. 

\subsubsection{Correlation matrices and Bell's inequalities} 
Bell's inequality \cite{Bell} has played a very prominent role in much recent theoretical and experimental work concerning the foundations of quantum science. It therefore should appear on stage in notes like these. In Appendix \ref{A2}, we briefly review Tsirelson's work on Bell's inequalities. 
Here, we merely recall a variant of Bell's inequality due to Clauser, Horne, Shimony and Holt \cite{Clauser}, which is a special case of the general framework outlined in Appendix \ref{A2}.\\
We consider two observers, A (for ``Alice'') and B (for ``Bob''), who are measuring spins or helicities in a system of two particles (e.g., electrons or, more realistically, photons) of spin $\frac{1}{2}$ or helicity $\pm 1$, respectively; (in the following discussion, we speak of spin).
A particle source emits one of these particles in the direction of A and the other one in the direction of B. Alice measures the component of the spin of one particle along the directions $\mathbf{t}$ or $\mathbf{u}$, while Bob measures the component of the spin of the other particle along  $\mathbf{v}$ or $\mathbf{w}$.\\
Let us first imagine that there is a $realistic$ theory describing these spin measurements. Denoting the component of the spin of a particle along $\textbf{u}$ by $\sigma_{\textbf{u}}$, one then observes that
$ \sigma_{\textbf{t}}$, $\sigma_{\textbf{u}}$, $\sigma_{\textbf{v}}$ and $\sigma_{\textbf{w}}$ are random variables on some measure space $(\Omega,\mu)$ taking the values $\pm 1$. 
(For simplicity, these quantities are rescaled by a factor $2/{\hbar}$, so that their values are $\pm 1$, rather than $\pm {\hbar}/2$). It is immediate to see that, for an arbitrary $\omega$ in $\Omega$,
\begin{equation}
\label{classic}
\sigma_{\textbf{t}} (\omega) \sigma_{\textbf{v}} (\omega) + \sigma_{\textbf{t}} (\omega) \sigma_{\textbf{w}} (\omega)+\sigma_{\textbf{u}} (\omega) \sigma_{\textbf{v}} (\omega)-\sigma_{\textbf{u}} (\omega) \sigma_{\textbf{w}} (\omega)=\pm 2
\end{equation}

Integrating over $\Omega$, we find that the correlations 
$$\langle \sigma_{\textbf{t}}  \sigma_{\textbf{v}}  \rangle:=\int_{\Omega }\sigma_{\textbf{t}} (\omega) \sigma_{\textbf{v}} (\omega) d\mu(\omega)$$
satisfy the inequalities
\begin{equation}
\label{classic2}
\vert \langle \sigma_{\textbf{t}}  \sigma_{\textbf{v}}  \rangle + \langle \sigma_{\textbf{t}}  \sigma_{\textbf{w}}  \rangle + \langle \sigma_{\textbf{u}}  \sigma_{\textbf{v}}  \rangle -\langle \sigma_{\textbf{u}}  \sigma_{\textbf{w}}  \rangle \vert \leq 2
\end{equation}
These inequalities characterize a polytope of $classical$ correlation matrices (cf. Appendix \ref{A2}).\\
 It is well known, however, that in actual spin- or, rather, photon-polarization experiments inequalities (\ref{classic2}) are violated (see  \cite{Haroche} and refs.), \textit{as predicted by quantum mechanics}. Indeed, preparing the two particles in an appropriate pure, but entangled state $\vert \psi \rangle \in \mathcal{H}_A \otimes \mathcal{H}_{B}$, it is not difficult to show that, for a certain choice of the angles between the axes $\mathbf{t}$,  $\mathbf{u}$, $\mathbf{v}$ and $\mathbf{w}$,
$$\langle \psi \vert S_{\textbf{t}} \otimes S_{\textbf{v}}  + S_{\textbf{t}}  \otimes S_{\textbf{w}}+ S_{\textbf{u}}  \otimes S_{\textbf{v}}-  S_{\textbf{u}}  \otimes  S_{\textbf{w}}  \vert \psi \rangle=2 \sqrt{2} $$
where $S_{\textbf{u}}:=\left( u_x \sigma_x +  u_y \sigma_y + u_z \sigma_z \right)$ is the rescaled spin operator along the $\textbf{u}$ axis (with $\sigma_x$, $\sigma_y$ and  $\sigma_z$ the usual Pauli matrices). 
This violation of inequality (\ref{classic2}) clearly represents a ``no-go theorem'' for hidden-variables theories.\\
For a more detailed discussion of this topic, the reader is referred to Appendix \ref{A2} and   references given there.

\subsection{Quantum mechanics and indeterminism -- the no-signaling lemma}
In this section, we present some simple arguments explaining why quantum theories are intrinsically non-deterministic. \\
We begin our discussion by considering a quantum-mechanical system given as the composition of a subsystem, $S$, to be studied experimentally and another subsystem, $E$, the measuring apparatus, designed to measure certain physical quantities pertaining to $S$. A measurement of a physical quantity is supposed to trigger an ``event'', which, according to Section 2, we can identify with a spectral projection of the selfadjoint operator representing the physical quantity that is measured. 
Suppose that $S\vee{E}$ is prepared in a pure state $\omega$, and that the measuring apparatus $E$ is triggering a sequence, $P_1,...,P_n$, of events that take place in this order, i.e., $P_1$ is the first event registered, $P_2$ the second one, etc. and $P_n$ the last one. 
As will be explained in Section 4, the state $\omega$ enables us to predict the probability that the sequence $P_1,...,P_n$ of possible events  in $S$ is actually observed in an experiment. The crucial observation is then that if the algebra $\mathcal{A}_{S}$ is non-abelian then there are states $\omega$ predicting probabilities for certain sequences $P_1,...,P_n$ of possible events to be observed that do $not$ obey a ``0--1 law'', \textit{even if the state $\omega$ is pure}. 
This shows that the theory is $not$ deterministic.\\
In order for this argument to be convincing, we would have to explain why the system $S\vee{E}$ can be prepared in pure states $\omega$ that do $not$ predict ``0--1 laws'' for certain sequences of events, and why such states are obtained as outputs of physical processes and contain maximal information on the system. Luckily, the theory of preparation of quantum-mechanical systems in rather arbitrary pure states has seen important advances, in recent times, and hence the argument indicated above can be made into a proof of indeterminism in quantum mechanics. 
However, it may be useful to present a more concrete argument that, in addition, clarifies some further salient features of the quantum mechanics of composed systems prepared in an entangled state.\\
We consider a system $S = S_1\vee{S_2}$ consisting of two subsystems, $S_1$ and $S_2$, that are prepared in an entangled state. We imagine that, after preparation, the two systems are sufficiently far separated from each other that they evolve in time more or less independently. We propose to show that if the outcome of a measurement of a physical quantity pertaining to subsystem $S_1$ could be predicted (with certainty) as the result of the unitary time evolution of the initial state of the system $(S_1\vee{S_2})\vee{E}$, consisting of $S_1\vee{S_2}$ composed with some experimental equipment $E$, the resulting state of the entire system would not reproduce the standard quantum-mechanical correlations between measurement outcomes in subsystems $S_1$ and $S_2$ when some physical quantity pertaining to $S_2$ is measured later on. Our reasoning process is based on a result in \cite{FPS}, called the ``no-signaling lemma'', that we sketch below for a simple example.

We consider a static source (e.g., a heavy atom bombarded with light pulses) that can emit a pair of electrons prepared in a spin-singlet state, with orbital wave functions that evolve into conical regions opening to the left and the right of the source, denoted by $L$ and $R$, respectively, under the two-particle time evolution -- up to exponentially small tails extending beyond these conical regions. Sources with approximately these properties can be manufactured. The experimental setup is indicated in the figure below. 

\begin{center}
\begin{tikzpicture}
\draw  [fill=black!20](-1,-1.5) rectangle (0,1.5); 
\draw  [ultra thick,color=red,->](-0.6,-0.5) -- (-0.6,0.5); 
\draw[thick,dotted,color=red,-] (-2,0) -- (12,0);
\draw[thick,->] (5,0) -- (6,0);
\draw[thick,->] (4.8,0) -- (3.8,0);
\draw[fill=black!10] (4.9,0) circle (0.1);
\draw (5.7,0.5) arc (35:145:1);
\draw (4.1,-0.5) arc (-145:-35:1);
\draw[-] (0,1.3) -- (4.07,0.5);
\draw[-] (0,-1.3) -- (4.1,-0.5);
\draw[-] (5.7,0.5) -- (10.5,1.3);
\draw[-] (5.73,-0.5) -- (9.2,-1.2);

\draw[very thick,color=blue,->] (8,0) -- (8,0.5);
\draw[very thick,color=green,->] (7,0) -- (7,-0.5);
\draw[very thick,color=green,->] (-1.5,0) -- (-1.5,0.5);
\draw[very thick,color=blue,->] (-0.45,0) -- (-0.45,-0.5);
\draw(8,0.25) node[right]{$R$};
\draw(7,-0.25) node[right]{$R$};
\draw(-1.5,0.25) node[right]{$L$};
\draw(-0.45,-0.25) node[right]{$L$};
\draw(4,-0.3) node[right]{\small Supp $\psi_{L/R}$};
\draw(11.5,+0.3) node[right]{\small $50\%$};
\draw(11.5,-0.3) node[right]{\small $50\%$};
\draw(-0.5,-1.8) node[below]{\small Spin filter};

\begin{scope}[rotate=0,xshift=9.6cm,yshift=-1.35cm]
\draw (0,0,0)--(1,0,0)--(1,0.5,0)--(0,0.5,0)--cycle; 
\draw  (0,0,1)--(1,0,1)--(1,0.5,1)--(0,0.5,1)--cycle; 
\draw (0,0,0) -- (0,0,1); 
\draw (1,0,0) -- (1,0,1); 
\draw (1,0.5,0) -- (1,0.5,1); 
\draw(0,0.5,0) -- (0,0.5,1); 
\end{scope}

\begin{scope}[rotate=0,xshift=11cm,yshift=1.2cm]
\draw (0,0,0)--(1,0,0)--(1,0.5,0)--(0,0.5,0)--cycle; 
\draw  (0,0,1)--(1,0,1)--(1,0.5,1)--(0,0.5,1)--cycle; 
\draw (0,0,0) -- (0,0,1); 
\draw (1,0,0) -- (1,0,1); 
\draw (1,0.5,0) -- (1,0.5,1); 
\draw(0,0.5,0) -- (0,0.5,1); 
\end{scope}

\draw [->,color=blue](8.5,0) .. controls (11,0.05) and (10.25,0) .. (12.5,1);
\draw [->,color=green](8.5,0) .. controls (11,-0.05) and (10.25,0) .. (11.5,-1);

\draw [-,dashed,color=gray](10.4,-0.85) .. controls (11.2,-0.2) and (11.2,-0.2) .. (11.2,0.85);
\draw [-,dashed,color=gray](10.35,-0.85) .. controls (11,-0.2) and (11,-0.2) .. (11.2,0.95);
\draw [-,dashed,color=gray](10.3,-0.85) .. controls (10.8,-0.2) and (10.8,-0.2) .. (11.2,1.05);
\draw [-,dashed,color=gray](10.1,-0.85) .. controls (10.1,0.3) and (10.1,0.3) .. (11.2,1.1);
\draw [-,dashed,color=gray](10.05,-0.85) .. controls (9.9,0.3) and (9.9,0.3) .. (11.2,1.2);
\draw [-,dashed,color=gray](10.,-0.85) .. controls (9.8,0.3) and (9.8,0.3) .. (11.2,1.3);

\end{tikzpicture}
\end{center}

Let $L^{2}(\mathbb{R}^{3})$ denote the Hilbert space of square-integrable functions on $\mathbb{R}^{3}$ -- orbital wave functions of a single electron --, and let $\mathbb{C}^{2}$ be  the state space of the spin of an electron. In $\mathbb{C}^{2}$, we choose the standard basis, $\vert \uparrow  \rangle$ and $\vert \downarrow \rangle$, of normalized eigenvectors of the 3-component, ($\hbar/2)\sigma_{3}$, of the electron spin operator. We denote the Hilbert space of a system, $S$, consisting of two electrons by $\mathcal{H}_{S}:=\mathcal{A}(L^{2}(\mathbb{R}^{3}) \otimes \mathbb{C}^{2})^{\otimes2}$, where $\mathcal{A}$ is the projection onto anti-symmetric wave functions implementing Pauli's exclusion principle. 
If the two electrons are prepared in a spin-singlet state ($S_{tot} = 0$) the total spin wave function is anti-symmetric, while the total orbital wave function of the electrons is symmetric under exchange of the electron variables. 
We choose a one-electron orbital wave function, $\vert\psi_L\rangle$, evolving into $L$ under the free time evolution (electron-electron interactions are neglected for simplicity) and one, $\vert\psi_R\rangle$, evolving into $R$ -- except for small tails, as mentioned above -- with $\langle \psi_R \vert\psi_L\rangle\approx{0}$. The two-electron system $S$ is then assumed to be prepared in an initial state given by the unit vector
\begin{equation}
\Psi:=\frac{1}{2} \left( \vert\psi_L\rangle \otimes \vert\psi_R\rangle + \vert\psi_R\rangle \otimes \vert\psi_L\rangle\right) \bigotimes \left( \vert \downarrow \rangle \otimes \vert \uparrow \rangle - \vert \uparrow \rangle \otimes \vert \downarrow \rangle\right)
\end{equation}

The experiment sketched in the above figure is designed to measure the 3-component of the spin of the electron evolving into $L$ with the help of a spin filter. This filter absorbs an electron penetrating it if its spin is ``down'' (i.e., if its spin wave function is given by $\vert \downarrow \rangle$), and it lets the electron pass through it if its spin is ``up'' (i.e., if its spin wave function is given by $\vert \uparrow \rangle$). Ferromagnetic metallic films magnetized in the 3-direction can be used as such filters. 
Far away from the source, in the region $R$, a Stern-Gerlach-type experiment may be performed to measure some component of the spin of the electron evolving into $R$, after the spin measurement on the electron evolving into $L$ has been completed. The entire experimental equipment used to do these two measurements represents a quantum-mechanical system denoted by $E$. The total system to be analyzed here is the composition, $S\vee{E}$, of $S$ and $E$. For simplicity, let us suppose that the 3-component of the spin is measured in the region $R$. We denote by $\uparrow_{L/R}$ the event that an electron with spin ``up'' is observed on the left/right. A similar notation is used for spin ``down''.

\textbf{Fact:} We assume that the system $S$ is prepared in a spin-singlet state of the form described above. Then if an experimentalist, called Alice, observes $\uparrow_{L}$ in her laboratory she predicts that her colleague, called Bob, will observe $\downarrow_{R}$ -- and, for all we know about such experiments, he sure will. Similarly, if  $\downarrow_{L}$ is observed by Alice she predicts that Bob will observe $\uparrow_{R}$.

Let us assume, temporarily, that the quantum-mechanical description of $S\vee{E}$ has a realistic interpretation. If this assumption were legitimate then the experimental fact described above would have to emerge as the consequence of some unitary time evolution applied to ``typical'' initial states of $S\vee{E}$. It is shown in \cite{FPS} that, under physically very plausible assumptions on the interactions between the electrons and the spin filter, this is impossible. (The Stern-Gerlach experiment is described, for simplicity, by an external magnetic field turned on in the region $R$, very far away from the source.)

Let $\mathcal{H}_{S}$ denote the Hilbert space of state vectors of the two-electron system $S$ and $\mathcal{H}_{E}$ the one of the spin filter $E$. We choose an initial state, $\Phi\in\mathcal{H}_{S}\otimes\mathcal{H}_{E}$, of the composed system at time $t=0$ of the form
\begin{equation}
\Phi:=\sum_{\alpha}\Psi_{\alpha}\otimes\chi_\text{filter}^{\alpha},
\end{equation}
where the vectors $\Psi_{\alpha}$ are spin-singlet two-electron wave functions of the form of the vector $\Psi$ defined in Eq. (1), and the vectors $\chi_\text{filter}^{\alpha}$ all belong to the same sector $\mathcal{H}_{E}$ of the spin filter.

The dynamics of the composed system $S\vee{E}$ is given by a Hamiltonian
\begin{equation}
H:=H_0+H_I,
\end{equation}
acting on the space $\mathcal{H}_{S}\otimes\mathcal{H}_{E}$, where $H_0$ is the Hamiltonian of the system before the electrons are coupled to the spin filter, and $H_I$ describes the interactions between the electrons and the filter. The operator $H_I$ is localized in a compact region around the filter (in a sense made precise in \cite{FPS}). The time evolution of $\Phi$ in the Schr\"odinger picture is given by
$$\Phi_t:=e^{-itH} \Phi.$$
By $\mathcal{\bf{S}}_{R}$ we denote the spin operator localized in the region $R$; see \cite{FPS}.

\begin{lemma} (``No-signaling lemma'')

Under certain physically plausible hypotheses on the operator $H_I$ and on the choice of the initial state $\Phi$ of $S\vee{E}$ described in \cite{FPS},
\begin{equation}
\label{right}
(\Phi_t, \mathcal{\bf{S}}_{R} \Phi_t)\approx0,
\end{equation}
 for all $t> 0$; (the estimate on the left side being uniform in $t$).
\end{lemma}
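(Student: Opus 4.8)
The plan is to show that $(\Phi_t,\mathcal{\bf{S}}_{R}\Phi_t)$ stays close to its value at $t=0$ and that this initial value vanishes. First I would verify the initial condition. Writing $\Phi=\sum_\alpha \Psi_\alpha\otimes\chi_\text{filter}^\alpha$ with each $\Psi_\alpha$ of the product form (symmetric orbital)$\otimes$(spin singlet), and writing the spin-in-$R$ observable as $\mathcal{\bf{S}}_{R}=\sum_{i}\mathbf{s}^{(i)}\chi_R(x_i)$ (spin of particle $i$ weighted by the indicator that it sits in $R$), the position factor $\chi_R(x_i)$ and the spin factor $\mathbf{s}^{(i)}$ act on different tensor slots. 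Hence each matrix element $(\Psi_\alpha,\mathcal{\bf{S}}_{R}\Psi_\beta)$ factorizes into an orbital overlap times the singlet expectation $(\chi_\text{singlet},\mathbf{s}^{(i)}\chi_\text{singlet})$, which is $0$ because the singlet is rotation invariant. Thus $(\Phi,\mathcal{\bf{S}}_{R}\Phi)=0$ exactly, up to the overlap $\langle\psi_L|\psi_R\rangle\approx0$, which I would absorb into the error.

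Next I would isolate the effect of the filter interaction by comparing the full dynamics with the free one. With $U_t=e^{-itH}$, $U^0_t=e^{-itH_0}$ and the Heisenberg-evolved observables $\mathcal{\bf{S}}_{R}(t)=U_t^{*}\mathcal{\bf{S}}_{R} U_t$, $\mathcal{\bf{S}}_{R}^{0}(t)=(U^0_t)^{*}\mathcal{\bf{S}}_{R} U^0_t$, a Duhamel expansion gives
\[
\mathcal{\bf{S}}_{R}(t)-\mathcal{\bf{S}}_{R}^{0}(t)=i\int_0^t U_s^{*}\,[\,H_I,\mathcal{\bf{S}}_{R}^{0}(t-s)\,]\,U_s\,ds .
\]
Because $H_0$ does not rotate spins and keeps the electron-spin factor equal to the singlet regardless of how it entangles orbital and filter degrees of freedom, the same factorization as above yields $(\Phi,\mathcal{\bf{S}}_{R}^{0}(t)\Phi)=0$ for every $t$. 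So the whole quantity equals the correction term
\[
(\Phi_t,\mathcal{\bf{S}}_{R}\Phi_t)=i\int_0^t (\Phi_s,[\,H_I,\mathcal{\bf{S}}_{R}^{0}(t-s)\,]\Phi_s)\,ds,\qquad \Phi_s:=U_s\Phi ,
\]
and everything is reduced to the single commutator $[H_I,\mathcal{\bf{S}}_{R}^{0}(\tau)]$.

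The heart of the matter is to bound this commutator. At $\tau=0$ it vanishes on the nose: if $H_I$ is a potential supported in a compact neighbourhood of the filter inside the cone $L$, then $[H_I,\chi_R(x_i)]=0$ (both are functions of position with disjoint support) and $[H_I,\mathbf{s}^{(i)}]\chi_R(x_i)=0$ (the first factor lives in $L$, the second in $R$), so $[H_I,\mathcal{\bf{S}}_{R}]=0$. For $\tau>0$ the free evolution $\mathcal{\bf{S}}_{R}^{0}(\tau)=\sum_i\mathbf{s}^{(i)}\,e^{i\tau H_0}\chi_R(x_i)e^{-i\tau H_0}$ spreads the support of $\chi_R$, and here the non-relativistic instantaneous propagation is the genuine obstacle: as an operator the commutator need not be small, so one cannot work in operator norm. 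I would instead estimate the matrix element directly, using propagation estimates for both dynamics: the right-moving packet stays inside $R$, while the left-moving packet and its absorbed or transmitted remnants stay near or beyond the filter in $L$, each with tails decaying exponentially in $\mathrm{dist}(L,R)$. Pairing $H_I\Phi_s$ (nonzero only where the left electron meets the filter) against $\mathcal{\bf{S}}_{R}^{0}(t-s)\Phi_s$ (essentially the spin of the packet in $R$) then produces a bound governed by these exponentially small tails together with $\langle\psi_L|\psi_R\rangle$.

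The last point to control is uniformity in $t$. Since $\int_0^t\|H_I\Phi_s\|\,ds$ is tied to the probability that the left electron lies within interaction range of the filter, which is integrable because the electron crosses the filter essentially once, the $s$-integral does not accumulate, and the final bound on $|(\Phi_t,\mathcal{\bf{S}}_{R}\Phi_t)|$ is an $O(e^{-c\,\mathrm{dist}(L,R)})$-type quantity independent of $t$. I expect the delicate step to be precisely this uniform-in-$t$ commutator estimate: one must show that, despite the unbounded propagation speed of $e^{-i\tau H_0}$, the freely back-evolved indicator $e^{i\tau H_0}\chi_R e^{-i\tau H_0}$ develops only exponentially small weight in the filter region over the entire time range, and that the full propagator $U_s$ does not transport probability from $R$ into that region. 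These are exactly the ``physically plausible hypotheses on $H_I$ and on $\Phi$'' referred to in the statement, and making them quantitative is where the real work lies.
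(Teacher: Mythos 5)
The first thing to note is that the paper itself does not prove this lemma: it is quoted from \cite{FPS}, with the relevant hypotheses on $H_I$ and on $\Phi$ left unstated here, so there is no in-paper proof to compare your argument against; what follows judges your sketch on its own merits. Your skeleton (vanishing of the initial value via the singlet structure, Duhamel comparison with the uncoupled dynamics, locality of $H_I$ played against the localization of $\mathcal{\bf{S}}_{R}$) is a sensible plan, and the algebraic steps you actually carry out are correct: the Duhamel identity is right, $[H_I,\mathcal{\bf{S}}_{R}]=0$ holds by disjointness of position supports, and $(\Phi,\mathcal{\bf{S}}_{R}\Phi)=0$ holds because the orbital/spin factorization of each $\Psi_\alpha$ pairs $\chi_R(x_i)$ with the vanishing singlet expectation of $\mathbf{s}^{(i)}$. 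One caveat already at this stage: the paper models the Stern--Gerlach apparatus by a magnetic field turned on in $R$, and that field belongs to the uncoupled dynamics, so your premise that ``$H_0$ does not rotate spins and keeps the spin factor equal to the singlet'' is false as stated; the conclusion $(\Phi,\mathcal{\bf{S}}_{R}^{0}(t)\Phi)\approx0$ can still be rescued, but only by a cancellation between the spin-up and spin-down deflected components, not by invariance of the singlet.

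The genuine gaps sit in the two estimates that carry all the weight. First, your uniformity-in-$t$ argument rests on $\int_0^t\Vert H_I\Phi_s\Vert\,ds$ staying bounded because ``the electron crosses the filter essentially once.'' But the filter in this Gedanken experiment \emph{absorbs} spin-down electrons; the absorbed branch of the wave function remains inside the support of $H_I$ for all later times, so $\Vert H_I\Phi_s\Vert$ need not decay and the time integral can grow linearly in $t$ --- and absorption is one of the two branches of the singlet, hence exactly a regime the lemma must cover. Second, the claimed exponential smallness of the pairing of $H_I\Phi_s$ against $\mathcal{\bf{S}}_{R}^{0}(t-s)\Phi_s$ is not justified: writing that matrix element as $\bigl(e^{-i(t-s)H_0}H_I\Phi_s,\,\mathcal{\bf{S}}_{R}\,e^{-i(t-s)H_0}\Phi_s\bigr)$, the part of $H_I\Phi_s$ that has been \emph{reflected} by the filter carries momenta directed back toward the source, and over the times $\tau=t-s$ occurring in the integral it can propagate ballistically into $R$; the needed smallness is therefore not a matter of exponentially small tails of fixed wave packets, but must come from additional structure --- most plausibly conservation of the $3$-component of the total electron spin together with the singlet form of the initial state, which is precisely the symmetry the paper invokes in its discussion immediately after the lemma. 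Without an input of that kind your Duhamel term cannot be closed, so a proof along your lines would not merely be incomplete but would fail at this step.
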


Assuming that, for a judicious choice of the state vectors $\lbrace\chi_\text{filter}^{\alpha}\rbrace$ of the spin filter, the states $\lbrace\Phi_{t}\rbrace_{t>0}$ describe an electron evolving into the region $L$ that passes through the spin filter, i.e., that $\{\Phi_{t}\}_{t>0}$ predicts the event $\uparrow_{L}$ to happen, we run into a contradiction between the statement of Lemma 10 and the experimental $\bf{Fact}$ concerning quantum-mechanical correlations, namely that the event $\uparrow_{L}$ is overwhelmingly correlated with the event $\downarrow_{R}$, which would imply that
\begin{equation}
\label{wrong}
(\Phi_t, \mathcal{\bf{S}}_{R} \Phi_t)\approx-\frac{\hbar}{2} \bf{e}_{3},
\end{equation}
in contradiction with Eq. (\ref{right}). (Similar reasoning applies when $\uparrow_{L}$ is replaced by $\downarrow_{L}$.)\\
Put differently, assuming that the propagator $e^{-itH}$ of $S\vee{E}$ commutes with the 3-component of the total spin operator of the two electrons (as one would guess from symmetry considerations), the assumption that the states $\lbrace\Phi_{t}\rbrace_{t>0}$ predict the event $\uparrow_{L}$ (or $\downarrow_{L}$) to happen, together with the conservation of the 3-component of the total spin, contradicts Eq. (6). 
Thus, a realistic interpretation of quantum mechanics, in the sense that the time evolution of pure states in the Schr\"odinger picture predicts which events will happen, is apparently untenable. Quantum mechanics only predicts $probabilities$ of events, even if the initial state of the entire system is pure. This is $not$ an expression of incomplete knowledge of the system, but is an intrinsic feature of the theory. It will be explained in Section 4 how probabilities of (sequences of) events in a general quantum-mechanical system are calculated, given a state of the system. In the particular example just studied, all that quantum mechanics predicts is that if the two electrons are prepared in a spin-singlet (i.e., entangled, but pure) state then the events $\uparrow_{L}$ and $\downarrow_{L}$ both have probability $\frac{1}{2}$. 

Note that, in Lemma 5, no assumption of Einstein causality, whose use is totally out of place in non-relativistic quantum mechanics, or anything like that has to be invoked.

Quite apart from its consequences concerning indeterminism, we think that Lemma 5 is of some indepedent interest, and this is why we are reporting it here. (The techniques used to prove Lemma 5 can be used, for example, to establish upper bounds on the amount of dynamically generated entanglement between a physical system and a piece of equipment located far away from the system.)

\subsection{Quantization and classical (mean-field) limit}
In this section, we recall the Heisenberg-Dirac recipe of how to ``quantize'' a classical Hamiltonian system with an affine phase space and the reverse process of passing to a ``classical'' Hamiltonian regime of quantum theory. The classical limit of wave mechanics was first analyzed by Schr\"odinger, in 1926, using coherent states; see \cite{Schrodinger}. His analysis was put on rigorous mathematical grounds by Hepp \cite{Hepp} and followers. The up-shot of their results is that ``time evolution and quantization commute, up to error terms that tend to 0, as the deformation parameter (conventionally $\hbar$) approaches 0''. 
For more recent results in this direction and references to the literature, see \cite{FRKP}, \cite{FRKP'}. A detailed exposition of these matters goes beyond the scope of these notes.

 The main purpose of this section is to explain how atomistic theories of matter can be interpreted as the  ``quantization'' of continuum theories. We do this by considering some rather simple, but physically important examples. The reason for sketching these things is to convince the reader that continuum theories of matter, such as the Gross-Pitaevsky theory of a Bose gas, tend to be realistic and deterministic, even if $\hbar$ appears in the equations, and that the loss of realism and determinism in non-relativistic quantum mechanics arises as the result of a conspiracy between the quantum-mechanical nature of matter and its atomistic constitution. Our analysis is based on results in \cite{Hepp}, \cite{FRKP}, \cite{FRKP'}.

It is a well known fact, first established in the 1970's in \cite{Braun}, \cite{Neunzert}, that the mean-field limit of the Newtonian mechanics of many weakly interacting point-particles is given by the Vlasov theory of interacting gases, which turns out to be a $Hamiltonian$ continuum theory of matter.  In \cite{FRKP}, \cite{FRKP'}, the converse has been established:  The classical Newtonian mechanics of systems of many interacting point-particles, i.e., an atomistic theory of matter, can be viewed as the ``quantization'' of Vlasov theory.\\
To explain these things, we begin by considering the Newtonian equations of motion of a system of $n=1,2,...$ identical interacting point-particles,
\begin{equation}
\label{mot}
\frac{d^2q_i}{dt^2} = - g \sum_{j \neq i } \nabla W (q_i - q_j) - \nabla V (q_i),
\end{equation}
where $q_{i}\equiv q_{i}(t)\in\mathbb{R}^{3}$ is the position of the $i^{th}$ particle, $i=1,...,n$, $W$ is a (e.g., bounded and smooth) translation-invariant two-body potential, $g$ is a coupling constant, and $V$ (a smooth and polynomially bounded function on $\mathbb{R}^{3}$) is the potential of an external force; the mass of the particles is set to 1.
To a solution $(q_1 (t),...,q_n(t))$ of the classical equations of motion (\ref{mot}), there corresponds an empirical measure on the one-particle phase space $\mathbb{R}^{6}$ given by
\begin{equation}
\label{emp}
\mu_{n} (t)=\frac{1}{N} \sum_{i=1}^{n} \delta_{(q_{i}(t),v_i(t))},
\end{equation}
where $v_{i}(t):=\frac{dq_{i}(t)}{dt}$ is the velocity of the $i^{th}$ particle.
The number of particles, $n$, is related to the coupling constant $g$ by $n\propto{g^{-1}}$. We set $n=\nu N$, where $N:=N_A$ is Avogadro's number and $\nu$ is the number of moles in the gas. For simplicity, we identify $g$ with $1/N$.

\begin{theorem}
(Mean-Field Limit, \cite{Braun}, \cite{Neunzert})
We assume that $V,W \in \mathcal{C}^{2}(\mathbb{R}^{3})$. Let  $(q_1 (t),...,q_n(t))$  be a solution of (\ref{mot}). Then the measure $\mu_{n} (t)$ defined in (\ref{emp}) is a weak solution of the Vlasov-Poisson equation
\begin{equation}
\label{Vlas}
\frac{\partial \mu}{\partial t} + v \cdot \nabla_{q} \mu - (\nabla_q V_{eff} \left[ \mu \right]) \cdot \nabla_{v} \mu =0
\end{equation}
where $(q,v)\in \mathbb{R}^{6}$, with $v$ the velocity of an element of gas at the point $q$, and
$$V_{eff} \left[ \mu \right] (q) := V(q)+ \int_{\mathbb{R}^{6}} W(q-r) d\mu (r,v)$$
Moreover, if $\mu_{n} (t=0) \underset{n \rightarrow \infty}{\rightarrow} \mu_{0}$, in the weak sense, then  $\mu_{n} (t)\underset{n \rightarrow \infty}{\rightarrow}  \mu (t)$,  in the weak sense, for any $t \geq 0$, where $\mu(t)$ is a weak solution of the Vlasov-Poisson equation with initial condition $\mu_{0}$.
\end{theorem}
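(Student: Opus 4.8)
The plan is to prove the theorem in two stages: first, that the empirical measure $\mu_n(t)$ is a weak (distributional) solution of the Vlasov--Poisson equation for each fixed $n$; and second, that weak convergence of initial data propagates to weak convergence at all later times. These are genuinely different assertions, and I would treat them separately.

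For the first stage I would test $\mu_n(t)$ against an arbitrary $\varphi\in\mathcal{C}^1_c(\mathbb{R}^6)$ and differentiate in time. Writing $\langle\mu_n(t),\varphi\rangle=\frac{1}{N}\sum_i\varphi(q_i(t),v_i(t))$, the chain rule gives
\begin{equation}
\nonumber
\frac{d}{dt}\langle\mu_n(t),\varphi\rangle=\frac{1}{N}\sum_{i}\Bigl(v_i\cdot\nabla_q\varphi+\dot v_i\cdot\nabla_v\varphi\Bigr)(q_i,v_i).
\end{equation}
Now I substitute the equation of motion (\ref{mot}), which says $\dot v_i=-g\sum_{j\neq i}\nabla W(q_i-q_j)-\nabla V(q_i)$. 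With $g=1/N$ and $n=\nu N$, the interaction term reassembles into $\nabla_q V_{eff}[\mu_n](q_i)$ up to the self-interaction term $j=i$, which is a single missing summand contributing an $O(1/N)$ error (finite because $\nabla W$ is bounded); for the weak-solution statement at fixed $n$ one simply notes it matches the convolution against $\mu_n$ itself. This reproduces exactly the weak form of (\ref{Vlas}) tested against $\varphi$, establishing stage one essentially by direct computation.

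The second stage is the substantive one, and here the natural tool is a Dobrushin-type stability estimate in the Wasserstein (Monge--Kantorovich) distance $W_1$. The strategy is to show that the solution map of Vlasov--Poisson is Lipschitz-continuous with respect to $W_1$: if $\mu(t),\tilde\mu(t)$ are two weak solutions with initial data $\mu_0,\tilde\mu_0$, then $W_1(\mu(t),\tilde\mu(t))\le e^{Ct}\,W_1(\mu_0,\tilde\mu_0)$, where $C$ depends on the Lipschitz constants of $\nabla V$ and $\nabla W$ (finite since $V,W\in\mathcal{C}^2$). The cleanest route is to couple the characteristic flows: two measures are transported along the Vlasov characteristics $\dot q=v,\ \dot v=-\nabla V_{eff}[\mu](q)$, and one compares trajectories started from an optimal coupling of the initial data, using Gr\"onwall's inequality. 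Applying this with $\tilde\mu_0=\mu_0$ the genuine initial datum and $\mu_n(0)$ the empirical one, and using that $\mu_n(0)\to\mu_0$ weakly (equivalently, in $W_1$ on a bounded region), the stability estimate forces $\mu_n(t)\to\mu(t)$ in $W_1$, hence weakly, uniformly on compact time intervals.

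\textbf{The main obstacle} is the stability estimate itself, and specifically controlling the force term. The difficulty is that $\nabla V_{eff}[\mu]$ depends on $\mu$, so comparing two characteristic flows requires estimating $\|\nabla_q V_{eff}[\mu](q)-\nabla_q V_{eff}[\tilde\mu](\tilde q)\|$; this splits into a term Lipschitz in the spatial variable and a term measuring $\|(\nabla W*\mu)-(\nabla W*\tilde\mu)\|$, and the latter is exactly where the $W_1$-distance enters, using that $\nabla W$ being Lipschitz makes $q\mapsto\nabla W(q-\cdot)$ an admissible test function. The hypothesis $W\in\mathcal{C}^2$ (bounded second derivatives) is what makes this convolution difference controllable by $W_1(\mu,\tilde\mu)$, closing the Gr\"onwall loop. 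A secondary subtlety is that the empirical measure is singular while $\mu_0$ may be absolutely continuous, so one must work with $W_1$ rather than any stronger norm; one should also confirm that the characteristic flow is globally well-defined, which follows from the polynomial bound on $V$ and the boundedness of $W$ guaranteeing no finite-time blow-up of the velocities.
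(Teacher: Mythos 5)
The paper does not prove this theorem at all: it is quoted as a known result, with the proof residing in the cited works of Braun--Hepp and Neunzert. Your two-stage plan --- (i) the empirical measure is an exact weak solution, verified by testing against $\varphi\in\mathcal{C}^{1}_{c}(\mathbb{R}^{6})$ and inserting the equations of motion (\ref{mot}), and (ii) a Dobrushin-type stability estimate for the Vlasov flow in the Wasserstein distance, closed by Gr\"onwall along coupled characteristics --- is precisely the argument of those references, so the route is the right one and the second stage correctly identifies where the hypothesis $W\in\mathcal{C}^{2}$ enters (Lipschitz control of the convolution force by the Wasserstein distance).

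Two points are genuine gaps as written. First, the self-interaction term: the Newtonian force in (\ref{mot}) omits $j=i$, while the convolution $\int\nabla W(q_i-r)\,d\mu_n(r,v)$ includes it, the discrepancy being $\frac{1}{N}\nabla W(0)$ per particle. Your resolution --- that the truncated sum ``matches the convolution against $\mu_n$ itself'' --- is circular: it matches \emph{only if} $\nabla W(0)=0$. The repair is to invoke the evenness of the two-body potential, $W(-x)=W(x)$ (physically forced by action--reaction symmetry, and assumed in the cited literature), which gives $\nabla W(0)=0$; without it, the theorem's first assertion fails at fixed $n$ and only an $O(1/N)$-approximate version survives, which would then have to be threaded through stage two. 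Second, your stability argument compares $\mu_n(t)$ against a weak solution $\mu(t)$ with initial datum $\mu_0$ whose \emph{existence} is itself part of the claim being proven: for measure-valued data one must either construct $\mu(t)$ by Dobrushin's fixed-point argument on the characteristic flow, or obtain it as the $W_1$-limit of the empirical solutions --- which are Cauchy in $W_1$ by your own estimate applied to pairs $(\mu_n,\mu_m)$ --- and then check that the weak formulation passes to the limit (it does, since the force kernel is Lipschitz, but this must be said). A last bookkeeping point: weak convergence is equivalent to $W_1$-convergence only given uniform control of first moments, and the global-in-time characteristics and Gr\"onwall constant require $\nabla V,\nabla W$ bounded and globally Lipschitz, which is slightly more than $V,W\in\mathcal{C}^{2}(\mathbb{R}^{3})$ alone; these are exactly the standing assumptions of Braun--Hepp and should be stated rather than inferred.
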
 
Next, we  explain how to ``quantize'' Vlasov theory to arrive at the Newtonian mechanics of $n=1,2,...$ identical interacting point-particles. We assume that the measure $\mu$ describing a state of Vlasov theory is absolutely continuous with respect to Lebesgue measure $d^{3}qd^{3}v$, so that it has a non-negative density (Radon-Nikodym derivative) $f\in L^{1}(\mathbb{R}^{6})$, i.e.,  
$$d\mu:=d\mu^{f}=f(q,v)d^3 q d^3 v,  \text{ with} \int_{\mathbb{R}^{6}} d \mu^{f}=\nu.$$
Then the solution $\mu^{f}(t)$ of (\ref{Vlas}), with initial condition $\mu^{f}$, has a non-negative density $f_t\in L^{1}(\mathbb{R}^6)$. Assuming that $f \in \mathcal{C}^{1}(\mathbb{R}^6)$, the measure $\mu^{f_t}$ is a strong solution of the Vlasov-Poisson equation with initial condition $\mu^{f}$, and $f_t  \in \mathcal{C}^{1}(\mathbb{R}^6)$.

Because a density $f$ is non-negative, it can be written in the form $f(q,v)=\bar{\alpha}(q,v) \alpha(q,v) $, for some (complex-valued) half-density $\alpha\equiv\alpha^{f} \in L^{2}(\mathbb{R}^6)$, with $\bar{\alpha}$ the complex conjugate of $\alpha$. Clearly, the (local) phase of the half-density $\alpha$ is arbitrary and does not have any physical meaning. Formulating Vlasov theory in terms of half-densities thus leads to a gauge symmetry of the second kind, 
\begin{equation}
\label{gauge}
\alpha^{\sharp}(q,v)\mapsto {\text{exp}}[\pm i\phi(q,v)]\alpha^{\sharp}(q,v),
\end{equation}
$\phi(q,v)\in\mathbb{R}$, $\alpha^{\sharp}= \alpha$ or $\bar{\alpha}$,
that leaves all physical quantities invariant.
From now on, we assume that $\alpha$ belongs to the complex Sobolev space $\Gamma_{V}:=H^{1}(\mathbb{R}^{6})$, which we interpret as an infinite-dimensional complex affine phase space. The pairs $(\alpha, \bar\alpha)$ can be interpreted as complex coordinates on $\Gamma_{V}$. 
Phase space $\Gamma_{V}$ is equipped with the symplectic 2-form
$$\sigma:= i \int d^3q d^3v \text{ }  d\bar{\alpha} (q,v) \wedge d\alpha (q,v), $$
which gives rise to the Poisson brackets 
\begin{equation}
\lbrace \alpha^{\sharp} (q,v), \alpha^{\sharp} (q',v') \rbrace= 0, \text{ } \text{ } \lbrace \alpha(q,v), \bar{\alpha}(q',v') \rbrace= i \delta(q-q') \otimes \delta(p-p').
\end{equation}

We introduce a Hamilton functional
\begin{equation}
\begin{split}
\label{Hamilton}
H(\alpha,\bar{\alpha}):=
i \int d^{3}q d^{3}v \text{ } &\bar{\alpha} (q,v) \left[ \frac{}{}(-v \cdot \nabla_q + \nabla_q V \cdot \nabla_v  + \right.\\
& \left. \left( \int d^{3}q' d^{3}v' \text{ }\nabla W (q-q') \mid \alpha(q',v')\mid^{2} \right) \cdot \nabla_v\right] \alpha(q,v)
\end{split}
\end{equation}
The Hamiltonian equations of motion are
\begin{equation}
\label{motion}
\frac{d}{dt}\alpha^{\sharp} _{t} (q,v):=\lbrace H, \alpha^{\sharp}_{t} (q,v)\rbrace
\end{equation}
{\bf{Fact:}} An easy but important observation is that if $(\alpha_t,\bar{\alpha}_t)$ is a solution of the Hamiltonian equations of motion (\ref{motion}) then 
$$f_t(q,v):=\bar{\alpha}_{t}(q,v)\cdot{\alpha}_{t}(q,v)$$
is a strong solution of the Vlasov-Poisson equation, in the sense that the measure $\mu^{f_t}$ solves Eq.(\ref{Vlas}). Invariance of the Hamilton functional $H$ under global gauge transformations implies that the quantity
$$\int\bar{\alpha}_{t}(q,v)\cdot \alpha_{t}(q,v) \text{ }d^{3}qd^{3}v\equiv \int f_{t}(q,v)\text{ } d^{3}qd^{3}v$$ is conserved, as follows from Noether's theorem. The Hamiltonian vector field associated with the functional on the left side generates global phase transformations.
Thus, by factorizing densities, $f$, into a product of half-densities, $\alpha$, with their complex conjugates, $\bar{\alpha}$, we have succeeded in finding a Hamiltonian formulation of Vlasov theory on an infinite-dimensional complex phase space. This formulation gives rise to the local gauge symmetry described in Eq. (\ref{gauge}).
Physical quantities, i.e., ``observables'', must be $invariant$ under the gauge transformations (\ref{gauge}). They only depend on the densities $f$, but $not$ on the phases of the half-densities $\alpha$. Hence they have the form
$$A(f;\underline{w})=\sum_{n=1}^{\infty} \int  w_n(\underline{q}_n,\underline{v}_n) \prod_{i=1}^{n}  f(q_i,v_i)d^{3}q_{i} d^{3}v_{i}$$
where $(\underline{q}_n,\underline{v}_n):=(q_1,...,q_n,v_1,...,v_n)$, and the kernels $w_n$ are continuous functions on $\mathbb{R}^{6n}$ whose sup-norms tend rapidly to 0, as $n\rightarrow\infty$. In complex coordinates $(\alpha,\bar{\alpha})$,
\begin{equation}
\label{obs}
A(\alpha, \bar{\alpha};\underline{w})=\sum_{n=1}^{\infty} \int w_n(\underline{q}_n,\underline{v}_n) \prod_{i=1}^{n}  \vert \alpha(q_i,v_i) \vert ^{2} d^{3}q_{i} d^{3}v_{i}.
\end{equation}
These observables generate an abelian algebra, $\mathcal{A}^{C}$, of functions on the phase space $\Gamma_{V}$.

We proceed to $quantize$ Vlasov theory, using its Hamiltonian formulation. Since the phase space $\Gamma_{V}$ of this theory is an affine complex space, we can follow Dirac's prescription and replace \textit{Poisson brackets} by $iN \times {commutators}$, with $\frac{1}{N}$ playing the role of $\hbar$. The half-density $\alpha$ is then replaced by an annihilation operator, $a_{N}$, and $\bar{\alpha}$ by a creation operator, $a_{N}^{*}$,  and we require the following canonical commutation relations
\begin{equation}
\label{com}
\left[a_{N}^{\sharp} (q,v),a_{N}^{\sharp} (q',v') \right]=0, \text{ }  \text{ }   \left[ a_N (q,v), a_{N}^{*}  (q',v')\right]=\frac{1}{N} \delta(q-q') \otimes \delta(v-v')
\end{equation}
The annihilation- and creation operators, $a_N$ and $a^{*}_{N}$, are operator-valued distributions acting on \textit{Fock space}, $\mathcal{F}$, a Hilbert space defined by
\begin{equation}
\label{Fock}
\mathcal{F}:=\bigoplus_{n=0}^{\infty} \mathcal{F}^{(n)},
\end{equation}
where the n-particle subspace is given by
$$\mathcal{F}^{(n)}\simeq L^{2}(\mathbb{R}^{6n})_{sym},$$
the space of square-integrable functions on $\mathbb{R}^{6n}$ symmetric under arbitrary exchanges of arguments $(q_i,v_i)$ and $(q_j,v_j)$, $i,j=1,...,n$, and $\mathcal{F}^{0}:=\mathbb{C}\mid0\rangle$, where $\mid0\rangle$ is the \textit{vacuum vector} in $\mathcal{F}$, with $a_{N}(q,v)\mid0\rangle \equiv 0$ and $\Vert\mid0 \rangle\Vert = 1$. The quantization of Vlasov theory is supposed to inherit the local gauge symmetry of the original theory, which acts on annihilation- and creation operators by
\begin{equation}
\label{qgauge}
a^{\sharp}_{N}(q,v) \mapsto e^{\pm i\phi(q,v)} a^{\sharp}_{N}(q,v),
\end{equation}
$\phi(q,v)\in\mathbb{R}$, $a_N^{\sharp}= a_N$ or $a_N^{*}$. Thus, the ``observables'' of the quantized Vlasov theory are given by operators of the form
\begin{equation}
\label{qobs}
\hat{A}(a_{N}, a_{N}^{*};\underline{w})=\sum_{n=1}^{\infty} \int w_n(\underline{q}_n,\underline{v}_n) :\prod_{i=1}^{n} a^{*}_{N}(q_i,v_i) a_{N}(q_i,v_i) d^{3}{q}_i d^{3}{v}_i: 
\end{equation}
where $:(\cdot):$ denotes the usual Wick ordering, (all $a^{*}_{N}$'s to the left of all $a_{N}$'s). These operators only depend on the particle density $a^{*}_{N}(q,v)a_{N}(q,v)$, $(q,v)\in\mathbb{R}^{6}$, but are $independent$ of the choice of phases of the annihilation- and creation operators. They are therefore invariant under the gauge transformations (\ref{qgauge}). It is not difficult to verify that, as a consequence of gauge invariance, all the ``observables'' $\hat{A}$ commute with each other. Thus, they generate an $abelian$ algebra, $\mathcal{A}^{Q}$, of operators on $\mathcal{F}$.\\
Next, we study the dynamics of the quantization of Vlasov theory. Vectors, $\Psi$, in Fock space $\mathcal{F}$ are sequences,
$$\Psi=\lbrace\Psi^{(n)}\rbrace_{n=0}^{\infty},$$
where 
$$ \Psi^{(n)}=\int\psi^{(n)}(\underline{q}_{n},\underline{v}_{n})\prod_{i=1}^{n}a^{*}_{N}(q_i,v_i)d^{3}q_id^{3}v_i\mid0\rangle$$
with $\psi^{n}\in L^{2}(\mathbb{R}^{6n})_{sym}$. 
We propose to describe the time evolution of such state vectors in the Schr\"odinger picture. The Schr\"odinger equation takes the form

\begin{equation}
\label{Schrodinger}
\frac{i}{N} \frac{\partial \Psi(t)}{\partial t} =\hat{H}_N  \Psi(t), \Psi(t)\in \mathcal{F},
\end{equation}
where the Hamiltonian, $\hat{H}_N$, of the quantized Vlasov theory is obtained by substituting $\alpha^{\sharp}$ by $a^{\sharp}_N$ in the expression for the classical Hamilton functional and, hence, is given by

\begin{equation}
\begin{split}
\label{qHamiltonian}
\hat{H}_N:= i \int d^{3}q d^{3}v \text{ }& a^{*}_{N} (q,v) \left[ \frac{}{}  -v \cdot \nabla_q + \nabla_q V \cdot \nabla_v  + \right.\\
&\left. \left( \int d^{3}q' d^{3}v' \nabla W (q-q') a^{*}_{N} (q',v') a_{N}  (q',v') \right) \cdot \nabla_v \right] a_{N} (q,v)
\end{split}
\end{equation}
We note that the n-particle subspaces $\mathcal{F}^{(n)}$ are invariant under the dynamics generated by the Hamiltonian $\hat{H}_N$.
It is quite straightforward to show that the n-particle densities
$$f_n(t):=\vert \psi^{(n)} (t) \vert^{2}$$
are solutions of the \textit{Liouville equation}
$$ \partial_{t}f_n(t)= \lbrace H_{N}^{n}, f_n(t)\rbrace,$$
where the Poisson bracket corresponds to the symplectic 2-form 
$\sigma_n= \sum_{i=1}^{n} dq^{i} \wedge dv^{i}$ 
on n-particle phase space $\mathbb{R}^{6n}$, and the n-particle Hamilton function, $H_{N}^{n}(\underline{q},\underline{v})$, is given by 
$$H_{N}^{n}(\underline{q},\underline{v})=\sum_{i=1}^{n} \left(  \frac{v_{i}^{2}}{2} +V(q_i) \right) +\frac{1}{N} \sum_{i < j}  W(q_i-q_j). $$
 Thus, somewhat surprisingly,  $f_n(t)=\vert\psi^{n}(t)\vert^{2}$  turns out to be the phase-space probability density at time $t$ of a system of $n$ identical interacting point-particles evolving in time according to Newton's equations of motion, for every $n=1,2,...$ In other words, the Newtonian mechanics of an arbitrary number of identical point-particles may be understood as the ``quantization'' of Vlasov theory. However, because the algebra $\mathcal{A}^{Q}$ is $abelian$, Newtonian mechanics is $realistic$ and $deterministic$.

There is a more general way of preserving invariance of physical quantities (``observables'') under the gauge transformations (\ref{gauge}) and (\ref{qgauge}) than the one taken above. It would enable us to generalize our notion of ``observables'', generating a $non-abelian$ algebra $\mathcal{\bar{A}}^{Q}\supset\mathcal{A}^{Q}$. It relies on interpreting $\alpha$ as a section of a complex line bundle and introducing a connection, $\nabla$, on this bundle that gives rise to a notion of parallel transport. Using parallel transport, one can introduce gauge-invariant operators that do not only depend on $f(q,v)=\vert\alpha(q,v)\vert^{2}$ and that, upon quantization, generate a non-commutative algebra $\bar{\mathcal{A}}^{Q}$. This story is related to ``pre-quantization'', and we won't go into it here.\\

The passage from Vlasov- to wave mechanics, in the sense Schr\"odinger originally understood his theory, can be viewed as arising from a ``deformation'' of the factorization of densities $f$ into a product $\bar{\alpha}\cdot{\alpha}$.
 Instead of factorizing $f$ in this way, we may view $f$ as the \textit{Wigner transform} of a wave function, $\psi$, on one-particle $configuration$ space $\mathbb{R}^{3}$. Thus we consider functions, $f^{\hbar}$, on one-particle $phase$ space given by
 \begin{equation}
 \label{Wignertr}
f^{\hbar}(q,v)=\frac{1}{(2 \pi)^{3}}\int d^{3}q' e^{iq'\cdot v} \overline{\psi\left(q- \frac{\hbar q'}{2} \right)}\psi\left(q+ \frac{\hbar q'}{2} \right).
\end{equation}
The function $f^{\hbar}$ is called the Wigner transform of $\psi$. (Unfortunately, though, the Wigner transform of a wave function need not be non-negative.)
Next, let $f^{\hbar}_t$ be the Wigner transform of a wave function $\psi_t$, $t\in\mathbb{R}$, 
where $\psi_t$ solves the Hartree (non-linear Schr\"odinger) equation
\begin{equation}
\label{Hartree} 
i \hbar \partial_{t} \psi_t= \left[ -\frac{\hbar^{2}}{2} \Delta +V \right] \psi_t + \left[ \mid \psi_t \mid^{2} *W \right] \psi_t
\end{equation}
Here $\Delta$ is the Laplacian and the potentials $V$ and $W$ are as above; (the mass of the particles is set to 1).
It has been shown in \cite{NS} that $f^{\hbar}_t$ approaches a solution of the Vlasov-Poisson equation, as $\hbar$ tends to 0, i.e., Vlasov theory can be recovered as the classical limit of Hartree theory.\\ 
It is well known that the Hartree equations for $(\psi,\bar{\psi})$ can be interpreted as the Hamiltonian equations of motion of a Hamiltonian system with infinitely many degrees of freedom, the Hamilton functional corresponding to the energy functional of the Hartree equation. The pairs $(\psi,\bar{\psi})$ may be interpreted as complex coordinates of an infinite-dimensional complex phase space, $\Gamma_H$.
General observables, $A(\psi,\bar{\psi})$, are functionals on $\Gamma_H$ that must be invariant under $global$ gauge transformations,
$$\psi^{\sharp}(q) \mapsto e^{\pm i \phi} \psi^{\sharp}(q), \text{ }\phi\in\mathbb{R},$$
and, thus, must have the form
\begin{equation}
\label{obse}
A(\psi,\bar{\psi};\underline{w}) = \sum_{n=1}^{\infty} \int \left(\prod_{i=1}^{n} \overline{\psi(q_i)} d^{3}q_i\right)w_{n} (\underline{q}_{n},\underline{q'}_{n}) \prod_{j=1}^{n} \psi(q'_j)d^{3}q'_j,
\end{equation}
where the kernels $w_n$ are smooth and of rapid decay at $\infty$ and become ``small'', as $n\rightarrow\infty$. These observables generate an $abelian$ algebra, and the resulting theory is realistic and deterministic. If $W$ vanishes Eq.(\ref{Hartree}) is the usual time-dependent Schr\"odinger equation. Its interpretation as a deterministic Hamiltonian system is the interpretation Schr\"odinger initially wanted to give to his wave mechanics.

Quantization of Hartree theory can be carried out by following Dirac's recipe, in a way very similar to what we have explained above on the example of Vlasov theory. Wave functions $\psi$ are replaced by annihilation operators, $\hat{\psi}_{N}$, and their complex conjugates by creation operators, $\hat{\psi}^{*}_{N}$, satisfying the canonical commutation relations. (Details can be found in  \cite{FRKP}.) The quantum theory so obtained describes a gas of $n=0,1,..$ bosons of mass $m=1$ with two-body interactions given by the potential $\frac{1}{N}W$ and under the influence of an external potential $V$, in the formalism of ``second quantization''. The operators, $\hat{A}$, corresponding to the quantization of the functions $A$ defined in (\ref{obse}) do $not$, in general, commute, because we have only required $global$ gauge invariance of physical quantities. Thus, if the system $S$ is a Bose gas described by the quantization of Hartree theory the algebra $\mathcal{A}_S$ of physical quantities pertaining to $S$ is non-abelian.\\
By coupling $\psi^{\sharp}$ ($\hat{\psi}^{\sharp}_N$) to a U(1)-connection (interpreted, e.g., as an electromagnetic vector potential) one can promote the global gauge symmetry of Hartree theory and of its quantized version to a $local$ gauge symmetry. But the algebra generated by physical quantities of the quantized Hartree theory (i.e., a theory of charged Bose gases) remains non-abelian.\\
In conclusion, the quantization of Hartree theory, viewed as an infinite-dimensional Hamiltonian system, naturally leads to the quantum mechanics of interacting Bose gases, expressed in the formalism of ``second quantization''. Conversely, Hartree theory can be obtained as the mean-field (weak-coupling, or ``classical'') limit of the theory of interacting Bose gases; see \cite{Hepp}, \cite{FRKP'} and references given there.

Our discussion can be usefully summarized in the following diagram.
\tikzstyle{block} = [draw, fill=blue!5, rectangle, 
    minimum height=3em, minimum width=10em]
\tikzstyle{sum} = [draw, fill=red!20, circle, node distance=1cm]
\tikzstyle{input} = [coordinate]
\tikzstyle{output} = [coordinate]
\tikzstyle{pinstyle} = [pin edge={to-,thin,black}]

\begin{center}
\begin{tikzpicture}[auto, node distance=2cm,>=latex']
    \node [sum] (sum) {A};
    \node [block, right of=sum, node distance=3cm] (quantum) {Quantum mechanics};
    \node [block, right of=quantum,
            node distance=5cm] (newton) {Newtonian mechanics};
    \node [block, below of=newton, node distance=2.5cm] (vlasov) {Vlasov mechanics};
     \node [block, left of=vlasov, node distance=5cm] (hartree) {Hartree theory};
     \node [sum, left of=hartree, node distance=3cm] (sum2) {C};

  \draw [->] (quantum) -- (newton);
 \draw [->] (hartree) --  (vlasov);
\draw[->] (3,-0.6) -- (3,-1.8);
\draw[thick,dotted,->] (3.2,-1.8) -- (3.2,-0.6);
\draw[->] (8,-0.6) -- (8,-1.8);
\draw[thick,dotted,->] (8.2,-1.8) -- (8.2,-0.6);

\draw[thick,dotted,->] (6.1,0.2) -- (4.8,0.2);
\draw[thick,dotted,->] (6.1,-2.3) -- (4.8,-2.3);

\draw (10,-3) rectangle (13.8,0.5); 
\draw (3.4,-1) node[below]{$\frac{1}{N}$};
\draw (8.5,-1) node[below]{$\frac{1}{N}$};

\draw (5.5,0.7) node[below]{$\hbar$};
\draw (5.5,-1.8) node[below]{$\hbar$};

\begin{footnotesize}
 \node [sum,right of=newton,node distance=2.4cm] (sum3) {A};
      \node [sum,below of=sum3,node distance=0.8cm] (sum4) {C};
\draw (12.2,0.25) node[below]{Atomistic theories};
\draw (12.31,-0.6) node[below]{  Continuum theories};
\draw[->] (10.4,-1.3) -- (10.4,-1.8);
\draw[->] (10.5,-1.55) -- (11,-1.55);
\draw[thick,dotted,<-] (10.4,-2) -- (10.4,-2.5);
\draw[thick,dotted,<-] (10.5,-2.25) -- (11,-2.25);
\draw (12.21,-1.35) node[below]{Mean-field limit};
\draw (12.21,-2.05) node[below]{Quantization};

\end{footnotesize}

\end{tikzpicture}
\end{center}

\section{Probabilities of histories, dephasing and decoherence}
This section may well be the most important one in these notes. We clarify the notion of ``possible event'' in a system $S$; we introduce the notion of ``interference'' between a possible event and its complement, which refers to a phenomenon typical of quantum theory; related to interference, we must explain what is meant by the ``evidence that a possible event in $S$ can materialize'', given the future.

Most importantly, we have to explain how the empirical probabilities of (time-ordered) sequences of possible events in a system \mbox{-- histories --} can be calculated if we know the state of the system, $S$, composed with its environment, $E$.
Logically, this would oblige us to present an outline of the theory of preparation of states of quantum-mechanical systems, which, however, goes somewhat beyond the scope of these notes and will be described elsewhere; (but see \cite{FGrS, DeRoeck} for recent results relevant in this context). We will conclude this section with a brief outline of the role of ``dephasing'' and ``decoherence'' (which are properties of the time evolution of $S\vee{E}$) in suppressing interference terms between possible events and their complements, thus rendering them complementary, i.e., mutually exclusive, in the classical sense.\\
In this section, the system $E$ represents either the  ``environment'' of the system $S$ or the  ``equipment'' used  by an observer to carry out measurements of physical quantities pertaining to $S$. Equipment is a particular type of environment that can be controlled (to some extent) by an observer in  order to measure a specific physical quantity, in the sense that its state and its interactions with $S$ can be tuned by the observer. (For a generic environment, this is impossible.)
\subsection{Quantum probabilities}
A general recipe for how to calculate empirical probabilities, or ``frequencies'', of time-ordered sequences of possible events (histories) in a system described by a quantum theory has been proposed by L\"uders, Schwinger and Wigner; see \cite{Luders}, \cite{Schwinger},\cite{Wigner}. To describe their recipe, we consider a system, $S$, coupled to an environment, $E$, that is supposed to trigger events, e.g., values of some family of physical quantities, $a^j \in \mathcal{A}_S$, $j=1,...,n,$  measured at times $t_1<...<t_n$. We denote by $\mathcal{B}_{S\vee{E}}$ the algebra of possible events in $S\vee{E}$ and by $a_{t} := \alpha_{t}(a)$ the operator in $\mathcal{B}_{S\vee{E}}$ corresponding to the physical quantity $a\in\mathcal{A}_{S}$ at time $t$, where $\alpha_{t}(\cdot)$ is the time-translation automorphism on $\mathcal{B}_{E\vee{S}}\supset\mathcal{A}_{S}$. Possible events are represented by spectral projections $P_{a^{j}_{t_j}}(I_j)$, where $I_{j} \subset \sigma(a^j)$, the spectrum of $a^j$. 
A history is a sequence, $\lbrace P_n,...,P_1 \rbrace$, of possible events
$$P_j=P_{a^{j}_{t_j}} (I_j),$$
where $a^{j}=(a^{j})^{*} \in \mathcal{A}_S$, $I_j \subset \sigma (a^{j})$, and $t_1<...<t_n$. 
Quantum mechanics predicts the probability, or $``frequency"$, $\mathcal{F}_{\omega}\lbrace P_n,...,P_1 \rbrace$, of  a history $\lbrace P_n,...,P_1 \rbrace$ of possible events to be observed in actual experiments, given a state $\omega$ of $S \vee E$.\\\\

$\bf{``Master}$  $\bf{Formula}$'' for frequencies:
\begin{equation}
\label{freq}
\mathcal{F}_{\omega} \lbrace P_n,...,P_1 \rbrace := \omega (P_1 ... P_{n-1} P_n P_{n-1}...P_1)
\end{equation}

Some properties of frequencies.
\begin{enumerate}[(i)]
\item Since $P_1 .... P_{n-1} P_{n} P_{n-1} ... P_1 = Q Q^{*}\geq 0$, with $Q=P_1 ... P_{n-1} P_{n}$, and $\omega$ is a state, we conclude that $$\mathcal{F}_{\omega} \lbrace P_n,...,P_1 \rbrace \geq 0$$
\item Moreover, since $\omega$ is normalized and because $\Vert QQ^{*} \Vert \leq 1$, we have that
 $$\mathcal{F}_{\omega} \lbrace P_n,...,P_1 \rbrace \leq 1$$
Thus, $\mathcal{F}_{\omega} \lbrace P_n,...,P_1 \rbrace$ can be interpreted as a probability.\\
\item We let $P_{j}^{k}$, $k=1,...,K_j$, with $P_{j}^{1}=:P_{j}$, denote all possible events that may be observed in a measurement of the quantity $a_{t_j}^{j}$, using the equipment described by $E$. Then
\begin{equation}
\sum_{k=1}^{K_j}P_{j}^{k}=\mathbb{I}
\end{equation}
We define
$$\sigma_{n}:= \lbrace (k_1,...,k_n) \mid k_j = 1,...,K_j, \text{ for } j=1,...,n \rbrace$$
It is easy to see that
\begin{equation}
\label{probsr}
\sum_{(k_1,...,k_n) \in \sigma_n} \mathcal{F}_{\omega} \lbrace P_{n}^{k_n},...,P_{1}^{k_1} \rbrace=1
\end{equation}
Thus, $\lbrace \mathcal{F}_{\omega} \lbrace P_{n}^{k_n},...,P_{1}^{k_1} \rbrace \rbrace$ defines a probability measure on the set $\sigma_n$.
\item Note, however, that, in general, 
\begin{equation}
\label{sumrule}
\sum_{k=1}^{K_j}\mathcal{F}_{\omega}\lbrace P_n,...,P_{j}^{k},...,P_1 \rbrace \neq \mathcal{F}_{\omega} \lbrace P_n,...,P_{j+1},P_{j-1},...,P_1 \rbrace,
\end{equation}
for $j<n$,
because of quantum-mechanical interferences. If $K_j>2$ this renders a consistent definition of the conditional probability of the event $P_j$, given $P_n,...,P_{j+1},P_{j-1},...,P_1$, impossible. This observation is vaguely related to the Kochen-Specker theorem. It will be discussed in more detail, below.
\end{enumerate}
Property (\ref{sumrule}) points to the most characteristic difference between quantum probabilities and the probabilities appearing in realistic theories and can be interpreted as saying that quantum theories are not ``realistic'' theories.

\subsection{Indeterminism in quantum theory}
Recall that if we were considering a realistic theory and if $\omega$ were a pure state, i.e., a Dirac measure on the spectrum of $\mathcal{B}_{S\vee{E}}$, and, thus, a pure state on $\mathcal{A}_{S}$, then $ \mathcal{F}_{\omega} \lbrace P_n,...,P_1 \rbrace =0 \text{ or } 1$. The frequencies of a quantum theory do not, in general, obey such 0--1 laws!\\
We choose a pure state $\omega$ on $\mathcal{B}_{S\vee E}$, whose restriction to $\mathcal{A}_{S}$ may be assumed to be pure, too, meaning that $S$ and $E$ are not entangled in this state. Let $\mathcal{H}_{\omega}$ denote the Hilbert space obtained from $(\mathcal{B}_{S\vee{E}}, \omega)$ by the GNS construction, and let $P_{\omega}$ denote the orthogonal projection onto the cyclic vector $\xi_{\omega}\in \mathcal{H}_{\omega}$ corresponding to the state $\omega$. We consider a history, $\lbrace P_n,...,P_1 \rbrace$, with the property that the projections $P_1$ and $P_{\omega}$ do not commute with each other, i.e., the vectors $P_{1}\xi_{\omega}$ and $P_{1}^{\perp}\xi_{\omega}$ are both different from 0, with $P_{1}^{\perp}:=\mathbb{I}-P_1$, and $Q^{*}\xi_{\omega}\neq 0$, where $Q=P_1...P_n$. Then

\begin{equation}
0< (\xi_{\omega}, QQ^{*}\xi_{\omega}) = \mathcal{F}_{\omega} \lbrace P_n,...,P_1 \rbrace < 1
\end{equation}
Thus, quantum-mechanical frequencies do not, in general, obey 0--1 laws, even if the state $\omega$ is pure (and even if its restriction to $\mathcal{A}_{S}$ is pure, too). We conclude that quantum mechanics is $non{-}deterministic$. It deserves to be mentioned that, typically, a pure state $\omega$ on $\mathcal{B}_{S\vee{E}}$ does not determine a pure state on $\mathcal{A}_{S}$ because of entanglement and that, even if the restriction of $\omega$ to $\mathcal{A}_{S}$ were pure, time evolution will usually cause entanglement between $S$ and $E$.\\
In order for the argument just outlined to be conclusive a discussion of the ``theory of preparation of states'' of quantum-mechanical systems would be called for. A more detailed analysis of the subject of this section will appear elsewhere.

\subsection{Interferences and ``$\delta-$consistent histories''}
We recall that, in a realistic theory, possible events are characteristic functions of measurable subsets of some measure space ($\Omega, \sigma$) and states are given by probability measures on $\Omega$. If $\Delta^{1}, \Delta^{2},..., \Delta^{K}$, with $\Delta^i \cap \Delta^j=\emptyset$, for $i \neq j$,  $\Delta:=\bigcup_{i=1}^{K} \Delta^{i}$, and $\Sigma$ are measurable subsets of $\Omega$ and if $\mu$ is a probability measure on $\Omega$ then  
$$\sum_{i=1}^{K}\mu (\Sigma \cap \Delta^{i})=\mu(\Sigma \cap \Delta)$$ 
In contrast, in a quantum theory, $interferences$ between possible events corresponding to mutually orthogonal spectral projections of a physical quantity may arise, given their (past and) future and a state of $S\vee E$. If $P_{j}^{1},...,P_{j}^{K}$ are such projections, with $P_{j}^{(K)}=\sum_{i=1}^{K} P_{j}^{i}$ then, in general,
\begin{equation}
\label{eq-int}
\sum_{i=1}^{K}\mathcal{F}_{\omega} \lbrace P_n,...,P_{j}^{i},...,P_1 \rbrace \neq \mathcal{F}_{\omega} \lbrace P_n,...,P_{j}^{(K)},...,P_1 \rbrace,
\end{equation}
unless $j=n$, because $P_{j}^{i}$ does not necessarily commute with $P_{j+1}...P_n...P_{j+1}$, and hence interference terms
$$Re \left(\omega(P_1 ...P_{j}^{i} ... P_n ...P_{j}^{k}...P_1) \right)$$ may be non-zero, for $i\neq k$. 
In particular, in general
$$\mathcal{F}_{\omega} \lbrace P_n,...,P_j,...,P_1 \rbrace + \mathcal{F}_{\omega} \lbrace P_n,...,P_{j}^{\perp},..., P_1 \rbrace \neq \mathcal{F}_{\omega} \lbrace P_n,..., P_{j+1},P_{j-1},..., P_1 \rbrace,$$
where $P^{\perp}=\mathbb{I} - P$; see Eq. (\ref{sumrule}).\\
Here are some important consequences of non-vanishing interference terms: If, in a measurement of a physical quantity, $a_j$, at time $t_{j}$, there are more than two possible measurement outcomes, $P_j=:P_{j}^{1}$,...,$P_{j}^{K_j}$, with $K_{j}\geq3$, then it is impossible to unambiguously define the ``conditional probability'' of $P_j$, given $P_n,...,P_{j+1}, P_{j-1},...,P_1$. One might be tempted to define the conditional probability, $\mathcal{F}_{\omega} \lbrace P_n,...\mid P_j \mid...,P_1 \rbrace$, of $P_j$, given $P_n,...,P_{j+1}, P_{j-1},..., P_1$, by the formula
\begin{equation}
\label{condprob}
\mathcal{F}_{\omega} \lbrace P_n,...\mid P_j \mid...,P_1 \rbrace := \frac{\mathcal{F}_{\omega}\lbrace P_n,...,P_j,..., P_1 \rbrace}{\mathcal{F}_{\omega}\lbrace P_n,...,P_j,..., P_1 \rbrace + \mathcal{F}_{\omega}\lbrace P_n,...,P_{j}^{\perp},..., P_1 \rbrace}
\end{equation}
However, since in general
$$\mathcal{F}_{\omega}\lbrace P_n,...,P_{j}^{\perp},..., P_1 \rbrace \neq \sum_{i=2}^{K_j} \mathcal{F}_{\omega}\lbrace P_n,...,P_{j}^{i},..., P_1 \rbrace,$$ because of non-vanishing interference terms, the definition of 
$\mathcal{F}_{\omega} \lbrace P_n,...\mid P_j \mid...,P_1 \rbrace$ proposed above may differ from alternative definitions given by
$$ \frac{\mathcal{F}_{\omega}\lbrace P_n,...,P_j,..., P_1 \rbrace}{\sum_{i=1}^{K}\mathcal{F}_{\omega}\lbrace P_n,...,P_{j}^{i},..., P_1 \rbrace} , \text{ } \text{or}  \text{  }  \frac{\mathcal{F}_{\omega}\lbrace P_n,...,P_j,..., P_1 \rbrace}{\mathcal{F}_{\omega}\lbrace P_n,...,P_{j+1}, P_{j-1}..., P_1 \rbrace},$$
and the second object above is not even necessarily bounded above by $1$.
Only in cases where $K_{j}=2$ definition (\ref{condprob}) is meaningful.

Thus, in general, there is no meaningful notion of ``conditional probability'' of a possible event $P_j$, given its (past and) future, i.e., the (conditional) probability of observing $P_j$, given that $P_1,..., P_{j-1}, P_{j+1},..., P_n$ are observed, cannot be predicted unambiguously. 
The reason is that, in quantum theory, a possible event $P_j$ and its complement $P_{j}^{\perp}$ are, in general, not complementary (mutually exclusive) in the classical sense of this expression. Whether they are complementary or not depends on the choice of the experimental equipment, described by $E$, used to measure the observable quantity $a_j$.
If all interference terms between different possible outcomes in a measurement of $a_j$ at time $t_j$ very nearly vanish, given future measurements, (one then speaks of ``dephasing'' or ``decoherence'') then a possible event $P_j$ and its complement $P_{j}^{\perp}$ are complementary in the classical sense;  and hence $P_j$ may correspond to a ``fact'' in an actual experiment. The mechanisms of ``dephasing'' and ``decoherence'' are briefly described in Subsection 4.6.

What we are trying to convey here is well known from the analysis of concrete examples, such as the \textit{double-slit experiment}. 
In this experiment, the projection $P_2$ may represent the possible event that an electron, after having passed a shield with two slits, reaches a region $\Delta$ of a screen, where it triggers the emission of a flash of light. The projection $P_{1}^{r}$ represents the event that the electron has passed through the slit on the right of the shield, while $P_{1}^{l} = (P_{1}^{r})^{\perp}$ stands for the possible event that the electron has passed through the slit on the left of the shield.  Due to usually non-vanishing interference terms, Re $\omega(P_{1}^{r} P_2 P_{1}^{l})$, 
$$\mathcal{F}_{\omega} \lbrace P_2,P_{1}^{r} \rbrace +\mathcal{F}_{\omega} \lbrace P_2,P_{1}^{l}\rbrace \neq \mathcal{F}_{\omega} \lbrace P_2 \rbrace $$
This can be tested, experimentally, because $\mathcal{F}_{\omega} \lbrace P_2,P_{1}^{r} \rbrace $ can be determined from experiments where the slit on the left of the shield is blocked, while $\mathcal{F}_{\omega} \lbrace P_2,P_{1}^{l} \rbrace $ can be determined from experiments where the slit on the right is blocked.
Finally, $ \mathcal{F}_{\omega} \lbrace P_2 \rbrace$ can be determined from experiments where both slits are left open.\\

\begin{center}
\begin{tikzpicture}   
\draw[thick,->] (1,-1) -- (2.5,-1);
\draw[fill=black!10] (8,-2.7) rectangle (8.2,0.5); 
\draw (8,-2.8) node[below]{screen};
\draw[fill=black!10] (4,-0.6) rectangle (4.05,0.5); 
\draw[fill=black!10] (4,-1.4) rectangle (4.05,-0.8); 
\draw[fill=black!10] (4,-2.7) rectangle (4.05,-1.6); 
\draw[fill=black!10] (0.8,-1) circle (0.02);
\draw (0.8,-1.05) node[below]{$e^{-}$};

\draw (8.5,-2.3) node[above]{\small $\Delta$};
\draw[fill=black!50] (8,-2.3) rectangle (8.2,-2.);

\draw (3,-1.4) arc (-50:50:0.5) ;
\draw (3.2,-1.5) arc (-50:50:0.6) ;
\draw (3.4,-1.6) arc (-50:50:0.7) ;

\draw (4.2,-1) arc (-30:30:0.5) ;
\draw (4.4,-1.07) arc (-30:30:0.6) ;
\draw (4.6,-1.2) arc (-40:40:0.7) ;

\draw (4.2,-1.75) arc (-30:30:0.5) ;
\draw (4.4,-1.82) arc (-30:30:0.6) ;
\draw (4.6,-1.95) arc (-40:40:0.7) ;


\begin{scope}[xshift=8.0cm,yshift=-1.99cm,rotate=90]
\draw plot [domain=0.63:2.44,samples=200] (\x,{sin(5*\x r)^2/\x^2});
\end{scope}

\begin{scope}[xshift=8.0cm,yshift=-0.13cm,rotate=90]
\draw plot [domain=-2.44:-1.2,samples=200] (\x,{-(sin(5*\x r))^2/\x^2});
\end{scope}

\end{tikzpicture}
\end{center}

We now imagine that a laser lamp, emitting light of a wave length much smaller than the distance between the two slits in the shield, is turned on in the cavity between the shield and the screen. We then expect that the interference pattern, observed on the screen when both slits in the shield are open and the laser lamp is turned off, gradually disappears when the laser lamp is turned on and its intensity is increased. This is due to scattering processes between the electron and the photons in the laser beam, which serve to track the trajectory of the electron. 

\begin{center}
\begin{tikzpicture}
   
\draw[thick,->] (1,-1) -- (2.5,-1);
\draw[fill=black!10] (8,-2.8) rectangle (8.2,0.5); 
\draw (8,-2.8) node[below]{screen};
\draw[fill=black!10] (4,-0.6) rectangle (4.05,0.5); 
\draw[fill=black!10] (4,-1.4) rectangle (4.05,-0.8); 
\draw[fill=black!10] (4,-2.7) rectangle (4.05,-1.6); 
\draw[fill=black!10] (0.8,-1) circle (0.02);
\draw (0.8,-1.05) node[below]{$e^{-}$};

\draw[fill=orange!10] (5.5,-3) circle (0.2);
\draw[-] (5.7,-3) -- (5.8,-3);
\draw[-] (5.65,-2.86771) -- (5.7,-2.78);
\draw[-] (5.5,-2.8) -- (5.5,-2.7);
\draw[-] (5.4,-2.82679) -- (5.35,-2.74);
\draw[-] (5.3,-3) -- (5.2,-3);

\draw[-] (5.65,-3.13) -- (5.7,-3.21);
\draw[-] (5.5,-3.2) -- (5.5,-3.3);
\draw[-] (5.4,-3.17) -- (5.35,-3.256);
\draw (5.65,-3) node[right]{\footnotesize  lamp};

\draw[fill=black!10] (5,-1.7) circle (0.02);
\draw[fill=black!10] (5,-0.7) circle (0.02);

\draw [-,snake=snake,color=red] (5.45,-2.8) -- (5,-1.7);
\draw [-,snake=snake,color=red] (5.6,-2.8) -- (5,-0.7);
\draw [->,snake=snake,color=red] (5,-1.7) -- (4.5,-3.3);
\draw [->,snake=snake,color=red] (5,-0.7) -- (5.5,0.6);

\draw (8.5,-2.3) node[above]{\small $\Delta$};
\draw[fill=black!50] (8,-2.3) rectangle (8.2,-2.); 

\draw (3,-1.4) arc (-50:50:0.5) ;
\draw (3.2,-1.5) arc (-50:50:0.6) ;
\draw (3.4,-1.6) arc (-50:50:0.7) ;

\draw (4.2,-1) arc (-30:30:0.5) ;
\draw (4.4,-1.07) arc (-30:30:0.6) ;

\draw (4.2,-1.75) arc (-30:30:0.5) ;
\draw (4.4,-1.82) arc (-30:30:0.6) ;


\begin{scope}[xshift=8.cm,yshift=-1.99cm,rotate=90]
\draw plot [domain=-0.13:1.9,samples=200] (\x,{0.8*exp(-4*(\x-1.3)^2)+0.8*exp(-4*(\x-0.5)^2)});
\end{scope}

\begin{scope}[xshift=8.cm,yshift=-1.99cm,rotate=90]
\draw [dashed] plot [domain=-0.08:1.8,samples=100] (\x,{0.8*exp(-4*(\x-1.3)^2)});
\draw   [dashed]   plot [domain=-0.08:1.8,samples=100] (\x,{0.8*exp(-4*(\x-0.5)^2)});
\end{scope}

\draw (7.782,-0.11) .. controls (8,0.15) and (8,0.4) .. (8,0.4);
\draw (7.79,-2.08) .. controls (8,-2.35) and (8,-2.5) .. (8,-2.5);

\draw[fill=orange!30] (4.25,0.6) rectangle (7.5,0.65); 
\draw[fill=orange!30] (4.25,-3.3) rectangle (7.5,-3.35); 
\draw[color=orange] (5.75,-3.3) node[below]{\footnotesize detectors};
\draw[color=orange] (5.75,0.6) node[above]{\footnotesize detectors};

\end{tikzpicture}
\end{center}
If the electromagnetic field emitted by the laser is included in the theoretical description of the equipment, $E$, used in this experiment then the disappearance of the interference pattern on the screen can be understood as the result of decoherence, which makes the interference term Re $\omega(P_{1}^{r} P_2 P_{1}^{l})$ tend to 0, as the wave length of the laser decreases and its intensity is cranked up, and, hence, renders the possible events $P_{1}^{r}$ and $P_{1}^{l}$ complementary in the classical sense.\\
The experiment described here has first been proposed by Feynman \cite{Fe}. A theoretical analysis has been given, e.g., in \cite{Schil}.\\

Inspired by this example, we introduce a notion of ``$\delta-$consistent histories''.\\

\begin{definition}
$\delta-$consistent histories
\end{definition}

Let $\lbrace P_n,...,P_j,...,P_1 \rbrace$ be a history of a system $S$ and $\omega$ the state of the system $S\vee{E}$. (Without loss of generality, we suppose that $P_j=:P_{j}^{1}$, for $j=1,...,n$.)
We define a quantity, $\mathcal{E}_{\omega}^{(j)}$ -- called the ``evidence for one of the possible events 
$\lbrace P_{j}^{k} \rbrace_{k=1}^{K_j}$ to materialize, (given past and future events and the state of the system)'' -- by
$$\mathcal{E}_{\omega}^{(j)}:=1- \sum_{1 \leq k,l \leq K_j, k\neq l} \vert \omega(P_1 ... P_{j-1} P_j^{k} P_{j+1}  .... P_{n} ...P_{j+1}  P_j^{l} P_{j-1}...P_1) \vert,$$
$j=1,...,n-1$.
We note that if $\mathcal{E}_{\omega}^{(j)}$ were $=1$ (i.e., if the interference terms appearing in its definition vanished) then the possible events $P_{j}^{1},...,P_{j}^{K_j}$ would be mutually exclusive (in the classical sense), and hence one of them would $have$ to happen, given the state of $S\vee E$ and (past and) future possible events. If the value of $\mathcal{E}_{\omega}^{(j)}$ is so close to 1 that it cannot be distinguished from 1 then the system responds to a measurement of the physical  quantity $a_{t_j}^{j}$,  as if precisely one of the possible events $\lbrace P_{j}^{k} \rbrace_{k=1}^{K_j}$ happened. If the value of $\mathcal{E}_{\omega}^{(j)}$ is appreciably smaller than 1 then it does not make sense to say that one of the possible events $\lbrace P_{j}^{k} \rbrace_{k=1}^{K_j}$ happen, given future measurements.\\
A history $\lbrace P_n,...,P_1 \rbrace$ of $S$ is said to be $\delta-$consistent with respect to the state $\omega$ of $S\vee E$ if
$$\text{min}_{j=1,...,n}\mathcal{E}_{\omega}^{(j)} \geq \delta,$$
for some $\delta\leq 1$. 

A history may correspond to a sequence of $``facts$'' if it is $\delta$-consistent, with $\delta$ very close to 1, i.e., $0\leq 1 - \delta <<1$. If $\delta=1$, we say that the history is ``$consistent$''; see \cite{Griffiths}. A consistent history is ``classical'' in the sense that, in a measurement of a physical quantity $a_j$ at time $t_j$, the event $P_j=P_{j}^{1}$ is complementary to the events $P_{j}^{k}$, $k=2,...,K_j$, in the classical sense that these events are mutually exclusive, for all $j=1,...,n$. Of course, the notion of ``consistent histories'' is an idealization in so far as, in realistic experiments, consistent histories are usually not encountered. The significance of the mechanisms of $``dephasing"$ and $``decoherence"$ (see Subsection 4.6) is that they render histories $\delta-$consistent, with $\delta>0$, and hence ``classical'', for $\delta$ very close to $1$.

In order to explain these matters in a concrete situation, we return to the double slit experiment:  Assuming that both slits in the shield are open and that the intensity of the laser is finite, the histories $\lbrace P_2,P_{1}^{r/l}\rbrace$ are never consistent, because, quantum mechanically, it is impossible to say with certainty through which slit an electron has passed. When the laser lamp is turned on they are, however, $\delta-$consistent, for some $\delta>0$. The value of $\delta$ increases, as the intensity of the laser increases, and approaches $1$, as the wave length of the laser light tends to $0$ and the intensity tends to $\infty$. For, in this limit, an experiment would determine with certainty through which slit an electron has passed.

\subsection{A remark related to the Kochen-Specker theorem}
In the previous subsection, we have seen that, because of non-vanishing interference terms, it is generally impossible to define an unambiguous notion of ``\textit{conditional probability}'' of a possible event, $P_j$, given its future, $P_{j+1},...,P_n$, for $j<n$. This observation is a reflection of the fact that quantum theories can usually not be given a realistic interpretation in terms of a hidden-variables embedding and represents a (perhaps somewhat cheap) variant of the Kochen-Specker theorem. 
We fist consider a system $S$ and experimental equipment $E$ enabling one to measure physical quantities, $a_1,...,a_n$, pertaining to $S$, with the property that a measurement of each of the quantities $a_j$ can only have two possible outcomes, $P_j$ and $P_{j}^{\perp}$, with $P_{j}+P_{j}^{\perp}=\mathbb{I}$, for all $j=1,...,n$, i.e., the quantities $a_1,..., a_n$ are all binary. Under these assumptions, one may unambiguously define the conditional probability of the possible event $P_j$, given $P_1,...,P_{j-1}, P_{j+1},..., P_n$, by
\begin{equation}
\label{binary}
\mathcal{F}_{\omega} \lbrace P_n,..,P_{j+1}  \vert P_j \vert P_{j-1},..,P_{1}  \rbrace :=\frac{\mathcal{F}_{\omega} \lbrace P_n,..., P_j,..., P_{1}  \rbrace }{\mathcal{F}_{\omega} \lbrace P_n,...,P_j,...,P_{1}  \rbrace+\mathcal{F}_{\omega} \lbrace P_n,...,P_j^{\perp},...,P_{1}   \rbrace}
\end{equation}
Thus, the quantum probabilities $\mathcal{F}_{\omega} \lbrace P_1,..., P_n \rbrace$ determine a probability measure on the discrete space ${\lbrace +,- \rbrace}^{\times n}$ that predicts conditional probabilities unambiguously. \\
However, considering a system $S$ and experimental equipment $E$ enabling one to measure physical quantities, $a_1,..., a_n$, pertaining to $S$, with the property that a measurement of the quantity $a_j$ may have $K_j\geq 3$ possible outcomes, $P_j=:P_{j}^{1},..., P_{j}^{K_j}$, for some $j<n$, we run into the problem that the conditional probability of the possible event $P_j$, given the possible events $P_1,..., P_{j-1}, P_{j+1},..., P_n$, cannot be defined unambiguously, because $P_{j}^{\perp}$ can be further decomposed into a sum, $\sum_{i=2}^{K_j} P_{j}^{i}$, and 
$$\mathcal{F}_{\omega}\lbrace P_n,..., P_{j}^{\perp},..., P_n \rbrace \neq \sum_{i=2}^{K_j} \mathcal{F}_{\omega} \lbrace P_n,..., P_{j}^{i},..., P_1 \rbrace,$$
due to non-vanishing interference terms. This means that it is usually not meaningful to imagine that the possible event $P_j$ may materialize.\\
Our argument would fail if one exclusively considered consistent histories, (which is what the ``Bohmians'' appear to accomplish by restricting what they consider to be ``physical quantities'' and ``possible events'' to a class of operators that generate an $abelian$ algebra; see \cite{Durr}).

\subsection{Consistent histories in the vicinity of $\delta$-consistent histories, for $\delta\approx 1$}
\label{4.5}
In this subsection, we present a lemma showing that, in the vicinity of a $\delta-$consistent history, with $\delta$ very close to $1$, there is a consistent history. We define a sequence, $\left( C_{n}\right)_n$, of positive numbers by
$$
C_{n}:= \left\{
    \begin{array}{ll}
        0 & \mbox{if } n=1 \\
        6(4\sum_{k=1}^{n-1}  C_{k} +1) & \text{ } \forall n \geq 2
    \end{array}
\right.
$$

\begin{lemma}
\label{lem1}
Let us suppose that $\lbrace P_n,...,P_1 \rbrace$ is a history of possible events in a system $S$ with the property that
$$\vert \vert \left[ P_j, H_j\right] \vert \vert < \epsilon,$$ where $H_j:=(\Pi_{i=j+1}^{n}P_i)(\Pi_{i=n}^{j+1}P_i)$, for some sufficiently small $\epsilon$ and all $j=1,..., n-1$. Then there exists a history ,$\lbrace \tilde{P}_n,...,\tilde{P}_1 \rbrace$, of orthogonal projections with the properties that $\vert \vert  \tilde{P}_j- P_j\vert \vert < C_{n+1-j}\epsilon$ and that
\begin{equation}
 \left[ \tilde{P}_j, \tilde{H}_j\right] =0, 
\end{equation}
where $\tilde{H}_j:=(\Pi_{j+1}^{n}\tilde{P}_i)(\Pi_{i=n}^{j+1}\tilde{P}_i)$, for all j=1,...,n. The operators $\tilde{H}_j$ are orthogonal projections.

\end{lemma}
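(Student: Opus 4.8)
The plan is to construct the $\tilde P_j$ by a single \emph{backward} induction on $j$, from $j=n$ down to $j=1$, at each stage rounding off $P_j$ against the projection $\tilde H_j$ that has already been assembled from the later, already-corrected projections. The structural fact that makes this work is the recursion
\begin{equation}
\nonumber
\tilde H_j = \tilde P_{j+1}\,\tilde H_{j+1}\,\tilde P_{j+1},
\end{equation}
obtained by peeling the outermost factors off the definition of $\tilde H_j$ (and likewise $H_j = P_{j+1} H_{j+1} P_{j+1}$). Consequently, \emph{if} $\tilde H_{j+1}$ is an orthogonal projection with $[\tilde P_{j+1},\tilde H_{j+1}]=0$, then $\tilde H_j = \tilde P_{j+1}\tilde H_{j+1}$ is a product of two commuting projections, hence again an orthogonal projection. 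The induction therefore starts from $\tilde H_n := \mathbb{I}$ and $\tilde P_n := P_n$ (consistent with $C_1=0$, since $[\tilde P_n,\tilde H_n]=0$ trivially), and the projection property of every $\tilde H_j$ propagates automatically, provided that at each step I can enforce $[\tilde P_j,\tilde H_j]=0$ while keeping $\tilde P_j$ a projection close to $P_j$.

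The rounding step is a standard device. Given the orthogonal projection $R:=\tilde H_j$ and the projection $P_j$ with $\gamma:=\|[P_j,R]\|$ small, I first pass to the block-diagonal part $D_j := R P_j R + (\mathbb{I}-R)P_j(\mathbb{I}-R)$, which commutes with $R$ by construction and satisfies $\|D_j - P_j\| = \gamma$ (for $R$ a projection and $P_j$ self-adjoint the off-diagonal part has exactly the norm of the commutator). Since $P_j$ is a projection, $D_j$ is self-adjoint with spectrum in $[-\gamma,\gamma]\cup[1-\gamma,1+\gamma]$, so for $\gamma<1/2$ the spectral projection $\tilde P_j := \chi_{(1/2,\infty)}(D_j)$ is well defined, commutes with $R=\tilde H_j$ by functional calculus, and obeys $\|\tilde P_j - D_j\|\le\gamma$. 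Hence $\|\tilde P_j - P_j\|\le 2\gamma = 2\|[P_j,\tilde H_j]\|$.

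It remains to control $\|[P_j,\tilde H_j]\|$ via the hypothesis $\|[P_j,H_j]\|<\epsilon$, i.e. to bound how far $\tilde H_j$ has drifted from $H_j$. Writing $p_i:=\|\tilde P_i-P_i\|$ and $h_j:=\|\tilde H_j-H_j\|$, the two recursions for $\tilde H_j$ and $H_j$ give, after a three-term telescoping and using that all projections have norm $\le 1$, the estimate $h_j\le 2p_{j+1}+h_{j+1}$, whence $h_j\le 2\sum_{i=j+1}^{n}p_i$ (with $h_n=0$). Therefore
\begin{equation}
\nonumber
\|[P_j,\tilde H_j]\| \le \|[P_j,H_j]\| + 2h_j < \epsilon + 4\sum_{i=j+1}^{n}p_i.
\end{equation}
Feeding in the inductive bounds $p_i<C_{n+1-i}\,\epsilon$ and reindexing by $m=n+1-i$ turns the right-hand side into $\epsilon\bigl(1+4\sum_{m=1}^{n-j}C_m\bigr)$, so the rounding estimate yields $\|\tilde P_j-P_j\|<2\epsilon\bigl(1+4\sum_{m=1}^{n-j}C_m\bigr)$. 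With $M:=n+1-j$ this is strictly below $C_M\,\epsilon$ precisely because the defining recursion $C_M=6\bigl(4\sum_{m=1}^{M-1}C_m+1\bigr)$ dominates $2+8\sum_{m=1}^{M-1}C_m$ (with slack $4+16\sum_{m=1}^{M-1}C_m>0$), which closes the induction.

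The main obstacle is exactly this error bookkeeping: each corrected projection perturbs the ``future'' operator $\tilde H_j$, and these perturbations accumulate as $j$ decreases, so one must check that the prescribed constants $C_n$ grow fast enough (they grow geometrically, with ratio of order $25$) to absorb the accumulated drift at every stage. The only other point needing care is the smallness assumption on $\epsilon$: the spectral rounding is legitimate only while $\gamma<1/2$ at each step, which forces $\epsilon$ below a constant depending on $n$ (essentially $\epsilon\lesssim 1/C_n$) — and this is what ``for some sufficiently small $\epsilon$'' must be taken to mean.
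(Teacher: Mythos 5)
Your proof is correct, and its skeleton is the same as the paper's: a backward induction on $j$ in which each $P_j$ is rounded off against the already-constructed projection $\tilde H_j$, a telescoping estimate controlling the drift $\Vert \tilde H_j - H_j\Vert \leq 2\sum_{i>j}\Vert \tilde P_i - P_i\Vert$, and the recursion defining $C_n$ absorbing the accumulated error. Where you genuinely differ is the rounding device. The paper compresses $P_j$ into the two corners $Q' = \tilde H_j P_j \tilde H_j$ and $Q = \tilde H_j^{\perp} P_j \tilde H_j^{\perp}$, checks that each is nearly idempotent (its defect being controlled by the commutator $[\tilde H_j^{\perp}, P_j]$), invokes the auxiliary Lemma \ref{lem2} twice to replace them by genuine projections $\hat Q, \hat{Q'}$, and sets $\tilde P_j = \hat Q\, \tilde H_j^{\perp} + \hat{Q'}\, \tilde H_j$; this costs a factor $6$ per step. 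You instead round the full block-diagonal part $D_j = \tilde H_j P_j \tilde H_j + \tilde H_j^{\perp} P_j \tilde H_j^{\perp}$ in one shot, using two facts the paper does not exploit: the off-diagonal part of a self-adjoint operator relative to a projection has norm \emph{exactly} $\Vert [P_j, \tilde H_j]\Vert$, and the spectrum of a self-adjoint perturbation of a projection stays within the perturbation norm of $\lbrace 0,1\rbrace$, so that $\chi_{(1/2,\infty)}(D_j)$ is a projection commuting with $\tilde H_j$ at cost $2\Vert[P_j,\tilde H_j]\Vert$. This bypasses Lemma \ref{lem2} entirely, yields a per-step constant $2$ instead of $6$ (so your bound sits under the paper's $C_M$ with the slack you compute), and makes the required smallness of $\epsilon$ explicit, namely $\epsilon \lesssim 1/C_n$, which matches the condition $\epsilon < \frac{1}{4(4\sum_i C_{n-i+1}+1)}$ imposed in the paper's induction. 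Since your $\chi_{(1/2,\infty)}(D_j)$ is precisely the sum of the roundings of the paper's two corners, the two constructions produce the same object; your version is a streamlined and quantitatively sharper packaging of the same argument rather than a conceptually different proof.
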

This lemma (more precisely, some straightforward generalization of it) shows that the history $\lbrace \tilde{P}_n,..., \tilde{P}_1\rbrace$ is consistent, because all interference terms vanish. More general results of this sort will be proven elsewhere. \\
The proof of Lemma \ref{lem1} is based on the following simple 

\begin{lemma}
\label{lem2}
Let $P$ be a bounded selfadjoint operator on a Hilbert space $\mathcal{H}$, and let $0<\epsilon<\frac{1}{4}$. If $\vert \vert P^{2}-P \vert \vert <\epsilon$ then there there exists an orthogonal projection, $\hat{P}$, on $\mathcal{H}$ such that $$\vert \vert \hat{P} -P \vert \vert < 2 \epsilon.$$
\end{lemma}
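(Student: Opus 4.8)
The plan is to use the spectral theorem for the bounded selfadjoint operator $P$ together with the functional calculus. Writing $P=\int\lambda\,dE(\lambda)$ with spectral measure $E$, the hypothesis $\|P^{2}-P\|<\epsilon$ translates, via the isometry $\|f(P)\|=\sup_{\lambda\in\sigma(P)}|f(\lambda)|$ applied to the continuous function $f(\lambda)=\lambda^{2}-\lambda$, into the pointwise bound $|\lambda(\lambda-1)|<\epsilon$ for every $\lambda\in\sigma(P)$. This is the crucial reformulation: it forces the spectrum of $P$ into the two regions where $\lambda$ is close to $0$ or close to $1$.

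First I would exploit $\epsilon<\tfrac14$ to produce a spectral gap at $\tfrac12$. Since $|f(\tfrac12)|=|\tfrac14-\tfrac12|=\tfrac14>\epsilon$, the point $\tfrac12$ cannot lie in $\sigma(P)$; hence $\sigma(P)$ splits as the disjoint union of $\sigma(P)\cap(-\infty,\tfrac12)$ and $\sigma(P)\cap(\tfrac12,\infty)$. I would then take as candidate the spectral projection onto the upper part, $\hat P:=\chi_{(1/2,\infty)}(P)=E((\tfrac12,\infty))$. Because $\chi_{(1/2,\infty)}$ is a real-valued idempotent function, $\hat P$ is an orthogonal projection, and it commutes with $P$.

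The estimate then reduces to a scalar inequality. Since $\hat P-P=h(P)$ for the bounded Borel function $h(\lambda)=\chi_{(1/2,\infty)}(\lambda)-\lambda$, we have $\|\hat P-P\|\le\sup_{\lambda\in\sigma(P)}|h(\lambda)|$. For $\lambda\in\sigma(P)$ with $\lambda<\tfrac12$ one has $\chi_{(1/2,\infty)}(\lambda)=0$ and $|\lambda-1|>\tfrac12$, so $|h(\lambda)|=|\lambda|=|\lambda(\lambda-1)|/|\lambda-1|<\epsilon/(1/2)=2\epsilon$; symmetrically, for $\lambda>\tfrac12$ one has $|\lambda|>\tfrac12$ and $|h(\lambda)|=|1-\lambda|=|\lambda(\lambda-1)|/|\lambda|<2\epsilon$. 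This already gives $\|\hat P-P\|\le 2\epsilon$, and the strict inequality follows because $\sigma(P)$ is compact and bounded away from $\tfrac12$: in fact $\sigma(P)\subseteq(-\infty,r]\cup[1-r,\infty)$ with $r:=\tfrac{1-\sqrt{1-4\epsilon}}{2}$, so $\sup_{\lambda\in\sigma(P)}|h(\lambda)|\le r$; since $r$ solves $r(1-r)=\epsilon$ with $0<r<\tfrac12$, we get $r=\epsilon/(1-r)<2\epsilon$, and hence $\|\hat P-P\|<2\epsilon$, as required.

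There is no serious obstacle here; the content is standard spectral theory, and $\hat P$ is simply the operator obtained by rounding each spectral value to the nearer of $0$ and $1$. The one point requiring care is the role of $\epsilon<\tfrac14$: it is exactly what guarantees that $\tfrac12$ (and a neighbourhood of it) is excluded from $\sigma(P)$, so that the rounding never has to act on a forbidden middle value where $|h|$ would be of size $\tfrac12$. The only genuinely quantitative step is getting the constant sharp, where one must use $|\lambda-1|>\tfrac12$ (respectively $|\lambda|>\tfrac12$) rather than a cruder bound, and then invoke $r(1-r)=\epsilon$ to secure the strict factor $2\epsilon$.
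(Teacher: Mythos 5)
Your proof is correct and follows essentially the same route as the paper's: both use the spectral mapping theorem for $f(\lambda)=\lambda^{2}-\lambda$ to confine $\sigma(P)$ to neighbourhoods of $0$ and $1$, take $\hat{P}$ to be the spectral projection onto the part of the spectrum near $1$, and bound $\|\hat{P}-P\|$ by $\tfrac{2\epsilon}{1+\sqrt{1-4\epsilon}}<2\epsilon$ (your $r$ is exactly this quantity). The only cosmetic difference is that the paper names the two spectral intervals $\Delta_{0},\Delta_{1}$ via the roots of $X^{2}-X\mp\epsilon$ rather than cutting at $\tfrac12$, which amounts to the same projection.
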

We note that the operator $\hat{P}$ can be chosen to be a function of the operator $P$, so that if $\left[Q,P \right]=0$, for a given operator $Q$, then $\left[Q,\hat{P} \right]=0$.
 
Proofs of these two lemmata can be found in Appendix C.

\subsection{Dephasing and decoherence}

We consider a sequence $\lbrace P_n,...,P_1 \rbrace$ of events characterizing the actual evolution of a system, $S$, coupled to a piece of equipment, $E$, confined to a compact region, $\Lambda$, of physical space. Let $j \in \lbrace 1,...,n \rbrace$, and let $t'$ be an instant of time, with $t_{j-1 } \leq t' < t_j$, when some interaction between $E$ and $S$ is turned on, with the purpose to measure a physical quantity $a^{j}_{t_j}:=\alpha_{t_{j},t_{0}}(a^{j})$ pertaining to $S$. The measurement of $a^{j}_{t_j}$ may give rise to an event $P_j$ represented by a spectral projection of $a^{j}_{t_j}$. For simplicity, we suppose that the spectrum of the operator $a_j$ consists of a finite set of eigenvalues, $\lbrace\alpha_j^{l}\rbrace_{l=1}^{K_j}$, so that the spectral decomposition of $a_{t_j}^{j}$ is given by a finite sum 
$$a^{j}_{t_j}=\sum_{l=1}^{K_j} \alpha_{j}^{l} P_{j}^{l}$$

Let  $P_j := P_{j}^{l_0}$, for some $l_0 \in \lbrace 1,...,K_j \rbrace $. Let $\rho$ denote the state of $S \vee E$ at time $t'$, before a measurement of the quantity $a_{t_j}^{j}$ is made. The possible event $P_j$ can materialize, i.e., correspond to a  \textit{fact}, (or, put differently, $P_j$ and $\lbrace P_{j}^{l} \rbrace _{l\neq l_{0}}$ mutually exclude one another), in a measurement of $a_{t_j}^{j}$, given the state $\rho$ and future possible events  $P_{j+1}, ..., P_n$, under the condition that
\begin{equation}
\label{fact}
\mathcal{F}_{\rho} \lbrace P_n,..,P_{j+1} ,P_j \rbrace + \sum_{l \neq l_0} \mathcal{F}_{\rho} \lbrace P_n,..,P_{j+1},P_{j}^{l} \rbrace \approx \mathcal{F}_{\rho} \lbrace P_n,..,P_{j+1} \rbrace
\end{equation}

\begin{definition}
Dephasing
\end{definition}
We say that the equipment $E$ induces ``dephasing'' in a measurement of the quantity $a_{t_j}^{j}$ pertaining to the system $S$, given that $S\vee E$ is prepared in an entangled state $\rho$ before $a_{t_j}^{j}$ is measured and quantities $a_{t_{j+1}}^{j+1},...,a_{t_n}^{n}$ are measured afterwards, if  
\begin{equation}
\label{dephasing}
\rho(QQ^{*})\approx\rho(P_jQQ^{*}P_j) + \sum_{l \neq l_0} \rho(P_{j}^{l}QQ^{*}P_{j}^{l}),
\end{equation}
where $Q=\prod_{k=j+1}^{n}f_{k}(a_{t_k}^{k})$, with $f_k$ an arbitrary continuous function, for $k=j+1,...,n$.

 Dephasing implies that if $S\vee E$ is prepared in the state $\rho$ before $a_{t_j}^{j}$ is measured then interference terms between $P_j$ and complementary possible events $P_j^{l}$, $l\neq l_{0}$, very nearly vanish when quantities $a_{t_j+1}^{j+1},...,a_{t_n}^{n}$ are measured subsequently. Thus, the possible events  $P_{j}^{l}, l=1,...,K_j$, \textit{mutually exclude each other} (for all practical purposes), given those future measurements. Hence, one of these possible events will be observed in a measurement of $a_{t_j}^{j}$.\\
It deserves to be noted that dephasing may be wiped out if the delay between the measurement of $a_{t_j}^{j}$ and subsequent measurements becomes very large or if appropriate measurements (manipulations) on the equipment $E$ are made $after$ the measurement of $a_{t_j}^{j}$, which may lead to a ``disentanglement'' of $E$ and $S$ and, hence, may lead to a re-emergence of interference terms.
 
Next, we attempt to clarify what is meant by decoherence. For this purpose, we introduce a $C^{*}-$algebra $\mathcal{D}_S$, which is generated by the operators
 $$ \lbrace b \mid b= \prod_{k} \alpha_{t_k,t_0} (b^k) , \text{ with } b^k \in \mathcal{A}_S  \rbrace,$$
and a one-parameter group of time-translation automorphisms, $\tau : (\mathcal{D}_S, \mathbb{R}) \rightarrow \mathcal{D}_S$, which is defined by
$$\tau_{t} (b):= \prod_{k} \alpha_{t_k+t,t_0} (b^k), $$
for $b:= \prod_{k} \alpha_{t_{k},t_{0}}(b^k)$.\\
\begin{definition}
Decoherence
\end{definition}
We say that the equipment $E$ induces ``decoherence'' in a measurement of the quantity $a_{t_j}^{j}$ pertaining to the system $S$ if and only if, for all $b \in \mathcal{D}_S$,

\begin{equation}
\label{deco}
 \left[  a^{j}_{t_j}, \tau_t(b)\right] \overset{w}{\rightarrow} 0
\end{equation}
as $t \rightarrow \infty$, in the thermodynamic limit, $\vert\Lambda\vert \rightarrow \infty$, of the equipment $E$. (In (\ref{deco}), $``\overset{w}{\rightarrow}" $ denotes a weak limit.) \\\\
$Remarks.$
\begin{enumerate}[(i)]
\item Decoherence, in the sense of Eq. (\ref{deco}), is only meaningful if the equipment $E$ used in the measurement of the quantity $a_{t_j}^{j}$ is macroscopically large (i.e., in the thermodynamic limit of $E$). \item In contrast to dephasing, decoherence cannot be undone, anymore, because Eq. (\ref{deco}) is assumed to hold for $arbitrary$ operators $b\in\mathcal{D}_{S}$ and $every$ continuous linear functional on the algebra $\mathcal{B}_{S\vee E}$, (hence, for every state of $S \vee E$).
\end{enumerate}
Clearly, decoherence implies dephasing.  We trivially have that

\begin{eqnarray*}
\omega(\tau_t(b)) &=& \omega \left( \left(\sum_{l=1}^{K_j} P_{j}^{l} \right) \tau_t(b) \left(\sum_{m=1}^{K_j} P_{j}^{m} \right) \right) = \sum_{l,m=1}^{K_j}\omega \left(  P_{j}^{l}  \tau_t(b)  P_{j}^{m} \right), 
\end{eqnarray*}
where $\omega$ is a state on $\mathcal{B}_{S\vee E}$.
Moreover, by Eq. (\ref{deco}), $\phi ( \left[ a^{j}_{t_j}, \tau_t(b)\right]) \rightarrow 0$, as $t\rightarrow\infty$, for every continuous linear functional $\phi$ on $\mathcal{B}_{S\vee E}$. Consequently,
\begin{eqnarray*}
\omega( P_{j}^{l}  \left[  a^{j}_{t_j},  \tau_t(b) \right]  P_{j}^{m} )&=& \left(\alpha_{l}- \alpha_m \right) \omega( P_{j}^{l}   \tau_t(b)  P_{j}^{m})\\
&\underset{ t, \vert\Lambda\vert  \rightarrow \infty}{\rightarrow}  & 0
\end{eqnarray*}
because $\omega( P_{j}^{l} \text{ . }  P_{j}^{m} )$ is a continuous linear (but not necessarily positive) functional on $\mathcal{B}_{S\vee E}$.
For $l \neq m$, it follows that  $ \omega( P_{j}^{l}   \tau_t(b)  P_{j}^{m}) \underset{ t, \vert\Lambda\vert  \rightarrow \infty}{\rightarrow}  0$. Thus,
$$\omega(\tau_t(b))  \underset{ t, \vert\Lambda\vert  \rightarrow \infty}{\rightarrow}  \sum_{l=1 }^{K_j} \omega( P_{j}^{l}   \tau_t(b)  P_{j}^{l}) $$
which implies (\ref{fact}), provided $t_{j+1}-t_j$ is large enough.
Dephasing, too, implies (\ref{fact}), but decoherence implies the stronger statement that, in (\ref{fact}), $equality$ holds in appropriate limits, and under suitable assumptions. The importance of dephasing and decoherence in a quantum theory of experiments is that they represent mechanisms that render complementary possible events \textit{mutually exclusive} in an actual experiment, meaning that, for all practical purposes, one of them $will$ happen. Without such mechanisms, it would be impossible to say what one means by ``measuring a physical quantity'' or by ``a possible event to materialize'' (i.e., by a possible event to become a fact).

There is a truly vast literature concerning more or less concrete, more or less realistic models of (dephasing and) decoherence, treated with more or less mathematical precision. One may argue that Schr\"odinger invented the concept of decoherence in connection with his thought experiment on ``Schr\"odinger's cat'' \cite{Schroedinger}. An early contribution towards clarifying this concept appeared in \cite{DeWitt}. Among the first discussions of concrete models of decoherence is the one in \cite{Hepp2}. Obviously, a thorough review of these matters goes beyond the scope of this note; but see e.g. \cite{BL2} or \cite{Schloss}. It will be taken up elsewhere.

\appendix
\appendixpage
\addappheadtotoc
\section{Klyachko's theorem}
\label{A1}
\begin{theorem}(Klyachko)
\label{Klya}
The following conditions are equivalent.
\begin{itemize}
\item The density matrix $\rho$ on $H_1 \otimes H_2$ $(dimH_i = n_i < \infty)$, with spectrum $\underline\lambda$, has $\rho_1$ on $H_1$, with spectrum $\underline\lambda^{(1)}$, and $\rho_2$ on $H_2$, with spectrum $\underline\lambda^{(2)}$, as its marginals.
\item The spectra $\underline\lambda$, $\underline\lambda^{(1)}$ and $\underline\lambda^{(2)}$ satisfy the inequalities
$$ \sum_{i=1}^{n_1} a_i \lambda^{(1)}_{u(i)} +  \sum_{j=1}^{n_2} b_j \lambda^{(2)}_{v(j)} \leq \sum_{k=1}^{n_1 n_2} (a +b)^{\downarrow}_{k} \lambda _{w(k)}$$
for arbitrary non-increasing sequences $a_{n_1} \leq...\leq a_{1}$ and  $b_{n_2} \leq...\leq b_{1}$, with $\sum_{i=1}^{n_1} a_i= \sum_{j=1}^{n_2} b_j=0$, and, for every  permutation, $u$, of $\lbrace 1,...,n_1 \rbrace$, every permutation, $v$, of $\lbrace 1,...,n_2 \rbrace$ and every permutation, $w$, of $\lbrace 1,..., n_1 n_2 \rbrace$ with the property that the Schubert coefficients $c_{w}^{uv} (a,b)$ are non-zero, $(a +b)^{\downarrow}$ is the sequence $a_i+b_j$, arranged in non-increasing order.
\end{itemize}
\end{theorem}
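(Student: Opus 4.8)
The plan is to realize the set of admissible spectral triples $(\underline\lambda, \underline\lambda^{(1)}, \underline\lambda^{(2)})$ as a \emph{moment polytope} and then to extract its defining inequalities by Schubert calculus. First I would fix the spectrum $\underline\lambda$, which confines $\rho$ to a single coadjoint orbit $\mathcal{O}_{\underline\lambda}$ of the unitary group $U(n_1 n_2)$ acting by conjugation on the Hermitian operators on $H_1 \otimes H_2$. The partial traces $\rho \mapsto (\rho_1,\rho_2)$ are precisely the components of the moment map $\Phi$ for the subgroup $G = U(n_1) \times U(n_2)$, embedded through $g_1 \otimes g_2$; so the pair of marginal spectra is what one reads off from $\Phi(\mathcal{O}_{\underline\lambda})$ after quotienting by the adjoint action, i.e. after intersecting with a product of positive Weyl chambers $\mathfrak{t}_{+}^{*} \times \mathfrak{t}_{+}^{*}$. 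The question ``which triples occur'' thus becomes: describe the set $\Delta_{\underline\lambda} := \Phi(\mathcal{O}_{\underline\lambda}) \cap (\mathfrak{t}_{+}^{*} \times \mathfrak{t}_{+}^{*})$, and, letting $\underline\lambda$ vary, the corresponding homogeneous cone of compatible triples.

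Second, I would invoke the convexity theorem for moment maps (Atiyah, Guillemin--Sternberg, and in the nonabelian case Kirwan): since $\mathcal{O}_{\underline\lambda}$ is a compact connected Hamiltonian $G$-space, the set $\Delta_{\underline\lambda}$ is a convex polytope. This already guarantees that the admissible triples are cut out by finitely many linear inequalities, which is exactly the shape of the asserted answer; what remains is to \emph{identify} those inequalities.

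Third --- and this is the technical heart --- I would determine the facets of $\Delta_{\underline\lambda}$ by the Berenstein--Sjamaar / Klyachko method, testing membership against one-parameter subgroups of $G$ via the Hilbert--Mumford numerical criterion and the Kempf--Ness correspondence between the moment-map level set and the semistable locus. A one-parameter subgroup of $G$ amounts to a choice of flags in $H_1$ and in $H_2$ together with integral slopes $a=(a_i)$ and $b=(b_j)$; the induced flag in $H_1 \otimes H_2$ is bookkept by a permutation $w$, while $u$ and $v$ record the relative positions of the chosen flags against the reference flags. The condition that this combinatorial datum actually contributes a genuine facet is precisely the nonvanishing of the Schubert structure constant $c_{w}^{uv}(a,b)$, which arises as the coefficient with which the Schubert class indexed by $w$ appears in the cup product of the classes indexed by $u$ and $v$ in the cohomology of the relevant flag varieties. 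Applying Borel--Weil--Bott to the numerator and denominator of a ratio of dimensions of $GL$-covariants then converts stability of a point with prescribed marginals into exactly $\sum_{i} a_i \lambda^{(1)}_{u(i)} + \sum_{j} b_j \lambda^{(2)}_{v(j)} \le \sum_{k} (a+b)^{\downarrow}_{k}\lambda_{w(k)}$, with $(a+b)^{\downarrow}$ the non-increasing rearrangement of the numbers $a_i+b_j$.

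The main obstacle is the \emph{sufficiency} direction. Necessity of each inequality is immediate from convexity together with the Hilbert--Mumford estimate, but showing that the finite list above \emph{exhausts} the facets --- so that every triple obeying all the inequalities is realized by some $\rho$ --- requires a saturation phenomenon. Concretely, one must prove that the semigroup of integral triples $(N\underline\lambda, N\underline\lambda^{(1)}, N\underline\lambda^{(2)})$ for which the associated covariant space is nonzero is \emph{saturated}, i.e. coincides with the full set of lattice points of the cone carved out by the inequalities. I would import this from Klyachko's solution of the companion eigenvalue problem for sums of Hermitian matrices, equivalently from the Knutson--Tao saturation theorem for Littlewood--Richardson coefficients, specialized to the flag-variety data above. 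Closing this gap, and verifying that no redundant or spurious inequalities survive (only those with $c_{w}^{uv}(a,b)\neq 0$), is where the real work lies.
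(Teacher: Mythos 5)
You should first be aware that the paper does not prove this theorem at all: immediately after the statement it says that the proof ``involves intersection theory for flag varieties and Schubert calculus and is far too sophisticated to be sketched in this review,'' and it simply refers the reader to Klyachko's original paper and to the literature (e.g.\ Manivel for the Schubert calculus, Daftuar for an introduction to quantum marginals). So there is no internal proof to compare against; your outline can only be judged against the strategy in that literature, and there it is broadly faithful: fixing $\underline\lambda$ confines $\rho$ to a coadjoint orbit of $U(n_1 n_2)$, the partial traces are the moment map for the subgroup $U(n_1)\times U(n_2)$, Kirwan's convexity theorem makes the admissible pairs of marginal spectra a convex polytope, and the facet description via one-parameter subgroups, Kempf--Ness theory and Schubert structure constants is exactly the Berenstein--Sjamaar/Klyachko mechanism the paper alludes to. As a roadmap this is correct, and it is also honest: nothing in your three steps is actually carried out, so what you have is a plan, not a proof.

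The one concrete misstep is in your final paragraph. The Knutson--Tao saturation theorem concerns Littlewood--Richardson coefficients and closes the sufficiency gap for the Horn problem (eigenvalues of sums of Hermitian matrices); it is not the right tool here. In the representation-theoretic approach to the bipartite marginal problem the multiplicities that arise are Kronecker coefficients of the symmetric group, and these are known \emph{not} to be saturated, which is precisely why Klyachko (and Christandl--Mitchison) must argue asymptotically. In the symplectic formulation that you yourself set up, no saturation statement is needed at all: once the admissible triples are identified with the moment polytope, every point of the polytope is realized by definition, and ``sufficiency'' reduces to showing that the listed inequalities cut out exactly that polytope --- which is the genuinely hard GIT step (Kempf's optimal destabilizing one-parameter subgroups, Kirwan stratification) in Berenstein--Sjamaar's and Klyachko's arguments. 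So the gap you flag is real, but the tool you propose to close it with would not do so.
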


The proof of Klyachko's theorem involves intersection theory for flag varieties and Schubert calculus and is far too sophisticated to be sketched in this  review (see, e.g. \cite{Manivel}). For a somewhat less ambitious introduction to the problem of quantum marginals the reader may consult the work quoted in $\cite{Daftuar}$.

The number of inequalities appearing in Theorem \ref{Klya}  grows very fast in the  dimensions of the spaces  $H_1$ and $H_2$. For instance, for $n_1=2$ and $n_2=4$, one has already 234 inequalities. The usefulness of Theorem \ref{Klya} for concrete purposes of physics thus appears to be rather limited. Moreover, we are not aware of interesting generalizations of this theorem to infinite-dimensional Hilbert spaces.

The following special case is elementary: If the state $\rho$ of the composed system is a $pure$ state (i.e., $\rho$ is given by a unit vector in $H_1 \otimes{H}_{2}$) then its marginals, $\rho_1$ and $\rho_2$, are isospectral ($\underline\lambda_1 = \underline\lambda_2$) but by no means necessarily $pure$. The states $\rho_1$ and $\rho_2$ are pure if and only if $\rho$ is separable. Thus ``knowledge'' of the precise state of the composed system does not, in general, imply that the state on the algebra of possible events in a subsystem is pure, too - in contrast to the situation encountered in realistic theories. This observation offers a way to understand why the probabilities of sequences of events predicted by a pure state of a quantum system do not, in general, satisfy 0--1 laws; or, put differently, quantum systems are not, in general, deterministic.

\section{Tsirelson's work on Bell's inequalities}
\label{A2}
We consider two systems, $S_1$ and $S_2$, that do not interact with each other and the composed system $S:=S_1 \vee S_2$. The following notations and definitions are used.
\begin{itemize}
\item $H_1$, $H_2$ are the state spaces (separable Hilbert spaces) of $S_1$ and $S_2$, respectively. 
\\The category of Hilbert spaces is denoted by $\mathcal{H}$.
\item  We define families, $\mathcal{D}_{H_i}$, of operators on $H_i$ by
$$\mathcal{D}_{H_i}:= \lbrace A \in B(H_i) \mid A^{*}=A, \vert \vert A \vert\vert \leq 1\rbrace, \text{  } i=1,2.$$
\item We restrict our attention to normal states given by density matrices $\rho$, i.e., positive trace-class operators of trace $1$, acting on $H:=H_1 \otimes H_2$. We denote the convex set of density matrices on $H$ by $\mathcal{S}_{H}$.
\item The family of probability spaces, $(\Omega, \mu)$, is called $Prob$. The set of real bounded random variables on $(\Omega, \mu)$, with absolute value bounded above by $1$, is denoted by $\mathcal{D}^{\mathbb{R}}_{\Omega,\mu}$
\end{itemize}

We fix a pair $(K,L)$ of natural numbers. In the following definitions, $k \in \lbrace 1,...,K \rbrace$ and $l \in \lbrace 1,...,L \rbrace$.
\begin{definition} Quantum correlation  matrices
$$\mathcal{M}^{K,L}_{Q}:= 
\lbrace  \Gamma \in M_{K\times{L}}(\mathbb{R}) \mid \Gamma_{kl}= tr(\rho A_k \otimes B_l), \rho \in \mathcal{S}_{H_{1} \otimes{H_2}},  A_k \in \mathcal{D}_{H_1},  B_l \in \mathcal{D}_{H_2}, H_1, H_2 \in \mathcal{H}\rbrace$$
\end{definition}
A subspace of the set, $\mathcal{M}^{K,L}_{Q}$, of quantum correlation matrices is the set of classical correlation matrices, $\mathcal{M}^{K,L}_{C}$, for which the operators $A_k$, $k = 1,..., K$, and $B_l$, $l = 1,..., L$, all commute, (i.e., may represent physical quantities of a realistic system).
\begin{definition} 
Classical correlations  matrices
$$\mathcal{M}^{K,L}_{C}:= \lbrace  \Gamma \in \mathcal{M}^{K,L}_{Q} \mid  \forall (k,k') \in \lbrace 1,...,K \rbrace^{\times2} ,  A_{k} A_{k'}=A_{k'} A_k ; \forall( l,l')\in \lbrace 1,...,L \rbrace^{\times2}, B_{l} B_{l'}=B_{l'} B_l\rbrace$$
\end{definition}
We list some properties of  $\mathcal{M}^{K,L}_{Q}$ and $\mathcal{M}^{K,L}_{C}$. In the following, $(.,.)$ denotes the scalar product on $\mathbb{R}^{n}$. Furthermore, the set of  p-tuples of $p$ unit  vectors in $\mathbb{R}^{n}$ is denoted by $T_{p,n}$.

{\begin{lemma}
\label{Tsitsi}
The spaces $\mathcal{M}^{K,L}_{Q}$ and $\mathcal{M}^{K,L}_{C}$ can be characterized as follows.
\begin{enumerate}[(a)]
  \item An equivalent description of $\mathcal{M}^{K,L}_{Q}$
 $$\mathcal{M}^{K,L}_{Q}=  \lbrace  \Gamma \in M_{K\times L}(\mathbb{R}) \mid  \Gamma_{kl}=  (x_k,y_l),  \text{ where }   (x_1,...,x_K, y_1,...,y_L) \in T_{K+L,K+L} \rbrace$$
  \item An equivalent description of $\mathcal{M}^{K,L}_{C}$
$$\mathcal{M}^{K,L}_{C}:= 
\lbrace  \Gamma \in \mathcal{M}^{K,L}_{Q} \mid 
\Gamma_{kl}= \int_{\Omega} a_k(\omega) b_l(\omega) d\mu(\omega), \text{ }  (\Omega,\mu) \in \text{Prob, }
   a_1,...,a_K,b_1,...,b_L \in \mathcal{D}^{\mathbb{R}}_{\Omega,\mu} \rbrace$$

  \item The set $ \mathcal{M}^{K,L}_{Q}$ is a convex compact subset of the set $ M_{K\times{L}}(\mathbb{R})$ of real $K\times{L}$ matrices; $ \mathcal{M}^{K,L}_{C}$ is a convex polytope in $M_{K\times{L}}(\mathbb{R})$.
\end{enumerate}
\end{lemma}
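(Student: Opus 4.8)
The plan is to prove the three parts in the order (a), (b), (c), since the vector picture established in (a) is the workhorse for the topological statements in (c), while the measure-theoretic results already recorded above (Gel'fand's theorem and the Riesz--Markov theorem) will dispatch one inclusion of (b) almost immediately.

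For (a) I would argue both inclusions. For ``quantum $\Rightarrow$ vectors'', start from $\Gamma_{kl}=\mathrm{tr}(\rho\,A_k\otimes B_l)$, purify $\rho$ to a unit vector $|\psi\rangle$ on an enlarged space, and set $x_k:=(A_k\otimes I)|\psi\rangle$ and $y_l:=(I\otimes B_l)|\psi\rangle$. Then $\langle x_k,y_l\rangle=\langle\psi|A_k\otimes B_l|\psi\rangle=\Gamma_{kl}$, while $\|x_k\|^2=\langle\psi|A_k^2\otimes I|\psi\rangle\le1$ and $\|y_l\|\le1$ because $A_k,B_l$ are contractions. Passing to the real Hilbert space with inner product $\mathrm{Re}\langle\cdot,\cdot\rangle$ keeps the inner products equal to the real numbers $\Gamma_{kl}$ and preserves norms, after which I would pad each $x_k$ and each $y_l$ with \emph{disjoint} extra orthonormal coordinates to raise every norm to $1$ without disturbing any cross inner product; restricting to the span of the resulting $K+L$ vectors lands them in $T_{K+L,K+L}$. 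For the converse ``vectors $\Rightarrow$ quantum'' I would invoke a Clifford (Jordan--Wigner) representation: pick real symmetric generators $\gamma_1,\dots,\gamma_{K+L}$ with $\gamma_i\gamma_j+\gamma_j\gamma_i=2\delta_{ij}I$ on a finite-dimensional space $\mathcal{K}$ and set $c(x):=\sum_i x^{(i)}\gamma_i$, so $c(x)$ is a self-adjoint contraction with $c(x)^2=\|x\|^2I$ and $\tfrac12\{c(x),c(y)\}=(x,y)I$. Taking the maximally entangled state $|\psi\rangle$ on $\mathcal{K}\otimes\mathcal{K}$ with $A_k:=c(x_k)$, $B_l:=c(y_l)$, the identity $\mathrm{tr}[c(x),c(y)]=0$ yields $\mathrm{tr}(\rho\,A_k\otimes B_l)=\tfrac{1}{\dim\mathcal{K}}\mathrm{tr}(c(x_k)c(y_l))=(x_k,y_l)$, as required.

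For (b), the inclusion of $\mathcal{M}^{K,L}_{C}$ into the integral-form set is where the theorems above do the work: when the $A_k$ mutually commute and the $B_l$ mutually commute, the whole family $\{A_k\otimes I,\,I\otimes B_l\}$ generates a commutative $C^{*}$-algebra, so by Gel'fand's theorem it is isometrically isomorphic to $\mathcal{C}(\Omega)$ for a compact Hausdorff $\Omega$, under which $A_k\otimes I\mapsto a_k$ and $I\otimes B_l\mapsto b_l$ with $\|a_k\|_\infty,\|b_l\|_\infty\le1$; the Riesz--Markov theorem turns the state $\omega=\mathrm{tr}(\rho\,\cdot\,)$ restricted to this algebra into a probability measure $\mu$, and since $A_k\otimes B_l\mapsto a_k b_l$ one reads off $\Gamma_{kl}=\int_\Omega a_k b_l\,d\mu$ with $a_k,b_l\in\mathcal{D}^{\mathbb{R}}_{\Omega,\mu}$. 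For the reverse inclusion I would observe that any such integral equals $\Gamma=\int_\Omega a(\omega)\otimes b(\omega)\,d\mu(\omega)$ with $a(\omega)\in[-1,1]^K$, $b(\omega)\in[-1,1]^L$, hence is a convex combination of rank-one matrices $u\otimes v$; writing $[-1,1]^K=\mathrm{conv}\{\pm1\}^K$ (and similarly for $L$) expresses each $u\otimes v$, and therefore $\Gamma$, as a convex combination of the finitely many sign matrices $\epsilon\otimes\delta$, $\epsilon\in\{\pm1\}^K$, $\delta\in\{\pm1\}^L$. Each $\epsilon\otimes\delta$ is realized by the scalar commuting operators $A_k=\epsilon_k I$, $B_l=\delta_l I$, and closing under convex combinations returns $\Gamma$ to $\mathcal{M}^{K,L}_{C}$.

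Finally, for (c): convexity of both $\mathcal{M}^{K,L}_{Q}$ and $\mathcal{M}^{K,L}_{C}$ follows from a direct-sum/ancilla construction—given realizations on $H_i,H_i'$, put $\tilde H_i=H_i\oplus H_i'$, $\tilde A_k=A_k\oplus A_k'$, $\tilde B_l=B_l\oplus B_l'$ and $\tilde\rho=\lambda\rho\oplus0\oplus0\oplus(1-\lambda)\rho'$ supported on the two ``diagonal'' blocks of $\tilde H_1\otimes\tilde H_2$, giving $\lambda\Gamma+(1-\lambda)\Gamma'$ and manifestly preserving commutativity in the classical case. Compactness of $\mathcal{M}^{K,L}_{Q}$ comes for free from (a): it is the image of the compact product of spheres $T_{K+L,K+L}$ under the continuous Gram map $(x_k,y_l)\mapsto((x_k,y_l))_{kl}$, with boundedness by Cauchy--Schwarz. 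That $\mathcal{M}^{K,L}_{C}$ is a polytope is exactly the computation of the previous paragraph, namely that it is the convex hull of the finite vertex set $\{\epsilon\otimes\delta\}$. I expect the Clifford half of (a)—the converse direction of Tsirelson's theorem—to be the main obstacle: one must produce the real anticommuting generators, verify the polarization identity $\tfrac12\{c(x),c(y)\}=(x,y)I$ and the vanishing of $\mathrm{tr}[c(x),c(y)]$, and ensure $A_k,B_l$ emerge as genuine self-adjoint contractions; by comparison the reductions to unit norm and to $\mathbb{R}^{K+L}$ are routine, and (b)--(c) are then essentially corollaries of (a) together with Gel'fand and Riesz--Markov.
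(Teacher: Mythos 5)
Your proposal is correct, but there is no in-paper argument to compare it against: the authors explicitly decline to prove this lemma, remarking only that a proof ``may be found in the work of Tsirelson'' and is too long to reproduce. What you have written is, in substance, Tsirelson's original argument, so you have supplied exactly what the paper omits. All four ingredients are sound and standard: purification of $\rho$ followed by padding with mutually orthogonal auxiliary directions, which fixes the norms at $1$ without disturbing the cross inner products $(x_k,y_l)$; the Clifford construction $c(x)=\sum_i x^{(i)}\gamma_i$ evaluated in a maximally entangled state for the converse of (a); Gel'fand plus Riesz--Markov for one inclusion of (b), and reduction of the integral representation to the finite set of sign matrices $\epsilon\otimes\delta$ for the other; and the direct-sum/block-state construction for convexity together with the Gram-map image of the compact product of spheres for compactness in (c). Two points should be made explicit to make the write-up airtight. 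First, in the converse of (a), the requirement that the $\gamma_i$ be \emph{real symmetric} rather than merely Hermitian is essential, because the maximally-entangled-state identity reads $\langle\psi|A\otimes B|\psi\rangle=\tfrac{1}{\dim\mathcal{K}}\mathrm{tr}\bigl(AB^{T}\bigr)$, so you need $B^{T}=B$; such generators do exist for any number of generators (realify the Jordan--Wigner construction: doubling the dimension turns Hermitian matrices into real symmetric ones while preserving the anticommutation relations), and this existence claim is the one nontrivial input of that direction, so it deserves a sentence. Second, the closing step of (b) invokes convexity of $\mathcal{M}^{K,L}_{C}$, which you only establish in (c); since the direct-sum argument is independent of (b) this is merely a matter of ordering, but you should also justify passing from the integral $\int_{\Omega}a(\omega)\otimes b(\omega)\,d\mu(\omega)$ to membership in the convex hull: the integrand takes values in the compact convex set $\mathrm{conv}\lbrace \epsilon\otimes\delta\rbrace\subset M_{K\times L}(\mathbb{R})$, and the barycenter of a probability measure supported on a compact convex subset of a finite-dimensional space lies in that set. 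With these two clarifications, your argument in fact proves slightly more than stated, namely the explicit identification $\mathcal{M}^{K,L}_{C}=\mathrm{conv}\lbrace\epsilon\otimes\delta \mid \epsilon\in\lbrace\pm1\rbrace^{K},\,\delta\in\lbrace\pm1\rbrace^{L}\rbrace$, which is what the CHSH discussion following the lemma uses.
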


Convex polytopes are defined as follows \cite{Grunbaum}. 
\begin{definition}
Convex polytopes in $\mathbb{R}^{n}$
\end{definition}
A convex polytope, $\mathcal{P}$, in $\mathbb{R}^{n}$ is a convex compact set with a finite number of extreme points. Equivalently, a convex polytope is the convex hull of a finite set.

The following lemma follows directly from this definition.
\begin{lemma}
A convex polytope $\mathcal{P}$ in $\mathbb{R}^{n}$ can be written as a finite intersection of closed halfspaces $H_{y^{(i)},\alpha_i}$, where $H_{y^{(i)},\alpha_i}:=\lbrace x \in \mathbb{R}^{n} \mid (x,y^{(i)}) \leq \alpha_i \rbrace$, with $\alpha_i \in \mathbb{R}$ and $0\neq y^{(i)} \in \mathbb{R}^{n} $.
\end{lemma}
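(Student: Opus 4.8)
The plan is to prove the two standard inclusions, exhibiting $\mathcal{P}$ as the intersection of the supporting halfspaces attached to its facets. Write $\mathcal{P}=\mathrm{conv}\{v_1,\dots,v_m\}$ (the extreme points, finite in number by the preceding definition), and let $d=\dim\mathcal{P}$ be the dimension of its affine hull $\mathrm{aff}(\mathcal{P})$. First I would dispose of the degenerate directions: $\mathrm{aff}(\mathcal{P})$ has codimension $n-d$, hence is the solution set of $n-d$ affine equations $(x,z_r)=c_r$, and each such equation is the intersection of the two halfspaces $(x,z_r)\le c_r$ and $(x,-z_r)\le -c_r$. Adjoining these $2(n-d)$ halfspaces to the list, it remains to represent $\mathcal{P}$, now viewed as a full-dimensional body inside $\mathrm{aff}(\mathcal{P})\cong\mathbb{R}^d$, as a finite intersection of halfspaces; so from here on I assume $\mathcal{P}$ has nonempty interior.

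For the finiteness I introduce the support function $h(y):=\max_{x\in\mathcal{P}}(x,y)=\max_{1\le i\le m}(v_i,y)$ and, for each $y\neq 0$, the exposed face $F_y:=\{x\in\mathcal{P}\mid (x,y)=h(y)\}$. Since a linear functional on a convex hull attains its maximum at a vertex, $F_y=\mathrm{conv}\{v_i\mid (v_i,y)=h(y)\}$ is determined entirely by which subset of $\{v_1,\dots,v_m\}$ realises the maximum; as there are only finitely many subsets, there are only finitely many distinct faces, and in particular finitely many facets, i.e. faces of dimension $d-1$. Let $F_1,\dots,F_k$ be the facets, with outer normals $y_1,\dots,y_k$ and supporting halfspaces $H_j:=\{x\mid (x,y_j)\le h(y_j)\}$; the inclusion $\mathcal{P}\subseteq\bigcap_{j=1}^{k}H_j$ is immediate from the definition of $h$.

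The substantive step is the reverse inclusion $\bigcap_j H_j\subseteq\mathcal{P}$, which I would argue by contradiction. Fix an interior point $z$ of $\mathcal{P}$ and suppose $x_0\in\bigcap_jH_j\setminus\mathcal{P}$. The segment $t\mapsto p(t)=(1-t)z+tx_0$ starts inside and ends outside the compact convex set $\mathcal{P}$, so it first meets the boundary at some $p=p(t_0)$ with $0<t_0<1$. The key fact is that the boundary of a full-dimensional polytope is the union of its facets, so $p\in F_{j}$ for some $j$, whence $(p,y_{j})=h(y_{j})$; moreover $(z,y_{j})<h(y_{j})$ strictly, since $z$ is interior and therefore not on any supporting hyperplane. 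As $t\mapsto(p(t),y_{j})$ is affine and climbs from a value strictly below $h(y_j)$ at $t=0$ to exactly $h(y_j)$ at $t=t_0\in(0,1)$, it must exceed $h(y_j)$ at $t=1$, giving $(x_0,y_{j})>h(y_{j})$ and contradicting $x_0\in H_{j}$. This yields $\mathcal{P}=\bigcap_{j}H_j$, and together with the affine-hull halfspaces gives the finite representation, the trivial inequalities with $y^{(i)}=0$ being discarded.

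The main obstacle is precisely the geometric fact invoked above, that every boundary point of a full-dimensional polytope lies on a facet (equivalently, every proper face is contained in a facet). This is standard: a boundary point admits a supporting hyperplane cutting out a proper exposed face through it, and one then climbs the graded face lattice of the polytope to a maximal proper face of dimension $d-1$. I would either cite this or prove it by induction on dimension. A fully self-contained alternative, bypassing all face combinatorics, is to observe that $\mathcal{P}$ is the image of the polyhedron $\{(x,\lambda)\mid x=\sum_i\lambda_i v_i,\ \lambda_i\ge 0,\ \sum_i\lambda_i=1\}\subset\mathbb{R}^{n+m}$ under the projection $(x,\lambda)\mapsto x$, and to eliminate $\lambda_1,\dots,\lambda_m$ one at a time by Fourier--Motzkin elimination: each elimination turns a finite system of linear inequalities into another finite system, so the projection $\mathcal{P}$ is cut out by finitely many inequalities of the form $(x,y^{(i)})\le\alpha_i$, which is exactly the asserted representation.
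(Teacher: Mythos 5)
Your proposal is correct, but there is nothing in the paper to compare it against in detail: the paper does not prove this lemma, it simply refers the reader to ``standard text books on convex polytopes'' (Gr\"unbaum being the cited reference). What you have written is essentially the content of those textbook proofs, namely the bounded case of the Minkowski--Weyl theorem ($V$-representation implies $H$-representation). Your main geometric argument --- reduce to the full-dimensional case, take the finitely many facet-supporting halfspaces, and use the segment argument from an interior point to show that any point outside $\mathcal{P}$ violates one of them --- is the classical route, and you correctly isolate its one nontrivial ingredient: that the boundary of a full-dimensional polytope is covered by its $(d-1)$-dimensional faces. You are right to flag this as the crux; leaving it as ``standard'' would put the proof's entire weight on an uncited fact, since everything else (finiteness of the faces via vertex subsets, strict inequality at the interior point, the affine function along the segment) is routine. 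Your Fourier--Motzkin alternative closes exactly that gap: it is fully self-contained, needs no face-lattice combinatorics, treats lower-dimensional and degenerate polytopes uniformly, and directly outputs the finite system of inequalities; its only cost is that it is blind to the geometry, giving no information about which halfspaces are actually needed, whereas the facet description is the irredundant one. Two small points you gloss over but which are harmless: the halfspaces produced inside $\mathrm{aff}(\mathcal{P})$ must be extended to halfspaces of $\mathbb{R}^{n}$ (take the same normal vector in the direction space of $\mathrm{aff}(\mathcal{P})$ and intersect with the affine-hull halfspaces), and the discarding of inequalities with $y^{(i)}=0$ is legitimate because such inequalities are either vacuous or would force $\mathcal{P}=\emptyset$. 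Either of your two arguments, written out, would be an adequate replacement for the citation the paper relies on.
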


The proof of this lemma can be found in standard text books on convex polytopes. A proof of Lemma 6 may be found in the work of Tsirelson. Though elementary, it is too long to be reproduced here.

To state Tsirelson's main result on Bell's inequalities, we need a simple version of an inequality on tensor products due to Grothendieck.
\begin{theorem} (Grothendieck)
\label{Grot}
Let $n \in \mathbb{N}$, and  let $a_{ij}$ be a real $n \times n$ matrix. If, for any $\underline{s}=(s_1,...,s_n), \underline{t}=(t_1,...,t_n) \in \mathbb{R}^{n}$,
$$\vert \sum_{i,j=1}^{n} a_{ij} s_i t_{j}\vert\leq \max_{i} \vert s_i \vert \max_{j} \vert t_j \vert $$
then, for an arbitrary set of vectors $x_i, y_j  \in H$, where $H$ is a Hilbert space,
$$ \vert  \sum_{i,j=1}^{n} a_{ij} (x_i,y_j) \vert  \leq K_n \max_{i} \vert \vert x_i \vert \vert \max_{j} \vert \vert y_j \vert \vert, $$
for some constant $1< K_n \in \mathbb{R}_{+}$.
The smallest constant larger or equal to $K_n$, for all $n \in \mathbb{N}$, is denoted by $K_{G}$ and is called Grothendieck's constant.
\end{theorem}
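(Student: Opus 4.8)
The plan is to prove the statement by Krivine's Gaussian rounding argument, which in fact delivers a bound uniform in $n$, namely $K_n \le \pi/(2\ln(1+\sqrt{2}))$, and hence identifies $K_G$ as this finite number. First the reductions: the asserted inequality is homogeneous of degree one in the norm of each vector, so after replacing $a_{ij}$ by $a_{ij}\|x_i\|\,\|y_j\|$ and $x_i,y_j$ by the corresponding unit vectors, I may assume all $x_i,y_j$ are unit vectors; the rescaled matrix still satisfies the hypothesis, since $\|x_i\|,\|y_j\|\le 1$ after normalizing $\max_i\|x_i\|=\max_j\|y_j\|=1$. Next, for fixed $t$ the form $\sum_{ij}a_{ij}s_it_j$ is linear in $s$, so its supremum over the cube $\{|s_i|\le 1\}$ is attained at a vertex, and similarly in $t$; thus the hypothesis is equivalent to $|\sum_{ij}a_{ij}s_it_j|\le 1$ for every pair of sign vectors $s,t\in\{-1,+1\}^n$, and this is the only form of it I will use.

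The second ingredient is the Grothendieck rounding identity. Letting $g$ be a standard Gaussian vector on the span of the relevant unit vectors, I would prove that $\mathbb{E}[\text{sgn}\langle g,u\rangle\,\text{sgn}\langle g,v\rangle]=\tfrac{2}{\pi}\arcsin\langle u,v\rangle$ for all unit vectors $u,v$. This reduces to a two-dimensional computation: projecting onto the plane spanned by $u,v$ and using rotational invariance of the Gaussian, if $\theta$ denotes the angle between $u$ and $v$ then the two signs disagree with probability $\theta/\pi$, so the expectation equals $1-2\theta/\pi=\tfrac{2}{\pi}\arcsin\langle u,v\rangle$.

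The crux is to manufacture unit vectors whose sign-correlations reproduce the original inner products up to a universal scalar. Set $\beta:=\ln(1+\sqrt{2})$, the unique root of $\sinh\beta=1$, and work in the Hilbert space $H':=\bigoplus_{k\ge 0}H^{\otimes(2k+1)}$. I would define $u_i:=\bigoplus_k \sqrt{\beta^{2k+1}/(2k+1)!}\;x_i^{\otimes(2k+1)}$ and $v_j:=\bigoplus_k (-1)^k\sqrt{\beta^{2k+1}/(2k+1)!}\;y_j^{\otimes(2k+1)}$. Since $\langle x_i^{\otimes(2k+1)},y_j^{\otimes(2k+1)}\rangle=\langle x_i,y_j\rangle^{2k+1}$, one gets $\langle u_i,v_j\rangle=\sum_k(-1)^k\beta^{2k+1}\langle x_i,y_j\rangle^{2k+1}/(2k+1)!=\sin(\beta\langle x_i,y_j\rangle)$, while $\|u_i\|^2=\|v_j\|^2=\sum_k\beta^{2k+1}/(2k+1)!=\sinh\beta=1$, so $u_i,v_j$ are unit vectors. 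The delicate point, which is exactly what fixes the constant, is the simultaneous demand that these tensor-power vectors be normalized and that their inner products equal $\sin(\beta\langle x_i,y_j\rangle)$ rather than $\sinh$: placing the signs $(-1)^k$ on one family only realizes $\sin$, and the normalization constraint then forces $\sinh\beta=1$.

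Finally I would combine the pieces. Because $\beta<\pi/2$, the argument $\beta\langle x_i,y_j\rangle$ lies in $[-\pi/2,\pi/2]$, so $\arcsin(\sin(\beta\langle x_i,y_j\rangle))=\beta\langle x_i,y_j\rangle$, and the identity gives $\mathbb{E}[\text{sgn}\langle g,u_i\rangle\,\text{sgn}\langle g,v_j\rangle]=\tfrac{2\beta}{\pi}\langle x_i,y_j\rangle$. Multiplying by $a_{ij}$, summing, and taking expectations, each realization of $g$ produces a sign pair $(s,t)$, so the reduced hypothesis bounds the integrand by $1$ pointwise; hence $|\tfrac{2\beta}{\pi}\sum_{ij}a_{ij}\langle x_i,y_j\rangle|\le 1$, which is the claim with $K_n\le \pi/(2\beta)=\pi/(2\ln(1+\sqrt{2}))$, independent of $n$. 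That the constant genuinely exceeds $1$ follows from a small explicit example: the $2\times 2$ CHSH matrix already yields $|\sum_{ij}a_{ij}\langle x_i,y_j\rangle|=\sqrt{2}>1$, matching the violation $2\sqrt{2}$ computed earlier in these notes, so $K_n\ge\sqrt{2}$ and $K_G$ is a finite constant strictly larger than $1$.
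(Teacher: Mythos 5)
Your proof is correct, but it cannot be compared against an argument in the paper for a simple reason: the paper does not prove this theorem at all — it states it and defers to the literature (``For a proof see, e.g., \cite{Pisier}''), since Grothendieck's inequality is only used there as an input to Tsirelson's theorem. What you supply is a complete, self-contained proof via Krivine's Gaussian-rounding argument, and all the steps check out: the reduction to sign vectors by bilinearity and rescaling is sound; the identity $\mathbb{E}\left[\mathrm{sgn}\langle g,u\rangle\,\mathrm{sgn}\langle g,v\rangle\right]=\tfrac{2}{\pi}\arcsin\langle u,v\rangle$ is the standard two-dimensional computation; the odd-tensor-power construction with alternating signs on one family correctly realizes $\langle u_i,v_j\rangle=\sin\bigl(\beta\langle x_i,y_j\rangle\bigr)$ with unit norms exactly when $\sinh\beta=1$, i.e.\ $\beta=\ln(1+\sqrt{2})$; and since $\beta<\pi/2$ the arcsine unwinds linearly, giving $\bigl\vert\sum_{ij}a_{ij}(x_i,y_j)\bigr\vert\le\pi/(2\beta)$ pointwise from the sign-vector hypothesis. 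Your route buys strictly more than the statement demands: it produces an explicit bound $K_n\le\pi/\bigl(2\ln(1+\sqrt{2})\bigr)\approx 1.78$ that is \emph{uniform in} $n$, which is precisely what is needed (and is tacitly assumed in the paper's statement) for the supremum defining $K_G$ to be finite; and your closing CHSH example showing $K_2\ge\sqrt{2}$ ties the abstract constant back to the quantum-versus-classical correlation gap computed in Appendix B of the paper. The paper's citation-only treatment is defensible for a review, but your argument is the stronger contribution here.
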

For a proof see, e.g., \cite{Pisier}.

\begin{theorem} (Tsirelson)
\label{tsi}
Let $\Gamma \in \mathcal{M}^{K,L}_{Q}$. Then $K_{G}^{-1} \Gamma \in  \mathcal{M}^{K,L}_{C}.$
\end{theorem}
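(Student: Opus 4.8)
The plan is to exploit the dual (supporting-hyperplane) description of the polytope $\mathcal{M}^{K,L}_{C}$ and to recognize Grothendieck's inequality as precisely the tool comparing the quantum value of a Bell functional with its classical value. Throughout I use Lemma~\ref{Tsitsi}(a), which writes any $\Gamma\in\mathcal{M}^{K,L}_{Q}$ as $\Gamma_{kl}=(x_k,y_l)$ for unit vectors $x_1,\dots,x_K,y_1,\dots,y_L$ in the real Hilbert space $\mathbb{R}^{K+L}$, together with Lemma~\ref{Tsitsi}(c), which guarantees that $\mathcal{M}^{K,L}_{C}$ is a closed, centrally symmetric convex polytope.

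First I would record the support function of $\mathcal{M}^{K,L}_{C}$. For a test matrix $A=(a_{kl})\in M_{K\times L}(\mathbb{R})$, Lemma~\ref{Tsitsi}(b) gives, for any classical $\Gamma'_{kl}=\int_\Omega a_k(\omega)b_l(\omega)\,d\mu(\omega)$ with $|a_k|,|b_l|\le 1$,
$$\sum_{k,l}a_{kl}\Gamma'_{kl}=\int_\Omega\Big(\sum_{k,l}a_{kl}a_k(\omega)b_l(\omega)\Big)d\mu(\omega)\le\max_{|s_k|\le1,\,|t_l|\le1}\sum_{k,l}a_{kl}s_k t_l=:C(A),$$
the last maximum being attained at a vertex $s\in\{\pm1\}^K,\ t\in\{\pm1\}^L$ since the form is affine in each variable. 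Hence the support function of $\mathcal{M}^{K,L}_{C}$ equals $C(A)$, and because the polytope is closed and convex, the halfspace lemma (expressing it as a finite intersection of the $H_{y^{(i)},\alpha_i}$) yields the clean criterion: a matrix $M$ lies in $\mathcal{M}^{K,L}_{C}$ if and only if $\sum_{k,l}a_{kl}M_{kl}\le C(A)$ for every $A$.

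The heart of the argument is then to verify this criterion for $M=K_G^{-1}\Gamma$. Fixing $A$ (the case $C(A)=0$ being trivial) and normalizing so that $C(A)=1$, i.e. $|\sum_{k,l}a_{kl}s_kt_l|\le1$ whenever $|s_k|,|t_l|\le1$ (using central symmetry to pass to the absolute value), I am exactly in the hypothesis of Theorem~\ref{Grot}. After padding $A$ with zero rows and columns to a square $N\times N$ matrix with $N=K+L$, which changes neither hypothesis nor conclusion and keeps the constant bounded by $K_G$, the theorem applied to the unit vectors $x_k,y_l$ gives $|\sum_{k,l}a_{kl}(x_k,y_l)|\le K_G$. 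Thus $\sum_{k,l}a_{kl}\,K_G^{-1}\Gamma_{kl}\le1=C(A)$, and by homogeneity in $A$ the same inequality holds for every test matrix. The criterion from the previous step then places $K_G^{-1}\Gamma$ in $\mathcal{M}^{K,L}_{C}$.

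I expect the main obstacle to be the bookkeeping in the supporting-hyperplane step: one must check that the supremum of a linear functional over $\mathcal{M}^{K,L}_{C}$ is genuinely the box-constrained bilinear maximum $C(A)$ (using that the extreme classical correlations are the deterministic sign products $s_k t_l$), and that this $C(A)$ is exactly the left-hand quantity in Grothendieck's hypothesis. Once these identifications are made, the conclusion $K_G^{-1}\Gamma\in\mathcal{M}^{K,L}_{C}$ is immediate; the real/complex distinction causes no trouble because Lemma~\ref{Tsitsi}(a) already furnishes real vectors, matching the real form of Theorem~\ref{Grot}.
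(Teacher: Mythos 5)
Your proposal is correct and takes essentially the same route as the paper's proof: both arguments characterize $\mathcal{M}^{K,L}_{C}$ dually by linear functionals normalized so that the classical bound is the box-constrained bilinear maximum (the paper quantifies over the finitely many facet inequalities of the polytope, you quantify over all test matrices via the support function --- a cosmetic difference, though yours needs only closedness and convexity rather than the finite facet structure), and both then feed the unit-vector representation $\Gamma_{kl}=(x_k,y_l)$ of Lemma \ref{Tsitsi}(a), after zero-padding, into Grothendieck's inequality to get the factor $K_G^{-1}$. There is no gap.
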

\begin{proof}
As $\mathcal{M}^{K,L}_{C}$ is a convex polytope in $\mathbb{R}^{KL}$, it can be written as a finite intersection of $d$ closed halfspaces $H_{y^{(m)},\alpha_m}$, and $\Gamma \in \mathcal{M}^{K,L}_{C}$ iff $(\Gamma, y^{(m)}) \leq \alpha_m$, for all $m=1,...,d$. If $\Gamma \in \mathcal{M}^{K,L}_{C}$, then $- \Gamma \in \mathcal{M}^{K,L}_{C}$, which follows from Lemma \ref{Tsitsi} by changing $a_k \mapsto -a_{k}$. Thus, for $\Gamma \in \mathcal{M}^{K,L}_{C}$, we also have that $(\Gamma, y^{(m)}) \geq -\alpha_m$, i.e., if  $H_{y^{(m)},\alpha_m}$ appears in the intersection defining $\mathcal{M}^{K,L}_{C}$ then so does $H_{-y^{(m)},\alpha_m}$. Grouping such symmetric halfspaces into pairs and labeling these pairs, we conclude that$\Gamma \in \mathcal{M}^{K,L}_{C}$ iff $\vert (\Gamma, y^{(m)})  \vert \leq \vert \alpha_m \vert$, for $m=1,...,\frac{d}{2}$. We divide each $y^{(m)}$ by $\alpha_m $ and denote the resulting vector by $a^{(m)}$. Then $\Gamma \in \mathcal{M}^{K,L}_{C}$ iff
$$\vert (\Gamma,a^{(m)}) \vert = \vert \sum_{k,l=1}^{K,L} \Gamma_{kl} a^{(m)}_{kl}  \vert \leq 1 $$
These inequalities hold, in particular, for the special set of matrices $\Gamma$ for which $\Gamma_{kl}= s_k t_l$, where $\vert s_k \vert, \vert t_l \vert \leq 1$. Such matrices are classical correlation matrices coming from constant random variables $s_k$,$t_l$.  For arbitrary $ \vert s_k \vert , \vert t_l \vert \leq 1$, the $K \times L$ matrix $a^{(m)}$ satisfies
$$\vert \sum_{k,l=1}^{K,L} s_k t_l  a^{(m)}_{kl} \vert \leq 1,$$
 for any $m=1,...,d/2.$ 
Let $n:=\max(K,L)$ and enlarge $a^{(m)}_{kl} \in M_{K\times L}(\mathbb{R})$ to $\tilde{a}^{(m)}_{ij} \in M_{n}(\mathbb{R})\equiv M_{n\times n}(\mathbb{R}) $ by setting to $0$ the added matrix elements. Then, for any $ \vert s_i \vert, \vert t_j \vert \leq 1$, the $n \times n$ matrix $\tilde{a}^{(m)}$ satisfies
$$\vert \sum_{i,j=1}^{n}  s_i  t_j  \tilde{a}^{m}_{ij}  \vert \leq 1$$ 

Let $\Gamma^{q} \in  \mathcal{M}^{K,L}_{Q}$. By Lemma \ref{Tsitsi}, $\Gamma^{q}_{kl}=(x_k,y_l)$, for some vectors $x_k, y_l$ in $\mathbb{R}^{K+L}$ of norm  one. We enlarge this family of vectors to a total of $2n$ vectors by setting the added ones to 0. According to Theorem \ref{Grot}, $$\vert \sum_{i,j=1}^{n}  (x_i,  y_j)  \tilde{a}^{(m)}_{ij}  \vert \leq K_n,$$
for any $m=1,...,d/2,$
i.e.,
$$\vert \sum_{k,l=1}^{K,L}  \frac{1}{K_n} \Gamma^{q}_{kl}  {a}^{(m)}_{kl}  \vert \leq 1$$ for all $m=1,...,d/2$. In other words, $K_{n}^{-1}\Gamma^{q}_{kl}$ lies in the convex polytope $\mathcal{M}^{K,L}_{C}$.
\end{proof}
Theorem \ref{tsi} does not imply that the set of quantum correlation matrices is strictly larger than the set of classical ones.  This can, however, easily be shown in special cases. A well known example is the CHSH inequality, see section 3.2.6. We set $(K,L)=(2,2)$; $A_1$,$A_2$ are quantum observables of $S_1$ and $B_1$,$B_2$   quantum observables of $S_2$, with dim$(H_i)$ = 2, for $i=1,2$. In classical theories, these observables are random variables denoted by $a_1,a_2$ and $b_1,b_2$, with $ \vert a_i\vert \leq 1$ and $ \vert b_i\vert \leq 1$.  Classical correlation matrices in $\mathcal{M}^{2,2}_{C}$ are $2 \times 2$ real matrices. Extreme points of  $\mathcal{M}^{2,2}_{C}$  are correlation matrices $\Gamma^{ex}_{kl}=a_k b_l$, with $a_k=\pm 1$, $b_l =\pm 1$. The  number of extreme points of $\mathcal{M}^{2,2}_{C}$ is equal to 8, four of them being given by
$$\Gamma^{1}_{kl}:=\left( \begin{array}{cc}
1&  1 \\
1 &  1
\end{array} \right), \text{ }
\Gamma^{2}_{kl}:=\left( \begin{array}{cc}
1&  -1 \\
1 & - 1
\end{array} \right), \text{ }
\Gamma^{3}_{kl}:=\left( \begin{array}{cc}
-1&  -1 \\
\ \ 1 & \  \ 1
\end{array} \right), \text{ }
\Gamma^{4}_{kl}:=\left( \begin{array}{cc}
\ \ 1&  -1 \\
-1 &  \ \ 1
\end{array} \right)
$$
and the remaining four by multiplying the first four matrices by $-1$. The symmetric polytope $\mathcal{M}^{2,2}_{C}$ has $8$ three-dimensional faces, and every classical correlation matrix, $\Gamma^{c}$, satisfies a set of $4$ inequalities of the form
$$\vert \sum_{ij=1}^{2} a^{(m)}_{ij} \Gamma^{c}_{ij} \vert \leq \vert \alpha_{m} \vert,$$
where $m=1,...,4$.
The independent inequalities
$$\vert  \Gamma^{c}_{11}+\Gamma^{c}_{12}+\Gamma^{c}_{21}+\Gamma^{c}_{22}-2 \Gamma^{c}_{kl} \vert \leq  2 $$
are satisfied by every extreme point of the polytope, every inequality is saturated by four of them. They  entirely characterize the set $\mathcal{M}^{2,2}_{C}$. We can write them in the form $\vert \sum_{ij=1}^{2} a^{(m)}_{ij} \Gamma^{c}_{ij} \vert \leq \vert \alpha_{m} \vert$, choosing
$$a^{(m)}:=\left( \begin{array}{cc}
 1&  1 \\
1 &  1
\end{array} \right) - 2 E^{m},$$
$\alpha_m=2$, and
$$E^{1}:=\left( \begin{array}{cc}
 1&  0 \\
0 &  0
\end{array} \right) , \text{ }
E^{2}:=\left( \begin{array}{cc}
 0&  1 \\
0 &  0
\end{array} \right) , \text{ }
E^{3}:=\left( \begin{array}{cc}
 0&  0 \\
1 &  0
\end{array} \right) , \text{ }
E^{4}:=\left( \begin{array}{cc}
 0&  0 \\
0 &  1
\end{array} \right) 
$$
These inequalities are violated by quantum correlation matrices. Let $\Gamma^{q} \in \mathcal{M}^{2,2}_{Q}$. According to Lemma \ref{Tsitsi}, there exist unit vectors $x_1,x_2,y_1,y_2 \in \mathbb{R}^{4}$ such that $ \Gamma^{q}_{kl}=(x_k,y_l)$. Without loss of generality, we consider one of the four inequalities, e.g.,
\begin{eqnarray*}
\sum_{i,j=1}^{2} a_{ij}^{(1)}
\Gamma^{q}_{ij}&=&\Gamma^{q}_{12}+\Gamma^{q}_{21}+\Gamma^{q}_{22}-\Gamma^{q}_{11}\\
&=&(x_1,y_2-y_1)+(x_2,y_1+y_2)
\end{eqnarray*}
By the Cauchy-Schwarz inequality,
$$\vert (x_1,y_2-y_1)+(x_2,y_1+y_2) \vert \leq  \vert \vert y_2-y_1  \vert \vert +  \vert   \vert  y_2+y_1 \vert \vert=\sqrt{2-2(y_1,y_2)}+\sqrt{2+2(y_1,y_2)}  $$
But, for any $x \leq 1$, 
$$(\sqrt{1-x}+\sqrt{1+x})^{2} =2+2\sqrt{1-x^{2}} \leq 4,$$ 
i.e.,
$$\vert (x_1,y_2-y_1)+(x_2,y_1+y_2) \vert \leq 2 \sqrt{2}  $$

These inequalities are saturated for $y_1 \perp y_2$ and $x_{1}=\frac{1}{\sqrt{2}}(y_2-y_1)$, $x_{2}=\frac{1}{\sqrt{2}}(y_2+y_1)$. In most textbooks on quantum mechanics, a system of two spin 1/2 particles is considered to physically interpret these quantum correlation matrices; see e.g. \cite{Haroche}.
For $(K,L)=(2,2)$, the constant $K_{2}$ has the value $\sqrt{2}$. Moreover, quantum correlation matrices  violating the classical inequalities lie outside of the classical polytope, i.e., $\mathcal{M}^{2,2}_{C} \subsetneq \mathcal{M}^{2,2}_{Q}$.

\section{Proofs of Lemmas \ref{lem1} and \ref{lem2}, subsection \ref{4.5}}
\label{A3}
We first prove Lemma $\ref{lem2}$.
\begin{proof}.
We denote the spectrum of an operator $A$ by $\sigma(A)$ and set $ f(x):=x^{2}-x$.
Because $P$ is a selfadjoint bounded operator, $\sigma(f(P))=f(\sigma(P))$, and $\vert \vert f(P) \vert \vert=\sup_{\lambda \in \sigma(P)} \vert \lambda^{2} -\lambda \vert <\epsilon$, by hypothesis. We consider the polynomials $Q_{\epsilon}(X):=X^{2}-X-\epsilon$ and $Q'_{\epsilon}(X):=-X^{2}+X-\epsilon$. The real roots of these polynomials are given by
$$x_{\pm}(\epsilon)=\frac{1 \pm \sqrt{1+4 \epsilon}}{2}, \text{ } x'_{\pm}(\epsilon)=\frac{1 \pm \sqrt{1-4 \epsilon}}{2},$$
respectively.
Denoting $$\Delta_{0}:=\left] \frac{1-\sqrt{1+4 \epsilon}}{2},\frac{1-\sqrt{1-4 \epsilon}}{2} \right[$$ and  $$\Delta_{1}:=\left]\frac{1 + \sqrt{1-4 \epsilon}}{2},\frac{1 +\sqrt{1+4 \epsilon}}{2} \right[$$ we find that $\sigma(P) \subset \Delta_{0} \cup \Delta_{1}$. \\
According to the spectral theorem,
$$P=\int_{\sigma(P)} \lambda \text{ } dE_{P}(\lambda)$$
We define $$\hat{P}:=\int_{\sigma(P) \cap \Delta_{1}} dE_{P}(\lambda)$$
Clearly, $\hat{P}$ is an orthogonal projection. Moreover, 
$$\hat{P}-P =\int_{\sigma(P) \cap \Delta_{1}} (\lambda -1 ) \text{ }  dE_{P}(\lambda) + \int_{\sigma(P) \cap \Delta_{0}} \lambda \text{ } dE_{P}(\lambda)$$
For $\lambda \in \sigma(P) \cap \Delta_{1}$, $\vert \lambda -1 \vert <\frac{2 \epsilon}{1+\sqrt{1-4 \epsilon}} $,  and, for $\lambda \in \sigma(P) \cap \Delta_{0}$, 
 $\vert \lambda \vert <\frac{2 \epsilon}{1+\sqrt{1-4 \epsilon}} $. Consequently, every element in the spectrum of $\hat{P}-P$ is smaller, in absolute value, than $\frac{2 \epsilon}{1+\sqrt{1-4 \epsilon}} $, and thus
$$\vert \vert \hat{P}-P \vert \vert \leq \frac{2 \epsilon}{1+\sqrt{1-4 \epsilon}} < 2 \epsilon.$$
\end{proof}

Next, we turn to the proof of Lemma $\ref{lem1}$.

\begin{proof}
We construct the orthogonal projections $\tilde{P}_{i}$ inductively. 

We first consider the case where $n=2$. Let $P_{1}$ and $P_{2}$ be two orthogonal projections satisfying the hypotheses of Lemma \ref{lem2}. We set $\tilde{P}_{2}:=P_{2}$. To construct $\tilde{P}_1$, we define operators $Q:=\tilde{P}_{2}^{\perp} P_1 \tilde{P}_{2}^{\perp} $ and $Q{'}:=\tilde{P}_{2} P_1 \tilde{P}_{2} $. Clearly, $Q$ and $Q{'}$ are selfadjoint bounded operators. Moreover, 
\begin{eqnarray*}
Q^{2}-Q&=&\tilde{P}_{2}^{\perp} P_1 \tilde{P}_{2}^{\perp}  P_1 \tilde{P}_{2}^{\perp}- \tilde{P}_{2}^{\perp} P_1 \tilde{P}_{2}^{\perp} \\
&=&\tilde{P}_{2}^{\perp} P_1  \left[\tilde{P}_{2}^{\perp}, P_1  \right]\tilde{P}_{2}^{\perp}
\end{eqnarray*}
and hence
$$\Vert Q^{2}-Q \Vert < \epsilon,$$ 
by hypothesis. The same holds for $(Q{'})^{2}-Q{'}$. According to Lemma \ref{lem1}, there is an orthogonal projection $\hat{Q}$ commuting with $Q$ and an orthogonal projection $\hat{Q'}$ commuting with $Q'$  such that $\Vert \hat{Q}-Q \Vert  < 2 \epsilon $ and   $\Vert \hat{Q'}-Q' \Vert < 2 \epsilon $. 
We define
$$\tilde{P}_{1}:=\hat{Q} \tilde{P}_{2}^{\perp} + \hat{Q'} \tilde{P}_{2}$$
which is easily seen to be a projection commuting with $\tilde{P}_2$; (as noticed at the end of the proof of Lemma \ref{lem1}). Moreover,
\begin{eqnarray*}
\vert \vert P_{1}-\tilde{P}_{1} \vert \vert &\leq&  \vert \vert P_{1}-Q -Q'\vert \vert +\vert \vert Q -\hat{Q}\tilde{P}_{2}^{\perp}  \vert \vert +\vert \vert Q' -\hat{Q'} \tilde{P}_{2} \vert \vert\\
&<&\vert \vert  \tilde{P}_{2} P_1  \tilde{P}_{2}^{\perp}+ \tilde{P}_{2}^{\perp} P_1 \tilde{P}_{2}\vert \vert + 4 \epsilon\\
&<& 6 \epsilon
\end{eqnarray*}
using that $ \tilde{P}_{2} P_1  \tilde{P}_{2}^{\perp}+ \tilde{P}_{2}^{\perp} P_1 \tilde{P}_{2}=\tilde{P}_{2} \left[P_1 , \tilde{P}_{2}^{\perp} \right]+ \left[ \tilde{P}_{2}^{\perp}, P_1\right] \tilde{P}_{2}$. We have thus constructed two commuting projections $\tilde{P}_1$ and $\tilde{P}_2$ with the properties claimed to hold in Lemma \ref{lem2}, with $C_{2}=6$. It follows that $\tilde{P}_1\tilde{P}_2\tilde{P}_1$ is an orthogonal projection, as well.
 
Let $n\in\mathbb{N}$ and let $j>1$ be an integer smaller than n. We suppose that we have already constructed projections $ \tilde{P}_{j},...,\tilde{P}_n$, starting from $P_n$, 
such that $\left[\tilde{P}_{k},(\Pi_{i=k+1}^{n}\tilde{P}_i)( \Pi_{i=n}^{k+1} \tilde{P}_i) \right]=0$, for $k=j,...,n-1$, and $\vert \vert P_{k}-\tilde{P}_{k} \vert \vert <C_{n-k+1} \epsilon$, for $k=j,...,n$,  with $\epsilon< \frac{1}{4 (4\sum_{i=j}^{n} C_{n-i+1}+1)}$. We proceed to construct a projection $\tilde{P}_{j-1}$ close to $P_{j-1}$ and commuting with the operator $\tilde{H}_{j-1}:=(\Pi_{i=j}^{n}\tilde{P}_i)(\Pi_{i=n}^{j} \tilde{P}_i)$, using the ideas used above to prove the lemma in the special case where $n=2$. Then $\tilde{H}_{j-1}$ takes the role of $P_2$ and $P_{j-1}$ the role of $P_1$ in the argument to prove the special case where $n=2$. Indeed, define $Q_{j-1}=\tilde{H}_{j-1} P_{j-1} \tilde{H}_{j-1}$, $Q'_{j-1}=\tilde{H}_{j-1}^{\perp} P_{j-1} \tilde{H}_{j-1}^{\perp}$.
\begin{equation}
Q^{2}_{j-1}-Q_{j-1}=\tilde{H}_{j-1}^{\perp} P_{j-1} \left[ \tilde{H}_{j-1}^{\perp}, P_{j-1}\right]\tilde{H}_{j-1}^{\perp}
\end{equation}
and thus,
\begin{eqnarray*}
\vert \vert Q^{2}_{j-1}-Q_{j-1}\vert \vert& \leq& \vert \vert  \left[ \tilde{H}_{j-1}^{\perp}, P_{j-1}\right]\vert \vert \leq \vert \vert   \left[ \tilde{H}_{j-1} -H_{j-1}, P_{j-1}\right] \vert \vert  +\vert \vert  \left[ H_{j-1}, P_{j-1}\right] \vert \vert \\
&<& 2 \vert \vert  \tilde{H}_{j-1} -H_{j-1} \vert \vert + \epsilon\\
&<& 4 \epsilon \sum_{i=j}^{n}C_{n-i+1}  + \epsilon
\end{eqnarray*}
where the last inequality follows from the use of $\tilde{P}_{k}=(\tilde{P}_{k}-P_{k})+ P_{k}$ in the expression of $\tilde{H}_{j-1}$. The same holds for $Q'_{j-1}$. Since we have assumed that  $\epsilon< \frac{1}{4 (4\sum_{i=j}^{n} C_{n-i+1}+1)}$, we can find $\hat{Q}_{j-1}$ and $\hat{Q'}_{j-1}$ obeying the conditions of lemma \ref{lem1}, i..e. $\hat{Q}_{j-1}$ and $\hat{Q'}_{j-1}$ are orthogonal projections commuting with $Q_{j-1}$, $Q'_{j-1}$, respectively, with $\Vert\hat{Q}_{j-1} -Q_{j-1} \Vert <  8 \epsilon (\sum_{i=j}^{n}C_{n-i+1})  + 2 \epsilon$. Defining
$$\tilde{P}_{j-1}:=\hat{Q}_{j-1}   \tilde{H}_{j-1}^{\perp}+ \hat{Q'}_{j-1}   \tilde{H}_{j-1}$$
one finds that 
$$\vert \vert \tilde{P}_{j-1}-P_{j-1}\vert \vert <6 \epsilon \left( 4  \sum_{i=j}^{n}C_{n-i+1}  +1 \right) $$
i.e., $C_{n-j+2}=6\left( 4  \sum_{i=j}^{n}C_{n-i+1}  +1 \right)$.
\end{proof}

\bibliographystyle{plain}
\nocite{*}
\bibliography{QM2-1}

\end{document}